\documentclass[aps,prl,amsmath,floats,floatfix,twocolumn,
  superscriptaddress,nofootinbib,showpacs]{revtex4-2}
\pdfoutput=1 
\usepackage{multirow}
\usepackage{amsmath}    
\usepackage{lipsum}
\usepackage{amssymb}
\usepackage{amsfonts}
\usepackage{amsthm}
\usepackage{braket}
\usepackage{bm}
\usepackage{multirow}
\usepackage{caption}
\usepackage{color} 
\usepackage{array}
\usepackage{caption}
\usepackage{soul}
\usepackage{enumitem}
\usepackage{multirow}
\usepackage{dcolumn}
\usepackage[dvipsnames]{xcolor}
\usepackage{epsfig}
\usepackage{graphicx}   
\usepackage{graphics}
\usepackage[latin1]{inputenc}
\usepackage{latexsym}
\usepackage{mathrsfs}
\usepackage{mathtools}
\usepackage{orcidlink}
\usepackage{rotating}
\usepackage{setspace}
\usepackage{siunitx}
\usepackage{subfigure}  
\usepackage{url}
\usepackage{verbatim}	
\usepackage{hyperref}   
\usepackage{float}
\hypersetup{
    colorlinks=true,
    urlcolor=blue,
    citecolor=blue,
    linkcolor=red
}
\definecolor{IsoGreen}{RGB}{0,120,70}
\definecolor{IsoRed}{RGB}{178,34,34}
\definecolor{HeaderGray}{RGB}{242,242,242}
\definecolor{SoftGreen}{RGB}{246,252,248}
\definecolor{SoftRed}{RGB}{255,247,247}

\usepackage{amsmath,amssymb,bm,mathtools}
\usepackage[T1]{fontenc}
\usepackage{lmodern}

\usepackage{amsthm}

\usepackage{xspace} 
\usepackage{cleveref}

\usepackage[toc,page]{appendix}
\usepackage{array}
\newtheorem{theoremA}{Theorem}
\newtheorem{theoremB}{Theorem}

\usepackage{yfonts}
\usepackage{tikz}
\usetikzlibrary{shapes.geometric, arrows}
\tikzstyle{null} = [rectangle, 
minimum width=0.5cm, 
minimum height=0.5cm, 
text centered, 
fill=white!30]
\tikzstyle{nullblue} = [rectangle, 
minimum width=0.5cm, 
minimum height=0.5cm, 
text centered, 
text = blue,
fill=white!30]
\tikzstyle{nullor} = [rectangle, 
minimum width=0.5cm, 
minimum height=0.5cm, 
text centered, 
text = orange,
fill=white!30]
\tikzstyle{nullred} = [rectangle, 
minimum width=0.5cm, 
minimum height=0.5cm, 
text centered, 
text = red,
fill=white!30]
\tikzstyle{nullsmall} = [rectangle, 
minimum width=0.1cm, 
minimum height=0.1cm]
\tikzstyle{bracket} = [rectangle, 
minimum width=0.3cm, 
minimum height=3cm]
\tikzstyle{ColinStyle} = [rectangle, 
minimum width=1.2cm, 
minimum height=1cm, 
text centered, 
draw=black, 
line width = 0.7pt,
fill=white!30]
\tikzstyle{arrow} = [line width = 1.3 pt,->,>=stealth]
\tikzstyle{line} = [line width = 0.9pt,-,>=stealth]
\tikzstyle{bluethickline} = [line width = 2pt,-,>=stealth, color = blue]
\tikzstyle{orthickline} = [line width = 2pt,-,>=stealth, color = orange]
\tikzstyle{redthickline} = [line width = 2pt,-,>=stealth, color = red]
\makeatletter
\newcommand*{\rom}[1]{\expandafter\@slowromancap\romannumeral #1@}

\newtheorem{lemmaA}{Lemma}

\newtheorem{lemmaB}{Lemma}

\definecolor{Purples}{HTML}{4B2E83}

\newtheorem{definition}{Definition}

\makeatother
\allowdisplaybreaks

\newcommand{\CAL}{Theoretical Astrophysics 350-17, California Institute of Technology, Pasadena, CA 91125, USA}
\newcommand{\ILL}{Illinois Center for Advanced Studies of the Universe \& Department of Physics, \\
University of Illinois Urbana-Champaign, Urbana, Illinois 61801, USA}
\newcommand{\AEI}{Max Planck Institute for Gravitational Physics (Albert Einstein Institute), D-14476 Potsdam, Germany}

\begin{document}
    \title{Chiral symmetry and black hole isospectrality}
    \author{Colin Weller\,\orcidlink{0000-0001-5173-5638}}
    \thanks{Contact author: \href{cweller@caltech.edu}{cweller@caltech.edu}}
    \affiliation{\CAL}
    
    \author{Dongjun Li\,\orcidlink{0000-0002-1962-680X}}
    \thanks{Contact author: \href{dongjun@illinois.edu}{dongjun@illinois.edu}}
    \affiliation{\ILL}

    \author{Pratik Wagle\,\orcidlink{0000-0003-3700-4227}}\affiliation{\AEI}

    \author{Andrew Laeuger\, \orcidlink{0000-0002-8212-6496}} \affiliation{\CAL}
    
    \author{Yanbei Chen\,\orcidlink{0000-0002-9730-9463}}
    \affiliation{\CAL}
    
    \author{Nicol\'{a}s Yunes\,\orcidlink{0000-0001-6147-1736}}
    \affiliation{\ILL}
    
    \date{\today}
    \begin{abstract}
We prove that black-hole isospectrality between linearly independent solutions follows whenever radiative metric and matter perturbations are reconstructed from complex master variables obeying a closed, complex-linear system with complex-linear boundary conditions. If a phase rotation maps one parity sector to the other, then the even- and odd-parity QNM spectra coincide, and the perturbations are parity isospectral. A broad class of chiral-aligned theories satisfies these criteria. As an application, we demonstrate that subextremal Kerr--de Sitter black holes are parity isospectral.
    \end{abstract}
    \maketitle

\noindent\textit{Introduction}---At the end of binary black hole coalescence, gravitational waves enter the ringdown phase, which is dominated by a superposition of quasinormal modes (QNMs) at late times~ \cite{Vishveshwara:1970zz,Press:1971wr, Davis:1971kk}. 
Each mode has a characteristic complex-valued frequency, which can be identified as a pole in the retarded Green's function of  Einstein's linearized equations in the frequency domain. 
Since stationary, asymptotically-flat, vacuum black holes within general relativity (GR) belong solely to the Kerr family \cite{Israel:1967wq, Carter:1971zc, Price:1971fb, Robinson:1975bv}, the QNM spectrum depends only on the mass and angular momentum. This underlies tests of GR with black hole spectroscopy \cite{Cardoso:2016rao,Berti:2016lat,Isi:2019aib,Isi:2020tac,Mitman:2025hgy,Berti:2025hly,Franchini:2025csk}, in which the measured QNM spectrum is used to constrain the properties of the remnant black hole and its environment. 

For nonrotating, Schwarzschild black holes in GR, gravitational perturbations are usually computed using the Regge--Wheeler--Zerilli formalism \cite{Regge:1957td, Zerilli:1970se,Zerilli:1970wzz, Moncrief:1974am}. In the proper gauge, the Einstein equations decouple into two second-order, one-dimensional, Schr\"odinger-type wave equations for even- and odd-parity perturbations, respectively \cite{Gerlach:1979rw, Martel:2005ir}. While the effective radial potentials, angular emission patterns, and excitation factors for the two parity sectors are distinct, Chandrasekhar and Detweiler showed that their spectra coincide \cite{Chandrasekhar:1975zza, Chandrasekhar_1983}, a
degeneracy known as \textit{parity isospectrality}. This property has been established for nonrotating black holes by relating the two parity potentials and the master functions through a Chandrasekhar transformation \cite{Chandrasekhar_1983}, a special case of a Darboux \cite{Darboux1882, Glampedakis:2017rar, trachanas2009exactly} or intertwining \cite{Anderson:1991kx} relation, which also has applications in supersymmetric quantum mechanics \cite{Golfand:1971iw, Witten:1981nf, Cooper:1982dm, Cooper:2001weo, FernandezC:2018cdo} and more generically in studies of isospectral Hamiltonians \cite{Pursey:1986kk}. Viewed more broadly, isospectrality can also be understood more generally as a frequency
degeneracy between linearly independent solutions, without reference to the parity of those solutions. 
We therefore use \textit{isospectrality} for this more general degeneracy between linearly independent modes, 
and reserve \textit{parity isospectrality} for the standard definition
that the even- and odd-parity spectra coincide.

Rotating black holes require a different formalism, since the loss of spherical symmetry prevents the two parity sectors from decoupling. To address this difficulty, Teukolsky formulated a curvature-based approach using the spinor calculus of Newman and Penrose \cite{Newman:1961qr,Penrose:1960eq}. For Ricci-flat, Petrov type D spacetimes, Teukolsky found two wave equations governing the radiative Weyl scalars $\Psi_0$ and $\Psi_4$, respectively \cite{Teukolsky:1972my, Teukolsky:1973ha, Press:1973zz, Teukolsky:1974yv}. These two scalars provide a gauge-invariant description of gravitational radiation from a perturbed black hole, from which one can further construct definite-parity modes \cite{Chrzanowski:1975wv, Nichols:2012jn}. One can show semi-analytically~\cite{Pani:2013ija, Pani:2013wsa, Franchini:2022axs} that vacuum perturbations of slowly-rotating Kerr black holes are parity isospectral. By exploiting the conjugate-parity symmetry of Teukolsky equations, Ref.~\cite{Li:2023ulk} further proved parity isospectrality for perturbations of Kerr black holes with arbitrary spin~\cite{Li:2023ulk}. This property is also supported by numerical investigations of the Kerr spectrum \cite{ Detweiler:1980gk, Leaver:1985ax}.

More recently, the Teukolsky formalism has been extended to enable fully-relativistic computations of the leading-order radiative corrections to gravitational waves in effective-field-theory extensions of GR \cite{Li:2022pcy, Hussain:2022ins, Cano:2023tmv, Cano:2024jkd, Wagle:2023fwl, Li:2025fci, Li:2025ffh}. This extension has enabled studies of isospectrality for rapidly-rotating black holes beyond GR \cite{Li:2023ulk}, demonstrating that isospectrality is an easy symmetry to lose. For instance, isospectrality is not present in theories with higher-derivative corrections \cite{Cardoso:2018ptl,Cardoso:2019mqo,Cano:2023jbk, Cano:2024ezp}, with nonminimally coupled scalar fields \cite{Li:2023ulk, Li:2025fci, Chung:2024ira}, with extra dimensions \cite{Cardoso:2002pa, Kodama:2003jz}, in anti-de Sitter spacetimes in GR \cite{Cardoso:2001bb, Berti:2003ud}, in beyond-Kerr geometries in GR \cite{Weller:2024qvo, Wu:2025obg}, in other exotic compact objects, such as gravastars \cite{Saketh:2024ojw, Pani:2009ss}, and in matter environments, such as in thin shells \cite{Laeuger:2025zgb} and for relativistic stars \cite{Thorne:1968zz, Kokkotas:1999bd}.

The above examples motivate a more fundamental question: what structures or
symmetries are sufficient to guarantee isospectrality and, more narrowly,
parity isospectrality?
Structural answers have recently emerged along two complementary
directions, one restricted by the background geometry and the other by
the multipolar regime. For static, spherically symmetric backgrounds,
Refs.~\cite{Mukkamala:2024dxf,Pereniguez:2026avs} have proposed that
parity isospectrality is organized by closed equations for self-dual curvature
variables, whose real and imaginary parts encode the even- and odd-parity
sectors, respectively. In the eikonal limit, where the multipole number
$\ell$ is much greater than unity, parity isospectrality has been conjectured to
be equivalent to non-birefringent gravitational-wave propagation
\cite{Silva:2019scu,Cano:2024wzo} and, more recently, to gravitational
electric--magnetic duality at the light ring \cite{Bah:2026aia}.
Together, these results point toward a common complex, or duality,
structure underlying isospectrality. Observable ringdown signals,
however, are typically dominated by low radiative multipoles of rapidly
rotating black holes. The structure underlying isospectrality in this
regime, with or without matter or a cosmological constant, has yet to be
identified.

In this Letter, we identify such a structure and formulate it as a sufficient
criterion in Theorems~\ref{thm: MainTheorem} and \ref{thm: TheoremB}. 
Our results require no expansion in the spin, the inverse multipole number, or the small couplings of an
effective field theory. We further show that \textit{chiral-aligned} theories form a broad class that satisfy 
the conditions of Theorem~\ref{thm: MainTheorem}. Because the conditions of the theorem are sufficient rather than necessary, 
a failure to satisfy them does not imply that isospectrality is broken. With this in mind, we assess the status of isospectrality 
within and beyond GR for a variety of black holes, summarized in Table~\ref{tab:isospectrality_examples_reorganized}. In particular, we establish for the first time that linear perturbations of the Kerr--de~Sitter black hole are parity isospectral at arbitrary subextremal spin. This contrasts with Kerr anti--de~Sitter black holes, for which parity 
isospectrality is broken~\cite{Cardoso:2001bb,Berti:2003ud}.


\vspace{0.2cm}
\noindent \textit{Formalism and isospectrality theorems}---Let $S$ be a real-valued action that is a functional of a metric
$g_{\mu\nu}$ (where Greek letters stand for spacetime indices), assumed to have a mostly positive, Lorentzian signature,
and a collection of additional scalar, vector and tensor fields $\varphi_{\mathcal{A}}$ of the form 
\begin{align}\label{eq: Action}
    S[g_{\mu \nu},\varphi_{\mathcal{A}}]=\int d^4 x \sqrt{|g|}\left(\kappa_g R+  
    \mathcal{L}\left(\varphi_{\mathcal{A}}, g_{\mu \nu}, \nabla \varphi_{\mathcal{A}},\dots\right)\right)\,,
\end{align} 
where $\kappa_g = 1/(16\pi)$ in geometric ($G=1=c$) units, and $\mathcal{L}(\varphi_{\mathcal{A}},g_{\mu \nu},\nabla \varphi_{\mathcal{A}},\dots)$ is a generic local Lagrangian density invariant under both coordinate diffeomorphisms and Lorentz transformations. Varying the action with respect to the metric and the matter fields yields 
\begin{subequations}\label{eq: EquationsOfMotion}
\begin{align}
    & G_{\mu \nu}=\frac{1}{2 \kappa_g} T_{\mu \nu}^{\text{eff}}\,,\;
    && \quad T_{\mu \nu}^{\text{eff}} \equiv \frac{-2}{\sqrt{|g|}} \frac{\delta(\sqrt{|g|} \mathcal{L})}{\delta g^{\mu \nu}}\,, \\
    & \mathcal{M}[\varphi_{\mathcal{A}}, g_{\mu\nu}]= 0\,,\;
    && \mathcal{M}[\varphi_{\mathcal{A}}, g_{\mu\nu}] \equiv \frac{\delta S}{\delta \varphi_{\mathcal{A}}}\,.
    \label{eq: MatterEquOfMotion}
\end{align}
\end{subequations}
where $G_{\mu \nu}$ is the Einstein tensor, $T_{\mu \nu}^{\text {eff }}$ is the effective stress-energy tensor produced from the variation of $\mathcal{L}(\varphi_{\mathcal{A}},g_{\mu \nu},\nabla \varphi_{\mathcal{A}}...)$, and $\mathcal{M}[\varphi_{\mathcal{A}}, g_{\mu \nu}]$ denotes the system of equations in the matter sector. Let the pair $\left(g_{\mu \nu}, \varphi_{\mathcal{A}}\right)$ be a stationary background solution of the equations of motion in Eq.~\eqref{eq: EquationsOfMotion}, and let $(g_{\mu \nu } + \epsilon h^{(1)}_{\mu \nu}, \varphi_{\mathcal{A}} + \epsilon \varphi^{(1)}_{\mathcal{A}})$ be a linearized solution, which jointly solve Eq.~\eqref{eq: EquationsOfMotion} to $\mathcal{O}(\epsilon^1)$ with $\epsilon\ll 1$ a bookkeeping perturbation parameter. First-order quantities at $\mathcal{O}(\epsilon^1)$ will receive a superscript i.e. $\mathcal{Q}^{(1)}$, while background quantities at $\mathcal{O}(\epsilon^0)$ will have no superscript.

\begin{theoremA} \label{thm: MainTheorem}
Suppose that, for every linear solution $(g_{\mu \nu} + \epsilon h^{(1)}_{\mu \nu},\varphi_{\mathcal{A}} + \epsilon \varphi^{(1)}_{\mathcal{A}})$ of Eq.~\eqref{eq: EquationsOfMotion}, the radiative components can be entirely reconstructed, up to a coordinate-gauge transformation and time-translation, from a set of complex-valued master variables $\{f^{(1)}_i\}$. Specifically, there are reconstruction maps ${}_{h}C_{\mu \nu}^i$ and ${}_{\varphi_{\mathcal{A}}}C^i$ such that\footnote{The reconstruction in  Eq.~\eqref{eq: ReconstructMetricandMatter} is performed in the time domain, which is possible since our background spacetime is stationary.} 
\begin{subequations}\label{eq: ReconstructMetricandMatter}
\begin{align}
    h^{(1)}_{\mu \nu} &= {}_{h}C_{\mu \nu}^i f^{(1)}_i + {}_{h}\bar{C}_{\mu \nu}^i \bar{f}^{(1)}_i\,,\\
     \varphi^{(1)}_{\mathcal{A}} &= {}_{\varphi_{\mathcal{A}}}C^i f^{(1)}_i + {}_{\varphi_{\mathcal{A}}}\bar{C}^i \bar{f}^{(1)}_i\,,
\end{align}
\end{subequations}
where we sum over the internal index $i$. Furthermore, we assume that the reconstruction map of Eq.~\eqref{eq: ReconstructMetricandMatter} is a linear isomorphism from the complex master variables to the real-valued radiative solutions: every radiative solution is linearly reconstructed by a unique set of master variables $\{f^{(1)}_i\}$.
If the linearized equations of motion reduce to a complex-linear\footnote{Here, we take complex-linear to mean linear over the complex numbers. For instance, a system of equations $L f_i^{(1)} = 0$ is complex-linear if $L(c f^{(1)}_i) = c\, L f^{(1)}_i$ for any $c \in \mathbb{C}$.} system of differential equations in $\{f^{(1)}_i\}$, and the imposed boundary conditions are  complex-linear and define a QNM, then the linear perturbations are isospectral. 
\end{theoremA}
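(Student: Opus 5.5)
The argument rests on one observation: multiplying the master variables by a complex phase produces a new QNM solution at the same frequency, and the reconstruction isomorphism guarantees that this new solution is physically distinct. Fix a QNM, i.e. a solution $\{f^{(1)}_i\}$ of the complex-linear system satisfying the complex-linear QNM boundary conditions, with time dependence $e^{-i\omega t}$. By stationarity we may work with such a frequency-domain mode. Via Eq.~\eqref{eq: ReconstructMetricandMatter} it yields a real perturbation $(h^{(1)}_{\mu\nu},\varphi^{(1)}_{\mathcal{A}})$.

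\textbf{Step 1 (phase rotation is a symmetry).} For any $c\in\mathbb{C}$, the system satisfies $L(c f^{(1)}_i)=c\,Lf^{(1)}_i=0$. Complex-linear boundary conditions are likewise preserved, so $\{c f^{(1)}_i\}$ is again a QNM. Its time dependence is still $e^{-i\omega t}$, so it has the same complex frequency $\omega$.

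\textbf{Step 2 (the rotated solution is linearly independent).} Take $c=i$. The reconstructed field is
\begin{equation*}
\tilde h^{(1)}_{\mu\nu}= i\,{}_{h}C^i_{\mu\nu}f^{(1)}_i - i\,{}_{h}\bar C^i_{\mu\nu}\bar f^{(1)}_i ,
\end{equation*}
and similarly for $\tilde\varphi^{(1)}_{\mathcal{A}}$. Suppose, for contradiction, that $\tilde h^{(1)}=\lambda h^{(1)}$ and $\tilde\varphi^{(1)}=\lambda\varphi^{(1)}$ for some real $\lambda$, up to gauge and time translation. The reconstruction map is an isomorphism that is real-linear from complex master variables onto real radiative solutions. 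Uniqueness then forces $i f^{(1)}_i=\lambda f^{(1)}_i$ for all $i$. Since $f^{(1)}\neq0$, this gives $i=\lambda\in\mathbb{R}$, a contradiction. Hence the two real perturbations are linearly independent over $\mathbb{R}$ modulo gauge, yet share the frequency $\omega$.

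\textbf{Step 3 (the spectrum is degenerate).} Every QNM frequency is therefore at least doubly degenerate in the space of real radiative solutions. This frequency degeneracy between linearly independent modes is the definition of isospectrality adopted in the introduction.

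\textbf{Main obstacle.} The delicate point is the interplay between real and complex structure. A real solution with complex frequency must be understood as real and imaginary parts, or as a pair $\omega,-\bar\omega$. One must check that the map $f\mapsto if$ commutes with that decomposition and does not merely reproduce the same physical mode shifted in time. For a single mode, $e^{-i\omega t}$ with a time shift gives $e^{-i\omega(t+t_0)}$, and multiplying by $i$ is equivalent to a complex time shift only when $\omega$ is real. For complex $\omega$, a real time translation changes the amplitude factor $e^{-i\omega t_0}$ by a modulus other than one, which is not a pure phase. I would handle this by explicitly quotienting by gauge and real time translations, as allowed in the hypothesis. I would then verify, using uniqueness of the master variables, that no real $t_0$ and real $\lambda$ satisfy $\lambda e^{-i\omega t_0}=i$ for every component simultaneously. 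Failing that, I would fall back on counting real dimensions of the solution space at fixed $\omega$ rather than mode by mode.
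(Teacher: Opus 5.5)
Read with exact proportionality $\tilde h^{(1)}=\lambda h^{(1)}$ (modulo gauge only), your Steps 1--3 are the paper's proof. The paper rotates by a general $e^{i\theta}$ with $\theta\in(0,\pi)$ and notes that $a\{f_i\}+b\,e^{i\theta}\{f_i\}$ reconstructs $a h^{(1)}+b\tilde h^{(1)}$. It then uses injectivity of this real-linear map to turn $\tilde h^{(1)}=c\,h^{(1)}$ into $e^{i\theta}f_i=c f_i$, which is impossible for real $c$. You do the same with $\theta=\pi/2$ and a single-frequency mode.

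What does not work is the stronger statement you try to build in: independence modulo real time translations. That statement is false for a single mode.

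Once you allow $t_0$ in Step 2, injectivity only yields $i f_i(t)=\lambda f_i(t+t_0)=\lambda e^{-i\omega t_0}f_i(t)$. Write $\omega=\omega_R+i\omega_I$ with $\omega_R\neq 0$ and choose $t_0=-\pi/(2\omega_R)$. Then $e^{-i\omega t_0}=i\,e^{\omega_I t_0}$, so $\lambda=e^{-\omega_I t_0}>0$ solves the relation for every component simultaneously. The modulus you were worried about is simply absorbed into the real $\lambda$.

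Since the reconstruction operators are $t$-independent, this gives $\tilde h^{(1)}(t)=\lambda\,h^{(1)}(t+t_0)$. The phase-rotated mode is just the original mode shifted in time and rescaled, exactly as $e^{\omega_I t}\sin\omega_R t$ is a shifted, rescaled $e^{\omega_I t}\cos\omega_R t$. Three consequences follow:
\begin{itemize}
\item your Step 2 contradiction does not go through once time shifts are allowed;
\item the verification proposed in your last paragraph cannot succeed;
\item counting real dimensions at fixed $\omega$ only recovers the same two-dimensional span.
\end{itemize}

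The fix is to drop that quotient. The theorem, and the paper's proof, claim only $\mathbb{R}$-linear independence of the two real solutions, and your argument does establish that. A physically sharper degeneracy needs the extra input of Theorem~\ref{thm: TheoremB}. There, the master variables of a real definite-parity solution must contain both $e^{-i\omega t}$ and $e^{i\bar\omega t}$ pieces, so multiplication by $i$ is no longer equivalent to a time shift.
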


\begin{proof}
Let $\{f_i\}$ be a nonzero set of master variables satisfying the hypotheses of Theorem \ref{thm: MainTheorem}. Since the equations of motion and boundary conditions for $\{f_i\}$ are complex-linear and each master variable is complex-valued, the set $e^{i \theta} \{f_i\}$ for a constant phase $\theta$ in $(0,\pi)$ also forms a set of master variables satisfying the hypotheses of Theorem \ref{thm: MainTheorem}. 

Suppose $\{f_i\}$ and $e^{i \theta}\{f_i\}$ reconstruct the linear solutions
$(h^{(1)}_{\mu \nu}, \varphi^{(1)}_{\mathcal{A}})$ and
$(\tilde{h}^{(1)}_{\mu \nu}, \tilde{\varphi}^{(1)}_{\mathcal{A}})$, respectively,
via Eq.~\eqref{eq: ReconstructMetricandMatter}. We now show that these two
real-valued solutions are linearly independent. First, the
reconstruction in Eq.~\eqref{eq: ReconstructMetricandMatter} is real-linear: for real constants $a$ and
$b$, the set $\{g_i\} \equiv a \{f_i\} + b\, e^{i \theta}\{f_i\}$ is another set of complex master variables which reconstructs
the real-valued fields $(a \, h^{(1)}_{\mu \nu} + b \, \tilde{h}^{(1)}_{\mu \nu},\;
a \, \varphi^{(1)}_{\mathcal{A}} + b \, \tilde{\varphi}^{(1)}_{\mathcal{A}})$. Now
suppose the two solutions were proportional to each other, i.e.
$(\tilde{h}^{(1)}_{\mu \nu}, \tilde{\varphi}^{(1)}_{\mathcal{A}})
= (c\, h^{(1)}_{\mu \nu}, c\, \varphi^{(1)}_{\mathcal{A}})$ for some real
constant $c$. Since the reconstruction is real-linear and there is an
isomorphism between the linearized solutions and the master variables, the
master variables would then satisfy $e^{i \theta}\{f_i\} = c \{f_i\}$, which
requires $e^{i\theta}$ to be real. This is a contradiction for
$\theta$ in $(0,\pi)$, and therefore the two solutions are linearly independent.

Both solutions $(h^{(1)}_{\mu \nu}, \varphi^{(1)}_{\mathcal{A}})$ and
$(\tilde{h}^{(1)}_{\mu \nu}, \tilde{\varphi}^{(1)}_{\mathcal{A}})$ are
reconstructed from master variables that satisfy the same equations of motion
and the same boundary conditions, and therefore share the same QNM spectrum.
Thus, every QNM solution admits a linearly independent counterpart that is
itself a QNM solution with the same complex frequency. $\square$
\end{proof} 

\vspace{0.1cm}
Now we formulate a special case of Theorem \ref{thm: MainTheorem} where the phase rotation of the master variables corresponds to a combination of solutions from opposite parity sectors, recovering the standard definition of parity isospectrality. We define parity following Ref.~\cite{Li:2023ulk} where we consider the background spacetime and matter configuration in the Boyer--Lindquist coordinates. Then, for scalar functions $f(t,r,\theta,\phi)$ in Boyer--Lindquist coordinates, the parity operator $\hat{P}$ acts as $\hat{P}f(t,r,\theta,\phi) = f(t,r,\pi - \theta, \phi + \pi)$. We also fix our frame to be the canonical Boyer-Lindquist frame, which is aligned with the angular momentum of the remnant black hole. For linear perturbations, definite-parity modes must satisfy $\hat{P}h^{(1)} =\pm (-1)^{\ell}h^{(1)}$ (with $\pm$ referring to even/odd modes respectively), where $h^{(1)}$ is a generic first-order quantity and $\ell$ is the angular momentum number, after decomposing into angular harmonics \cite{Regge:1957td,Zerilli:1970se,Moncrief:1974am, Nichols:2012jn,Li:2023ulk}.

\begin{theoremB}\label{thm: TheoremB}
Under the hypotheses of Theorem \ref{thm: MainTheorem}, suppose further that,
whenever a nonzero set of master variables $\{f_i\}$ reconstructs a
definite-parity solution, there exists a phase-rotated set of master variables
$e^{i\theta}\{f_i\}$, for some real-valued constant $\theta$ in the open set $(0,\pi)$, which reconstructs an opposite definite-parity
solution. Then, the linear perturbations are
parity isospectral.
\end{theoremB}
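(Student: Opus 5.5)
Our plan is to reduce Theorem~\ref{thm: TheoremB} to the argument already given for Theorem~\ref{thm: MainTheorem}, adding one step that identifies the parity of the rotated solution.

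First, fix a QNM solution of definite parity, say even, with complex frequency $\omega$. Decompose it into angular harmonics. Because the background is stationary and the reconstruction of Eq.~\eqref{eq: ReconstructMetricandMatter} is real-linear, any QNM solution at frequency $\omega$ splits into even and odd pieces, $(1\pm\hat P(-1)^\ell)/2$ applied harmonic by harmonic. Each piece is again a solution at the same frequency. This holds as long as $\hat P$ maps solutions to solutions, which we take from the parity invariance of the Boyer--Lindquist background as in Ref.~\cite{Li:2023ulk}. So it is enough to show that each definite-parity QNM frequency also occurs in the opposite sector.

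Second, by the isomorphism assumed in Theorem~\ref{thm: MainTheorem}, this solution comes from a unique nonzero set $\{f_i\}$. By hypothesis there is some $\theta\in(0,\pi)$ for which $e^{i\theta}\{f_i\}$ reconstructs an odd-parity solution. Because the equations of motion and boundary conditions are complex-linear, $e^{i\theta}\{f_i\}$ satisfies the same equations and the same QNM boundary conditions. The phase is constant, so the time dependence $e^{-i\omega t}$ is unchanged and the frequency is still $\omega$. The argument in the proof of Theorem~\ref{thm: MainTheorem} then shows that the rotated solution is nonzero and linearly independent of the original. Nonzero follows from injectivity of the isomorphism together with $e^{i\theta}\{f_i\}\neq0$. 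Hence $\omega$ lies in the odd spectrum.

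Third, the same argument run starting from an odd solution gives the reverse inclusion, so the two spectra coincide. This includes both $\omega$ and the mirror frequency $-\bar\omega$, since the reconstructed fields are real.

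The main obstacle is a subtlety in the frequency domain. The reconstruction contains $\bar f_i$, so a real solution at frequency $\omega$ involves master variables oscillating at both $\omega$ and $-\bar\omega$, and one has to check that the phase rotation does not mix the two. We would handle this by working with the complex time-domain master variables directly, as the footnote to Eq.~\eqref{eq: ReconstructMetricandMatter} permits. Multiplying by a constant phase commutes with $\partial_t$, so the frequency content of $\{f_i\}$ is preserved exactly. The QNM boundary conditions are complex-linear, so they are preserved as well.
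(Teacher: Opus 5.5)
Your proposal is correct and takes essentially the same route as the paper. You phase-rotate the master variables, use complex-linearity of the equations and boundary conditions so the rotated set is a QNM solution at the same frequency, and invoke the hypothesis to land in the opposite parity sector; your injectivity and time-domain remarks make explicit what the paper's proof leaves implicit. Your opening step, splitting a general QNM solution into parity pieces, is not needed, since parity isospectrality is defined directly on the definite-parity spectra. It also relies on a parity-invariance assumption that is not among the hypotheses, so you can drop it.
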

\begin{proof}
    We follow the proof of Theorem \ref{thm: MainTheorem}. Let $\{f_i\}$ reconstruct a definite-parity solution $(h^{(1)}_{\mu \nu}, \varphi^{(1)}_{\mathcal{A}})$.
    By complex-linearity of the
equations of motion and of the boundary conditions, $e^{i\theta}\{f_i\}$ is
again a set of complex master variables, and by
Eq.~\eqref{eq: ReconstructMetricandMatter}, it reconstructs a solution
$(\tilde{h}^{(1)}_{\mu\nu}, \tilde{\varphi}^{(1)}_{\mathcal{A}})$ at the same frequency as $(h^{(1)}_{\mu \nu}, \varphi^{(1)}_{\mathcal{A}})$. By hypothesis,  $(\tilde{h}^{(1)}_{\mu\nu}, \tilde{\varphi}^{(1)}_{\mathcal{A}})$ must be the opposite parity of $(h^{(1)}_{\mu \nu}, \varphi^{(1)}_{\mathcal{A}})$. Hence every QNM frequency
of one parity sector also belongs to the spectrum of the other, establishing
parity isospectrality. $\square$
\end{proof}

\vspace{0.2cm}
\noindent \textit{Chiral-aligned theories}---In Theorem \ref{thm: MainTheorem} we established a sufficient, but not necessary, criterion for two linearly independent solutions to have the same QNM frequency. Parity isospectrality is then recovered as a limiting case in Theorem \ref{thm: TheoremB}, where a phase rotation of the definite-parity master variables corresponds to a parity transformation.  One broad class of theories, which we refer to as chiral-aligned, manifestly satisfies all the conditions in Theorem~\ref{thm: MainTheorem} as long as the boundary conditions for the perturbations remain complex-linear, specifically:
\begin{definition}[Chiral-aligned theory]\label{def: ChiralAligned}
Consider a theory of the form of Eq.~\eqref{eq: Action}, together with a background
solution of its field equations [Eq.~\eqref{eq: EquationsOfMotion}], which is compatible
with a null tetrad $\{l^{\mu},n^{\mu},m^{\mu},\bar{m}^{\mu}\}$, 
i.e.~$g_{\mu \nu} = -2 l_{(\mu} n_{\nu)} + 2 m_{(\mu} \bar{m}_{\nu)}$.
We say such a theory is chiral-aligned if it
satisfies the following five conditions:

\begin{enumerate}[label=(\roman*)]
    \item The background spacetime $g_{\mu\nu}$ is conformally K\"ahler\footnote{Here we use the Lorentzian analogue of conformally K\"ahler geometry \cite{Aksteiner:2022bwr}.}
    [see Eq.~\eqref{eq: CKCondition}]\,;
    \item If the background spacetime is not Ricci flat, it satisfies the
    matter-aligned condition [see Eq.~\eqref{eq: RicciRequire}]\,;
    \item 
    The following perturbed
    components of the effective stress-energy tensor vanish:
    $T_{\mu \nu}^{\text{eff}(1)}l^{\mu}l^{\nu} = 0 =T_{\mu \nu}^{\text{eff}(1)}n^{\mu}n^{\nu}$
    ;
    \item When the equations of motion are imposed, there exists a gauge in
    which the perturbed Newman--Penrose Ricci scalars satisfy\footnote{Henceforth, 
    we use tetrad shorthand notation, so that, e.g.,~$h_{n m} = h_{\mu \nu} n^{\mu} m^{\nu}$.}
    \begin{subequations}\label{eq: RicciCL}
    \begin{align}
    \Phi^{(1)}_{01} &=
    \mathcal{S}_{01}[h^{(1)}_{nm},h^{(1)}_{lm},h^{(1)}_{mm}]\,, \\
    \Phi^{(1)}_{02} &=
    \mathcal{S}_{02}[h^{(1)}_{nm},h^{(1)}_{lm},h^{(1)}_{mm}]\,, \\
    \Phi^{(1)}_{21} &=
    \mathcal{S}_{21}[h^{(1)}_{n\bar{m}},h^{(1)}_{l\bar{m}},h^{(1)}_{\bar{m}\bar{m}}]\,, \\
    \Phi^{(1)}_{20} &=
    \mathcal{S}_{20}[h^{(1)}_{n\bar{m}},h^{(1)}_{l\bar{m}},h^{(1)}_{\bar{m}\bar{m}}]\,,
    \end{align}
    \end{subequations}
    where the operators $\mathcal{S}_{ij}$ are theory-dependent, complex-linear
    maps of the metric components, possibly containing derivatives, whose
    coefficients depend solely on background quantities;
    \item Once the metric perturbation $h^{(1)}_{\mu \nu}$ has been reconstructed, the matter 
perturbations $\varphi^{(1)}_{\mathcal{A}}$ can be reconstructed, up to a gauge and time translation, from the \textit{same} master variables used to reconstruct the metric and from the reconstructed metric. In particular, the matter perturbations do not satisfy
a decoupled set of evolution equations that would require additional master functions to reconstruct. 
\end{enumerate}
\end{definition}
\begin{lemmaA}\label{lem: ChiralLemma}
If the boundary conditions are complex-linear, the linear perturbations of a chiral-aligned theory are isospectral. 
\end{lemmaA}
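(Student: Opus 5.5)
The plan is to verify, one at a time, the hypotheses of Theorem~\ref{thm: MainTheorem} for a chiral-aligned theory, and then invoke that theorem directly. We need three things: complex master variables that reconstruct the radiative metric and matter perturbations isomorphically; a complex-linear system of equations for those variables; and complex-linear boundary conditions. The last is assumed in the statement of Lemma~\ref{lem: ChiralLemma}, so the work lies in the first two.

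\emph{Master variables.} I would take the perturbed radiative Weyl scalars, $\Psi_0^{(1)}$ and $\Psi_4^{(1)}$, or their conformally rescaled versions adapted to the background, as the complex master variables $\{f_i^{(1)}\}$.

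\emph{Decoupled complex-linear equations.} The first step is to derive a Teukolsky-type system for these scalars. Starting from the Bianchi identities and Ricci identities in Newman--Penrose form, I would linearize about the background. Condition (i), conformal K\"ahler, is the Lorentzian type-D-like structure: it makes the background spin coefficients $\kappa,\sigma,\lambda,\nu$ vanish and $l,n$ principal null directions. This is what allows the usual commutator manipulations that eliminate $\Psi_1^{(1)}$ and $\Psi_3^{(1)}$. Condition (ii), matter alignment, ensures the background Ricci terms enter those manipulations only through aligned components, so they do not re-couple the equations. The resulting equations contain the sources $\Phi_{00}^{(1)},\Phi_{01}^{(1)},\Phi_{02}^{(1)}$ for $\Psi_0^{(1)}$, and the conjugate set for $\Psi_4^{(1)}$. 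Condition (iii) removes $\Phi_{00}^{(1)}\propto T^{(1)}_{ll}$ and $\Phi_{22}^{(1)}\propto T^{(1)}_{nn}$. Condition (iv) rewrites the remaining sources as complex-linear operators acting on $h_{nm},h_{lm},h_{mm}$, respectively on their $\bar m$ counterparts.

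\emph{Reconstruction.} Next I would close the system using a CCK/Wald-type reconstruction in a radiation gauge. On the conformally K\"ahler background, $h_{\mu\nu}$ is built from a Hertz potential $\Psi_H$ that is complex-linearly related to $\Psi_0^{(1)}$ or $\Psi_4^{(1)}$, together with its conjugate. This gives the form of Eq.~\eqref{eq: ReconstructMetricandMatter}, with the $m$-components determined by $f$ and the $\bar m$-components by $\bar f$, where $f$ denotes the Hertz potential. Substituting into the sources gives a closed system in which only $f$, never $\bar f$, appears. Condition (v) then reconstructs $\varphi^{(1)}_{\mathcal{A}}$ from the same master variables without introducing new degrees of freedom. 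With these pieces, the hypotheses of Theorem~\ref{thm: MainTheorem} hold and isospectrality follows.

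\emph{Expected obstacle.} The hardest step should be the isomorphism and closure claim: that the reconstruction captures every radiative solution uniquely, up to gauge and time translation, and that condition (iv) is compatible with the radiation gauge used for reconstruction. I would handle this by choosing the gauge in (iv) to be exactly the radiation gauge. Uniqueness would come from the fact that the operators mapping $f\mapsto\Psi_{0,4}^{(1)}$ have trivial kernel on radiative (non-stationary) solutions, with the kernel consisting of mass, angular-momentum and gauge perturbations that are excluded. I would then check that the $\bar f$ terms cancel in the $m$-projected sources because of the spin-weight structure of $\mathcal{S}_{ij}$.
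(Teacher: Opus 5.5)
\textbf{Gap: the Hertz-potential step.} Your outer structure is right: verify the hypotheses of Theorem~\ref{thm: MainTheorem}, with the isomorphism as the crux. But the step meant to deliver closure and the isomorphism, a CCK/Wald Hertz-potential reconstruction, is not available in the generality of Definition~\ref{def: ChiralAligned}. Conformal K\"ahlerity is not what makes it work.

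The Hertz-potential construction rests on Wald's adjoint method. It needs an operator identity $\mathcal{S}\mathcal{E}=\mathcal{O}\mathcal{T}$, with $\mathcal{E}$ the (self-adjoint) linearized field operator, $\mathcal{T}h=\Psi^{(1)}_0[h]$, and $\mathcal{O}$ a \emph{decoupled} wave operator. Then $\mathcal{E}\mathcal{S}^\dagger=\mathcal{T}^\dagger\mathcal{O}^\dagger$, so $h=\mathcal{S}^\dagger f+\text{c.c.}$ solves the field equations when $\mathcal{O}^\dagger f=0$.

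In a chiral-aligned theory with nontrivial $\mathcal{S}_{01},\mathcal{S}_{02}$, the $\Psi^{(1)}_0$ equation is sourced by $h_{nm},h_{mm}$ and does not close on $\Psi^{(1)}_0$. Kerr--Newman, Eq.~\eqref{eq: KNEqus}, is the standard example: the gravitational and electromagnetic variables couple. So no such identity exists. Nothing guarantees that every radiative solution has CCK form, or that a CCK-form metric built from a solution of your substituted equation is a solution at all.

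That substituted equation is the sourced Teukolsky equation with $\Psi^{(1)}_0$ written in terms of $f$. It is only a consequence of the linearized field equations, not equivalent to them. Even in vacuum Kerr, any $f$ with $\Psi^{(1)}_0[h[f]]=0$ satisfies it, and such an $h[f]$ need not solve the field equations.

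Hence you have not shown that $f$ solves your master system if and only if it reconstructs a radiative solution. That equivalence is exactly what is needed for $e^{i\theta}f$ to reconstruct a solution. Separately, taking $\Psi^{(1)}_0$ and $\Psi^{(1)}_4$ together as master variables would spoil injectivity, since they are not independent.

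\textbf{How the paper closes the system.} The missing idea is that no potential is needed: the paper uses the ingoing-radiation-gauge metric components $\{h^{(1)}_{nm},h^{(1)}_{mm}\}$ themselves as master variables.
\begin{itemize}
\item Condition (iii) gives $\Phi^{(1)}_{00}\propto T^{\mathrm{eff}(1)}_{ll}=0$ on shell. This is what permits the traceless ingoing radiation gauge, which you assume without justification.
\item With the tetrad gauge \eqref{eq: CCKFrameIRG}, one has $\kappa^{(1)}=0$, and $\sigma^{(1)}$, $\Psi^{(1)}_1$, $\Psi^{(1)}_0$ are complex-linear operators on $h^{(1)}_{nm},h^{(1)}_{mm}$ alone [Eqs.~\eqref{eq: ReconstructedIRGQuantsReduced} and \eqref{eq: Psi_0MetricComp}]. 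By (iv) with $h^{(1)}_{lm}=0$, so are $\Phi^{(1)}_{01}$ and $\Phi^{(1)}_{02}$.
\item The two Bianchi identities \eqref{eq: RicciFormCurved}, kept as a coupled pair rather than combined into a Teukolsky equation, are then already a closed complex-linear system. No conjugates can enter, so your spin-weight check becomes unnecessary.
\item The remaining component $h^{(1)}_{nn}$ follows from the $\Phi_{22}$ Ricci identity \eqref{eq: RiccihnnPerturb}, a transport equation that uses $T^{\mathrm{eff}(1)}_{nn}=0$.
\item Uniqueness of the radiative part comes from a difference perturbation with $j_{nm}=j_{mm}=0$. It has $\Psi^{(1)}_0[j]=0$ and is therefore non-radiative.
\end{itemize}
This route never invokes self-adjointness or a decoupled scalar equation, which is why it is not restricted to vacuum backgrounds.
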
 

\begin{proof}
We begin by showing how conditions (i) and (ii) of Def.~\ref{def: ChiralAligned} admit a convenient frame gauge for background quantities. 

Firstly, condition (i) requires the spacetime is conformally K\"ahler. An orientable four-manifold $(M,g)$ is conformally K\"ahler if and only if it admits a non-null symmetric spinor field $\chi_{AB}$ satisfying \cite{Dunajski:2009dqa}
\begin{align}\label{eq: CKCondition}
\nabla_{A'(A}\chi_{BC)}=0\,.
\end{align}
The spinor indices $(A,B,\dots)$ are raised and lowered with the antisymmetric Levi-Civita spinors $\epsilon_{AB}$ and
$\epsilon_{A'B'}$, and a spacetime index is equivalent to a pair of spinor indices, $\mu \leftrightarrow AA'$.
We work with a normalized spinor dyad $(o^A,\iota^A)$ obeying $o_A\iota^A = 1$, which is the spinorial counterpart of the Newman-Penrose null tetrad
$(l^\mu,n^\mu,m^\mu,\bar m^\mu)$ via
$l^\mu\leftrightarrow o^A\bar{o}^{A'}$, 
$n^\mu\leftrightarrow\iota^A\bar{\iota}^{A'}$, 
$m^\mu\leftrightarrow o^A\bar{\iota}^{A'}$, and
$\bar m^\mu\leftrightarrow\iota^A\bar{o}^{A'}$.

Additionally, conformally K\"ahler spacetimes satisfy an integrability constraint $\Psi_{(A B C}{ }^E \chi_{D) E}=0$ \cite{Andersson:2015xla}, 
which requires the self-dual Weyl tensor to take the form
\begin{align}\label{eq: CKRequire}
    \Psi_{A B C D} & = 6 \, \Psi_2 \; o_{(A} o_B \iota_C \iota_{D)}\,,
\end{align}
imposing the spacetime is Petrov type D \cite{PETROV196988, Stephani:2003tm, Petrov:2000bs}. Here, $\Psi_2$ is a Newman-Penrose
scalar, and we henceforth adopt standard definitions for all other Newman--Penrose quantities~\cite{Chandrasekhar:1975nkd, Li:2022pcy}.
For example, a static, spherically symmetric black hole is always conformally K\"ahler \cite{Aksteiner:2022bwr}. Rotating black holes, such as Kerr, are also conformally K\"ahler \cite{Aksteiner:2022bwr, Araneda:2025lak}. 

Secondly, condition (ii) imposes the background solution is a \textit{matter-aligned} spacetime, which satisfies
\begin{align}\label{eq: RicciRequire}
    \Phi_{(A}{ }^C{ }_{\left|A^{\prime} B^{\prime}\right|} \chi_{B) C} = 0\,,
\end{align}
where $\Phi_{AA'BB'}$ is the traceless Ricci spinor.
While Eq.~\eqref{eq: CKRequire} enforces the background to be Petrov type D, Eq.~\eqref{eq: RicciRequire} enforces the Ricci spinor to only have the Coulomb-type component $\Phi_{11}$ along with a trace $\Lambda_{\mathrm{NP}}$\footnote{ We use the notation $\Lambda_{\mathrm{NP}}$ for the Newman--Penrose Ricci scalar trace to differentiate it from the cosmological constant $\Lambda$ which we denote with no subscript.}. Importantly, if we dropped the requirement of Eq.~\eqref{eq: RicciRequire}, the Ricci spinor can be more general. For example, the static, electrically-charged black holes of Gibbons and Maeda \cite{Gibbons:1987ps} in Einstein-Maxwell-dilaton-axion gravity \cite{Pope:2025jgz} are still conformally K\"ahler but have nonzero Ricci scalars $\Phi_{00}$ and $\Phi_{22}$ on the background.

In a Newman--Penrose tetrad aligned with the repeated principal null directions, spacetimes satisfying Eqs.~\eqref{eq: CKCondition} and \eqref{eq: RicciRequire} obey the following background relations:
\begin{subequations}\label{eq: CKFrameGauge}
\begin{align}
    &\Psi_0=\Psi_1=\Psi_3=\Psi_4=0\label{eq: WeylScalarsVanish}\,,\\
    &\kappa=\sigma=\lambda=\nu=0\label{eq: GSEqu}\,,\\
    &\Phi_{ij}=0\quad\text{for}\quad(i,j)\neq(1,1)\,,\label{eq: ColRicci}
      \\ 
    &\Phi_{11}\ \text{and}\ \Lambda_{\mathrm{NP}}
      \ \text{may be nonzero}\,.\label{eq: ColRicciNonzero}
\end{align}
\end{subequations}
where Eq.~\eqref{eq: WeylScalarsVanish} is a consequence of Eq.~\eqref{eq: CKRequire}, Eq.~\eqref{eq: GSEqu} results from the Goldberg-Sachs theorem \cite{Goldberg2009}, and Eqs.~\eqref{eq: ColRicci} and \eqref{eq: ColRicciNonzero} are a consequence of Eq.~\eqref{eq: RicciRequire}. For example, 
type-D black holes in electrovacuum, such as Kerr--Newman or, more generally, the Pleba\'nski--Demia\'nski spacetimes \cite{PLEBANSKI197698}, are matter-aligned. 

Next, we will show how conditions (iii) and (iv) imply the radiative components of the metric and matter perturbations can be reconstructed from a closed set of complex-linear equations.  We consider the two Bianchi identities used to derive the Teukolsky equation for $\Psi^{(1)}_0$ \cite{Teukolsky:1973ha}: 
\begin{subequations}\label{eq: RicciFormCurved}
    \begin{align}
        F_1 \Psi^{(1)}_0 \!+\! F_2 \Psi^{(1)}_1   \!+\!\kappa^{(1)}\!\left( 3 \Psi^{(0)}_2 + 2 \Phi^{(0)}_{11}\right)&\!=\! \mathcal{S}_1\Phi^{(1)}_{00} \!+\! \mathcal{S}_2\Phi^{(1)}_{01}  \label{eq: Bianchi1}\,,\\
        F_3 \Psi^{(1)}_0 \!+\!F_4 \Psi^{(1)}_1   \!+\! \sigma^{(1)}\!\left( 3\Psi^{(0)}_2 - 2 \Phi^{(0)}_{11}\right) &\!= \mathcal{S}_3\Phi^{(1)}_{01} \!+\! \mathcal{S}_4\Phi^{(1)}_{02}\,,\label{eq: Bianchi2}
    \end{align}
\end{subequations}
where we have used the background tetrad relations for a matter-aligned, conformally K\"ahler spacetime given in Eq.~\eqref{eq: CKFrameGauge} 
(i.e., conditions (i) and (ii) in  Def.~\ref{def: ChiralAligned}). The first-order differential operators $\mathcal{S}_{i}$ and $F_i$ are given in the Supplemental Material \cite{SupplementalMaterial}. The first Bianchi identity in Eq.~\eqref{eq: RicciFormCurved} contains the real Ricci scalar $\Phi_{00}^{(1)}$. When expressed in terms of the complex metric variables, a nonvanishing $\Phi_{00}^{(1)}$ generally depends on both the chiral metric components and their complex conjugates, thereby preventing the equations for $\{h^{(1)}_{nm},h^{(1)}_{mm}\}$ from forming a closed, complex-linear system. In the radiation and tetrad gauges introduced below, however, condition (iii) and the $ll$ projection of the linearized field equations imply $\Phi_{00}^{(1)}\propto T_{ll}^{\mathrm{eff}(1)}=0$. Its vanishing is therefore an on-shell consequence of the field equations under condition (iii), rather than an off-shell geometric identity.

Now, let us show that all the gravitational fields can be reconstructed from a subset of metric fields satisfying complex-linear equations. Diffeomorphism invariance of the action [Eq.~\eqref{eq: Action}] and the vanishing of $\Phi^{(1)}_{00}$ allow one to impose the ingoing radiation gauge on the metric perturbation \cite{Price:2006ke}. The ingoing radiation gauge is traceless and satisfies $h_{\mu\nu}l^{\mu}=0$ where $l^{\mu}$ is the outgoing principal null direction of the background. In this gauge, the independent nonvanishing metric components may be taken to be $\{h^{(1)}_{nn},h^{(1)}_{nm},h^{(1)}_{mm}\}$, which comprise five real unknowns. Upon linearizing the Ricci and Bianchi identities in the Newman--Penrose formalism, the perturbed spin coefficients and Weyl scalars can be expressed in terms of derivatives of the perturbed metric. At $\mathcal{O}(\epsilon^1)$, we adopt the tetrad choice in Eq.~\eqref{eq: CCKFrameIRG}. Under this choice, the quantities entering Eq.~\eqref{eq: RicciFormCurved} can be written entirely in terms of $h^{(1)}_{nm}$ and $h^{(1)}_{mm}$ [see Eqs.~\eqref{eq: ReconstructedIRGQuantsReduced} and \eqref{eq: Psi_0MetricComp}].

After imposing the ingoing radiation gauge and adopting this tetrad choice, condition (iv) in Def.~\ref{def: ChiralAligned} ensures that $\Phi^{(1)}_{01}$ and $\Phi^{(1)}_{02}$ depend only on derivatives of $h^{(1)}_{nm}$ and $h^{(1)}_{mm}$. Consequently, the two Bianchi identities in Eq.~\eqref{eq: RicciFormCurved} form a closed, complex-linear system of two third-order differential equations for $h^{(1)}_{nm}$ and $h^{(1)}_{mm}$. Once this system has been solved, the remaining metric component $h^{(1)}_{nn}$ can be obtained from the Ricci identity for $\Phi^{(1)}_{22}$ [see Eq.~\eqref{eq: RiccihnnPerturb} of the Supplemental Material]. 

Once $h^{(1)}_{nm}$ and $h^{(1)}_{mm}$ have been computed, the remaining radiative component $h^{(1)}_{nn}$ is fixed. Consider two metric perturbations $h^{(1)}_{\mu\nu}$ and $q^{(1)}_{\mu\nu}$ satisfying $h^{(1)}_{nm}=q^{(1)}_{nm}$ and $h^{(1)}_{mm}=q^{(1)}_{mm}$. The difference metric, $j^{(1)}_{\mu\nu}\equiv h^{(1)}_{\mu\nu}-q^{(1)}_{\mu\nu}$, therefore satisfies $j^{(1)}_{nm}=j^{(1)}_{mm}=0$ and is also in the ingoing radiation gauge. The corresponding Weyl scalar obeys $\Psi^{(1)}_0[j]=0$. This implies, $j^{(1)}_{\mu\nu}$ can then contain only nonradiative components \cite{Wald:1973wwa}, demonstrating the radiative part of $h^{(1)}_{nn}$ is uniquely determined from $h^{(1)}_{nm}$ and $h^{(1)}_{mm}$. 

Additionally, since $h_{mm}$ and $h_{nm}$ are generically complex-valued, the set of components $\{\tilde{h}^{(1)}_{nm},\tilde{h}^{(1)}_{mm}\} = e^{i\theta}\{h^{(1)}_{nm},h^{(1)}_{mm}\}$ are not a real multiple of $\{h^{(1)}_{nm},h^{(1)}_{mm}\}$. The two sets of variables therefore reconstruct real-valued solutions $\tilde{h}^{(1)}_{\mu \nu }$ and $h^{(1)}_{\mu \nu}$ which are linearly independent. 
Since we have reconstructed the entire metric perturbation, we can reconstruct the matter $\varphi^{(1)}_{\mathcal{A}}$ by condition (v).  Therefore, the criterion of Theorem \ref{thm: MainTheorem} is satisfied, and the linear perturbations of a chiral-aligned theory are isospectral. $\square$
\end{proof}

The above construction shows that, for chiral-aligned theories, the criterion of Theorem \ref{thm: MainTheorem} are realized with the master functions given directly by the metric components $\{h^{(1)}_{nm},h^{(1)}_{mm}\}$. Importantly, isospectrality is not gauge dependent as QNM frequencies are observable in a detector response. For instance, a set of equations analogous to Eq.~\eqref{eq: RicciFormCurved} can also be obtained in the outgoing radiation gauge, a traceless gauge where $h_{\mu \nu}n^{\mu} = 0$ and $g_{\mu \nu}l^{\mu}n^{\nu} = -1$, from the two Bianchi identities used to derive the Teukolsky equation for $\Psi^{(1)}_4$. In this case, the master functions are now the set of metric components $\{h_{l\bar{m}},h_{\bar{m}\bar{m}}\}$. Here reconstruction is built into the formulation: the Bianchi identities in Eq.~\eqref{eq: RicciFormCurved} close on $\{h^{(1)}_{nm},h^{(1)}_{mm}\}$, and the only remaining component, $h^{(1)}_{nn}$, follows from the Ricci identity for $\Phi^{(1)}_{22}$. We additionally highlight that chiral-aligned theories have two master functions, but generically Theorems \ref{thm: MainTheorem} and \ref{thm: TheoremB} have no such restriction, as it places no limit on the number of master variables.

Next, we form an additional lemma which is a sufficient criterion to guarantee parity isospectrality:
\begin{lemmaB}\label{lem: ChiralLemmaParity}
Let the hypotheses of Lemma~\ref{lem: ChiralLemma} hold. Suppose in
addition that the null tetrad $\{l^{\mu},n^{\mu},m^{\mu},\bar{m}^{\mu}\}$ of the background spacetime obeys the following relations: $\hat{P} l^{\mu} = l^{\mu}, \hat{P} n^{\mu} = n^{\mu}$ and $\hat{P}m^{\mu} = e^{i \chi} \bar{m}^{\mu}$ where $\chi$ is a constant phase in $[0,\pi]$ and $\hat{P}$ is the parity operator. Then, Theorem \ref{thm: TheoremB} applies, and the linear perturbations are parity isospectral.
\end{lemmaB}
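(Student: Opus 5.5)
By Lemma~\ref{lem: ChiralLemma}, the hypotheses of Theorem~\ref{thm: MainTheorem} hold with master variables $\{f_i\}=\{h^{(1)}_{nm},h^{(1)}_{mm}\}$ in the ingoing radiation gauge. It therefore remains to check the extra hypothesis of Theorem~\ref{thm: TheoremB}: whenever $\{f_i\}$ reconstructs a definite-parity solution, some phase rotation $e^{i\theta}\{f_i\}$ with $\theta\in(0,\pi)$ reconstructs a solution of the opposite parity. I will identify this phase explicitly as $\theta=\pi/2$, i.e.\ multiplication by $i$, and derive it from the assumed action of $\hat P$ on the tetrad.

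\emph{Step 1: how $\hat P$ acts on the master variables.} Since $\hat P l^\mu=l^\mu$, $\hat P n^\mu=n^\mu$ and $\hat P m^\mu=e^{i\chi}\bar m^\mu$, any real perturbation satisfies
\begin{align}
\hat P\big(h^{(1)}_{nm}\big)=e^{i\chi}\,(\hat P h^{(1)})_{n\bar m}\,,\qquad \hat P\big(h^{(1)}_{mm}\big)=e^{2i\chi}\,(\hat P h^{(1)})_{\bar m\bar m}\,.
\end{align}
Because $h^{(1)}_{\mu\nu}$ is real, $(\hat P h)_{n\bar m}=\overline{(\hat P h)_{nm}}$. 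Now suppose $h^{(1)}$ has definite parity, $\hat P h^{(1)}=\eta(-1)^\ell h^{(1)}$ with $\eta=\pm1$, mode by mode after the angular decomposition. Then the master variables obey an \emph{antilinear} relation of the form $\hat P f_i=\eta(-1)^\ell e^{i s_i\chi}\,\bar f_i$, where $s_i$ is the spin weight ($s=1$ for $h_{nm}$, $s=2$ for $h_{mm}$).

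\emph{Step 2: effect of the phase rotation.} Let $\tilde f_i=i f_i$, which again solves the closed complex-linear system with the same boundary conditions by Lemma~\ref{lem: ChiralLemma}. Then
\begin{align}
\hat P\tilde f_i = i\,\hat P f_i = i\,\eta(-1)^\ell e^{is_i\chi}\bar f_i = -\eta(-1)^\ell e^{is_i\chi}\,\overline{\tilde f_i}\,,
\end{align}
since $\overline{\tilde f_i}=-i\bar f_i$. Thus $\tilde f_i$ satisfies the parity relation with $\eta\to-\eta$. Reconstruction by Eq.~\eqref{eq: ReconstructMetricandMatter} is real-linear, and its coefficients are built from the $\hat P$-covariant background tetrad. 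Hence $\hat P$ commutes with reconstruction, up to the same conjugation, so $\tilde h^{(1)}$ satisfies $\hat P\tilde h^{(1)}=-\eta(-1)^\ell\tilde h^{(1)}$. The same argument applies to the remaining component $h^{(1)}_{nn}$ (fixed via the $\Phi^{(1)}_{22}$ Ricci identity) and, by condition (v), to the matter fields. Theorem~\ref{thm: TheoremB} then gives parity isospectrality.

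\emph{Main obstacle.} The delicate step is the claim that a definite-parity real solution has master variables obeying exactly the antilinear relation above, and that this relation in turn characterizes parity. This requires that $\hat P$ maps the ingoing radiation gauge and the tetrad choice at $\mathcal{O}(\epsilon^1)$ to themselves. That follows from $\hat P l=l$, $\hat Pn=n$, and $g$ being $\hat P$-invariant, since tracelessness and $h_{\mu\nu}l^\mu=0$ are preserved. It also requires the uniqueness part of the isomorphism in Lemma~\ref{lem: ChiralLemma}, so that a master-variable relation lifts to a relation between the full radiative fields (modulo nonradiative pieces). I would also check that the $\ell$-dependence is consistent: spin-weighted harmonics satisfy ${}_sY_{\ell m}(\pi-\theta,\phi+\pi)=(-1)^\ell\,{}_{-s}Y_{\ell m}(\theta,\phi)$, which matches the $m\leftrightarrow\bar m$ swap, so the constant $\chi$ only contributes the fixed factor $e^{is\chi}$.
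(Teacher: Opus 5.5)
Your proposal is correct and takes essentially the same route as the paper. Both rotate the master variables $\{h^{(1)}_{nm},h^{(1)}_{mm}\}$ by $i$, use reality of $h^{(1)}_{\mu\nu}$ so that the barred components pick up $-i$, and read off the sign flip in the parity relation induced by $\hat{P}m^{\mu}=e^{i\chi}\bar{m}^{\mu}$. The paper's proof leaves implicit the step you make explicit, namely that $\hat{P}$ preserves the ingoing radiation gauge and tetrad choice and that uniqueness of the reconstruction lifts the sign flip from the master variables to the full radiative solution.
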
 

\begin{proof}
Let $\{h^{(1)}_{nm},h^{(1)}_{mm}\}$ be a set of complex tetrad projections the metric perturbation that can be used to reconstruct a solution $(h^{(1)}_{\mu\nu},\varphi^{(1)}_{\mathcal{A}})$ of definite parity,
so that $\hat{P}h^{(1)}_{\mu\nu} = \pm \xi_\ell h^{(1)}_{\mu\nu}$ where
$\xi_{\ell}= (-1)^{\ell}$. 
We here follow the proof of Lemma~\ref{lem: ChiralLemma} by noting that the set 
$\{h^{(1)}_{nm},h^{(1)}_{mm}\}$ can be thought of as master variables (since they
can be used to reconstruct the metric perturbation).
Using $\hat{P}m^{\mu} = e^{i\chi}\bar{m}^{\mu}$, this relation yields
\begin{equation}\label{eq: DefParityComponents}
e^{i\chi}\,\hat{P}\big[h^{(1)}_{n\bar{m}}\big] = \pm \xi_{\ell}h^{(1)}_{nm}\,,
\qquad
e^{2i\chi}\,\hat{P}\big[h^{(1)}_{\bar{m}\bar{m}}\big] = \pm \xi_{\ell}h^{(1)}_{mm}\,.
\end{equation}
Now let $\tilde{h}^{(1)}_{nm} = ih^{(1)}_{nm}$ and
$\tilde{h}^{(1)}_{mm} = ih^{(1)}_{mm}$ be the
phase-rotated master variables. Since $\tilde{h}^{(1)}_{\mu\nu}$ and
$h^{(1)}_{\mu\nu}$ are real-valued, the conjugate components satisfy
$\tilde{h}^{(1)}_{n\bar{m}} = -i\,h^{(1)}_{n\bar{m}}$ and
$\tilde{h}^{(1)}_{\bar{m}\bar{m}} = -i\,h^{(1)}_{\bar{m}\bar{m}}$. Applying the
parity operator and using Eq.~\eqref{eq: DefParityComponents} we obtain,
\begin{subequations}
\begin{align}
e^{i\chi}\hat{P}\big[\tilde{h}^{(1)}_{n\bar{m}}\big]
&= -ie^{i\chi}\hat{P}\big[h^{(1)}_{n\bar{m}}\big]
= \mp i\xi_{\ell}h^{(1)}_{nm}
= \mp \xi_{\ell}\tilde{h}^{(1)}_{nm}\,,\\
e^{2i\chi}\hat{P}\big[\tilde{h}^{(1)}_{\bar{m}\bar{m}}\big]
&= -ie^{2i\chi} \hat{P}\big[h^{(1)}_{\bar{m}\bar{m}}\big]
= \mp i\xi_{\ell}h^{(1)}_{mm}
= \mp \xi_{\ell}\tilde{h}^{(1)}_{mm}\,.
\end{align}
\end{subequations}
Thus, if the complex set $\{h_{nm},h_{mm}\}$ correspond to definite parity solutions, then $\{i h_{nm},ih_{mm}\}$ will reconstruct the opposite parity solution and Theorem \ref{thm: TheoremB} is satisfied, establishing parity isospectrality. $\square$
\end{proof}

\vspace{0.2cm}
\noindent\textit{Isospectrality in vacuum GR}---We now verify that vacuum
perturbations of a Kerr black hole constitute a chiral-aligned theory and the QNMs are defined by complex-linear boundary conditions
satisfying the hypotheses of Lemma~\ref{lem: ChiralLemmaParity}, recovering the proof of
parity isospectrality at arbitrary subextremal spin~\cite{Li:2023ulk}.

Condition (i) of Def.~\ref{def: ChiralAligned} follows from a hidden
symmetry of Kerr: the rank-2 Killing tensor
$K_{\mu\nu}$~\cite{Carter:1968rr, Carter:1968ks}, whose conserved quantity
yields the Carter constant and integrable geodesic motion, factorizes into an
antisymmetric Killing--Yano tensor $f_{\alpha\beta}$, with
$K_{\mu\nu}=f_{\mu\alpha}f_\nu{}^{\alpha}$ and
$\nabla_{(\alpha}f_{\beta)\gamma}=0$~\cite{Penrose:1973um, Floyd:1973}.
Writing $f_{\mu\nu}\leftrightarrow f_{AA'BB'}
= \chi_{AB}\,\epsilon_{A'B'} + \bar{\chi}_{A'B'}\,\epsilon_{AB}$, the
self-dual part $\chi_{AB}$ is a non-null, valence-two Killing spinor
satisfying Eq.~\eqref{eq: CKCondition}, so the Kerr spacetime is conformally
K\"ahler~\cite{Aksteiner:2022bwr, Araneda:2025lak}. Conditions (ii)--(v)
hold automatically: the background is Ricci flat, and vacuum perturbations impose
$\Phi^{(1)}_{ij}=0$. Lemma~\ref{lem: ChiralLemma} therefore applies, with
master variables $\{h^{(1)}_{nm},h^{(1)}_{mm}\}$ in the ingoing radiation
gauge, and the linear perturbations are isospectral.

Furthermore, the Kinnersley tetrad \cite{Kinnersley:1969zza} is aligned with the repeated principal
null directions, realizing the background relations of
Eq.~\eqref{eq: CKFrameGauge}, and one verifies directly that it transforms
under parity as $\hat{P}l^{\mu}=l^{\mu}$, $\hat{P}n^{\mu}=n^{\mu}$, and
$\hat{P}m^{\mu}=-\bar{m}^{\mu}$, corresponding to $\chi=\pi$ in
Lemma~\ref{lem: ChiralLemmaParity}. 

Lastly, we show the boundary conditions defining the QNMs are complex-linear. The set of equations in Eq.~\eqref{eq: RicciFormCurved} can be combined to form the homogeneous Teukolsky equation $\mathcal{O}_0\Psi^{(1)}_0 = 0$ where $\Psi_0 = \Psi_{ABCD}o^{A}o^{B}o^{C}o^{D}$ is the $s = +2$ components of the self-dual Weyl spinor $\Psi_{ABCD}$ and $\mathcal{O}_0$ is a second-order Teukolsky operator [see Eq.~\eqref{eq: TekOperators} in the Supplemental material \cite{SupplementalMaterial}]. 

Separating $\Psi_{0}^{(1)}$ into radial and angular parts,
${}_{2}R_{\ell m}(r)$ and ${}_{2}S_{\ell m}(\theta)$, with
${}_{2}R_{\ell m}(r)$ purely ingoing at the horizon and purely outgoing at
null infinity and ${}_{2}S_{\ell m}(\theta)$ regular, defines a
Sturm--Liouville-type eigenvalue problem for the QNM frequencies. These
conditions are not posed independently on the real and imaginary parts of the
master variable, nor do they couple it to its complex conjugate, so the
eigenvalue problem respects the complex structure, as required by
Lemma~\ref{lem: ChiralLemmaParity}.

We emphasize the Bianchi identities formed in the outgoing radiation gauge with $\{h_{l\bar{m}},h_{\bar{m}\bar{m}}\}$ is also equivalent to the Teukolsky equation $\mathcal{O}_4\Psi^{(1)}_4 = 0$ where $\Psi_4 = \Psi_{ABCD}\iota^{A}\iota^{B}\iota^{C}\iota^{D}$ is the $s = -2$ component of the self-dual Weyl tensor and $\mathcal{O}_4$ is a second-order Teukolsky operator [see Supplemental Material \cite{SupplementalMaterial}]. Similarly, the boundary conditions are complex-linear. Parity isospectrality of Kerr then follows by applying Lemma \ref{lem: ChiralLemmaParity}, recovering the result of
Ref.~\cite{Li:2023ulk}.

Perturbations of Kerr realize the general construction of
Lemma~\ref{lem: ChiralLemmaParity} in a special way: rather than the coupled,
complex-linear system for the metric components
$\{h^{(1)}_{nm},h^{(1)}_{mm}\}$ of Eq.~\eqref{eq: RicciFormCurved}, the
equations of motion decouple into a single wave equation for the curvature
scalar $\Psi^{(1)}_{0}$ (or $\Psi^{(1)}_{4}$), from which the metric
components are recovered by a separate reconstruction procedure. References~\cite{Chrzanowski:1975wv, Chrzanowski:1976jy, Cohen_Kegeles_1975, Kegeles:1979an, Wald:1978vm} showed that the radiative parts of the linear metric perturbation of Kerr spacetime can be reconstructed from the Weyl scalars $\Psi^{(1)}_0$ or $\Psi^{(1)}_4$ in radiation gauges via an intermediate Hertz potential. Other metric reconstruction approaches include sequentially solving a subset of the Newman-Penrose equations \cite{Chandrasekhar_1983, Loutrel:2020wbw, Ripley:2020xby, Suvorov:2019qow, Li:2026rkf}. 

Regardless of the metric reconstruction procedure, the Weyl scalar perturbation $\Psi^{(1)}_0$ (or $\Psi^{(1)}_4$) can reconstruct the entire radiative metric perturbation, including both parity sectors of the gravitational wave. From this perspective, parity isospectrality is an immediate consequence: $\Psi^{(1)}_0$ (or $\Psi^{(1)}_4$) satisfies a single complex-linear wave equation, and both parity sectors are reconstructed from the same $\Psi^{(1)}_0$ (or $\Psi^{(1)}_4$), so the QNM spectrum of each parity must coincide if they are given the same boundary conditions.
 
\vspace{0.2cm}

\noindent 

{\it How to achieve isospectrality beyond GR}---We have formulated generic conditions that guarantee isospectrality in Theorem \ref{thm: MainTheorem} and for chiral-aligned theories in Lemma \ref{lem: ChiralLemma}. In this section we detail some features of linearized dynamics that ensure the conditions of a chiral-aligned theory, given in Def.~\ref{def: ChiralAligned}, are satisfied. 

Conditions (i) and
(ii) of Def.~\ref{def: ChiralAligned} constrain only the background
geometry; whether a modified theory preserves isospectrality therefore
hinges on the dynamical conditions (iii) and (iv). We now identify some sufficient conditions of a theory that enforce them: a component
condition on the modified linearized operator, which secures (iv),
and constraints on the effective stress-energy tensor, which secure
(iii).

One way to satisfy condition (iv) in Def.~\ref{def: ChiralAligned} for Ricci-flat, Petrov type D backgrounds in pure-metric, beyond-GR theories is for the modified linearized operator to act on the same set of metric components as the linearized Einstein operator. For example, let $\mathcal{E}_{\mu\nu}^{\rho\sigma}$ be the linearized Einstein tensor in GR \cite{Brito:2013yxa, Wardell:2024yoi} and
\begin{align} \label{eq: stress_Einstein}
    \mathcal{E}_{\mu \nu}^{\rho \sigma} h^{(1)}_{\rho \sigma} + \mathcal{T}_{\mu \nu}^{\rho \sigma} h^{(1)}_{\rho \sigma} = 0
\end{align}
be the modified linearized equations. We assume condition (iii) of Lemma \ref{def: ChiralAligned} holds, which allows the metric perturbation to be expressed in the ingoing or outgoing radiation gauge. The key step is to additionally require that the operator $\mathcal{T}_{\mu \nu}^{\rho \sigma}$ be nonvanishing only for the same components for which $\mathcal{E}_{\mu \nu}^{\rho \sigma}$ is nonvanishing.  This permits modifications of the equations of motion, while preserving the property that, in the ingoing radiation gauge, each perturbed Newman--Penrose Ricci scalar in Eq.~\eqref{eq: RicciFormCurved} depends only on $\{h^{(1)}_{nm},h^{(1)}_{mm}\}$, satisfying condition (iv) of Lemma \ref{lem: ChiralLemma}. Since this is a pure-metric modification,  condition (v) is also satisfied.

Condition (iii) of Def.~\ref{def: ChiralAligned} can be satisfied by constraining the equations of motion in the matter sector. Assuming a minimal coupling to the curvature, if $T^{\mathrm{eff}}_{ll}$ and $T^{\mathrm{eff}}_{nn}$ are quadratic in fields that vanish on the background, $\varphi^{(0)}_A=0$, then at $\mathcal{O}(\epsilon^1)$ these stress-energy components vanish. Thus, these scalars vanish as a consequence of their quadratic form without imposing or solving the matter equations of motion. For example, this occurs for the gravitoelectromagnetic perturbations of the Kerr--Newman black hole. Alternatively, if the theory obeys the null energy condition $T^{\mathrm{eff}}_{\mu\nu}k^\mu k^\nu \geq 0$, where $k^{\mu}$ is a null four-vector, and the background satisfies Eq.~\eqref{eq: CKRequire}, then on-shell $T^{\mathrm{eff}(1)}_{ll} = T^{\mathrm{eff}(1)}_{nn} = 0$ is enforced. 
Both mechanisms thus guarantee the extreme-boost-weight scalars vanish without spoiling the complex-linear structure.

\vspace{0.2cm}

\noindent\textit{Examples of the preservation of isospectrality}---As a demonstration of Theorem~\ref{thm: MainTheorem}, we discuss several theories and spacetimes in which it applies. We begin with the case where the criteria in both Theorem~\ref{thm: TheoremB} and Lemma~\ref{lem: ChiralLemmaParity} are satisfied. 

For perturbations of the Kerr--de Sitter black hole within GR, we can apply a similar argument to the Kerr case given above. The Kerr--de Sitter spacetime is both conformally K\"ahler and matter-aligned, permitting the gauge choice given in Eq.~\eqref{eq: CKFrameGauge}. As we show in the Supplemental Material, the equations of motion enforce that $\Phi^{(1)}_{ij}$ vanishes, allowing the metric perturbation to be transformable to the radiation gauges and the standard Teukolsky equation in \cite{Teukolsky:1973ha} to be satisfied. Together, these features imply that Lemma \ref{lem: ChiralLemmaParity} applies and linear perturbations must be parity isospectral. While a reconstruction procedure is guaranteed from Lemma \ref{lem: ChiralLemmaParity}, other techniques from \cite{Suvorov:2019qow, Loutrel:2020wbw, Ripley:2020xby, Li:2026rkf} can also be applied. 

Another example is the Kerr--Newman black hole within ordinary Einstein--Maxwell gravity. This spacetime is known to be conformally K\"ahler and to satisfy the matter-aligned condition. The linearized equations of motion can be put in the following form \cite{Dias:2015wqa}
\begin{subequations}\label{eq: KNEqus}
\begin{align}
\begin{aligned}
 \left(\mathcal{O}_{-2}+\Phi_{11}^{(0)} \mathcal{P}_{-2}\right) \Psi^{(1)}_4+\Phi_{11}^{(0)} \mathcal{Q}_{-2} \varphi^{(1)}_{-1}&=0\,, \\
 \left(\mathcal{O}_{-1}+\Phi_{11}^{(0)} \mathcal{P}_{-1}\right) \varphi^{(1)}_{-1}+\Phi_{11}^{(0)} \mathcal{Q}_{-1} \Psi^{(1)}_4&=0\,,
\end{aligned}
\end{align}
\end{subequations} 
where $\varphi_{-1}$ is the gauge-invariant, spin-1, master function and $\Phi^{(0)}_{11}$ is the background, Coulomb-type, Ricci scalar. The operators in Eq.~\eqref{eq: KNEqus} are presented in \cite{Dias:2015wqa} as well as the Supplemental Material in Eq.~\eqref{eq:  KNModifiedOperators}. 
The equations of motion in Eq.~\eqref{eq: KNEqus} are complex-linear because they couple to neither $\bar{\Psi}^{(1)}_4$ nor $\bar{\varphi}^{(1)}_{-1}$.    
In separate work \cite{Weller:2026kk}, we prove that metric perturbations of the Kerr--Newman black hole admit a traceless radiation gauge and provide a reconstruction procedure for both the metric and spin-1 gauge-field perturbations. These results establish that Lemma \ref{lem: ChiralLemmaParity} can be applied. While first steps were taken in~\cite{Pani:2013ija}, this structure of the linearized equations together with the reconstruction procedure shows that generic electrovacuum perturbations of the subextremal Kerr--Newman black hole are parity isospectral. 

Now let us consider an example that satisfies the criteria in Theorem~\ref{thm: TheoremB}, when restricted to the tensor sector, but not Lemma~\ref{lem: ChiralLemmaParity}, the metric formulation of the Starobinsky-Podolsky-higher-order action \cite{Cuzinatto:2016ehv}:
\begin{align}
     S[g] & = \int d^4 x \sqrt{|g|} \left(f(R) + c_1 \nabla_{\mu}R \nabla^{\mu}R \right.\\\
     &\left.+ \dots + c_N\nabla_{\mu_1}...\nabla_{\mu_N}R \nabla^{\mu_1}...\nabla^{\mu_N}R\right)  \nonumber
\end{align}
where $c_n$ are real coefficients, while $N$ is a positive integer. The Kerr black hole remains a stationary solution since the scalar curvature vanishes. As shown in the Supplemental Material, although the full metric perturbation of Kerr in this theory is not a vacuum solution and contains an additional massive Ricci mode, its radiative Weyl scalars obey the same homogeneous Teukolsky equations as vacuum perturbations of Kerr in GR \cite{Teukolsky:1973ha}. Thus, despite the theory not being chiral-aligned, the two tensor polarizations remain isospectral and share the standard vacuum Kerr spectrum.

Lastly, we emphasize an example where isospectrality has been demonstrated, but Theorem \ref{thm: MainTheorem} does not seem to apply. Recently, it was analytically shown, via a Chandrasekhar transformation, that perturbations of the Gibbons and Maeda black hole in Einstein-Maxwell-dilaton-axion gravity are isospectral \cite{Pope:2025jgz}. As previously mentioned, this black hole is conformally K\"ahler and does not satisfy the matter-aligned condition. Additionally, the theory does not satisfy any of the other criteria of a chiral-aligned theory. Moreover, we cannot identify a complex-linear formulation of the perturbations. In spite of all of this, isospectrality remains. Thus, this example appears to realize a different mechanism for isospectrality, which requires the existence of an intertwining relation. While the conditions of Theorem \ref{thm: MainTheorem} do not apply within the usual Newman--Penrose formalism, it is possible a different combination of complex variables, possibly in a different framework, in metric components or the curvature, could apply. We leave this question to future work. 

\vspace{0.2cm}
\noindent\textit{Conclusion}---In this letter, we have shown that isospectrality follows whenever the
linearized equations of motion reduce to a closed, complex-linear system
whose solutions reconstruct the radiative fields, with boundary conditions
that preserve the complex structure in Theorem \ref{thm: MainTheorem}, and that parity
isospectrality follows when the associated phase rotation exchanges the
parity sectors in Theorem \ref{thm: TheoremB}. The chiral-aligned class, given in Def.~\ref{def: ChiralAligned}, supplies a broad set of theories satisfying the criterion of Theorem \ref{thm: MainTheorem} and, with admissible backgrounds, Theorem \ref{thm: TheoremB}. Within it, our framework recovers the known parity isospectrality of Kerr, and additionally that of Kerr--de Sitter, and extends to Kerr--Newman spacetime in Einstein-Maxwell theory~\cite{Newman:1965my,Dias:2015wqa}. Starting from the structure of the equations of motion directly, Theorem \ref{thm: MainTheorem} establishes a general sufficient criterion for isospectrality given any action in the form of Eq.~\eqref{eq: Action}, without relying on the slow-rotation expansion or on constructing an explicit transformation between parity sectors. Our result also hints at why isospectrality is a fragile symmetry in many theories or spacetimes beyond vacuum GR \cite{Cano:2023jbk, Cano:2024ezp, Li:2023ulk, Li:2025fci, Chung:2024ira, Cardoso:2002pa, Kodama:2003jz, Cardoso:2001bb, Berti:2003ud, Weller:2024qvo, Wu:2025obg, Saketh:2024ojw, Pani:2009ss, Laeuger:2025zgb, Thorne:1968zz, Kokkotas:1999bd}, where the complex-linear structure of the GR perturbation equations is typically spoiled by algebraically-general background spacetimes.

Recent work has proposed that the guiding principle that ensures isospectrality should be the decoupling of self-dual variables \cite{Pereniguez:2026avs}. This is consistent with, and a limiting case of our findings as long as the equations do not couple to the anti-self-dual sector, which for real field configurations is the complex conjugate. Our criteria, however, are more general, since it requires only a complex-linear master system that reconstructs the radiative fields, whether or not that system arises from self-dual variables. Other recent work proposes a mechanism for breaking isospectrality in the eikonal limit of large multipole number $\ell$ \cite{Cano:2024wzo, Bah:2026aia}, conjecturing that isospectrality is equivalent to the absence of birefringence. While birefringence generically indicates the breaking of isospectrality, our work suggests the converse is false. For example, a Schwarzschild black hole with a thin shell breaks isospectrality \cite{Laeuger:2025zgb}, yet exhibits no amplitude, phase, or velocity birefringence. Reference~\cite{Bah:2026aia} reaches a similar conclusion, interpreting the absence of birefringence as an electromagnetic-type duality at the light ring. These studies, however, capture only leading-order $1/\ell$ contributions to the QNM spectrum and do not characterize isospectrality for all radiative multipoles. Our results are more general, guaranteeing isospectrality for every radiative multipole in the class of configurations satisfying Theorem~\ref{thm: MainTheorem}.

There are several avenues for future work. A clear direction is to search for other classes of theories and black holes that also yield complex-linear equations of motion, and other reconstruction methods can be implemented. Complex substructures in self-dual gravity \cite{Plebanski:1975wn, Green:2026nlt} and twistor theory \cite{Penrose:1967wn} may be helpful as well as hidden symmetries of the classical double-copy \cite{Monteiro:2014cda, Kent:2025pvu}. 

\vspace{0.4cm}
\noindent\textit{Acknowledgments---}
We thank Hector Silva for interesting discussions. C.~W., A.~L., and Y.~C.'s research is supported by the Simons Foundation (Award No. 568762),
the Brinson Foundation, and the National Science Foun-
dation (via Grants No. PHY-2011961 and No. PHY-
2011968). A.L. acknowledges support from the Fannie and John Hertz Foundation in the form of a Hertz Fellowship. D.~L. and N.~Y. acknowledge support from the Simons Foundation (via Award No. 896696), the Simons Foundation International (via Grant No. SFI-MPS-BH-00012593-01), and the NSF (via Grants No.~PHY-2512423). P.~W. acknowledges funding from the Deutsche Forschungsgemeinschaft (DFG) project number: 386119226.

\bibliographystyle{apsrev4-1}
\bibliography{reference}

@article{Carter:1968rr,
    author = "Carter, Brandon",
    title = "{Global structure of the Kerr family of gravitational fields}",
    doi = "10.1103/PhysRev.174.1559",
    journal = "Phys. Rev.",
    volume = "174",
    pages = "1559--1571",
    year = "1968"
}

@article{Carter:1968ks,
    author = "Carter, B.",
    title = "{Hamilton-Jacobi and Schrodinger separable solutions of Einstein's equations}",
    doi = "10.1007/BF03399503",
    journal = "Commun. Math. Phys.",
    volume = "10",
    number = "4",
    pages = "280--310",
    year = "1968"
}

@article{Cuzinatto:2016ehv,
    author = "Cuzinatto, R. R. and de Melo, C. A. M. and Medeiros, L. G. and Pompeia, P. J.",
    title = "{Scalar-multi-tensorial equivalence for higher order  $f\left( R,\nabla_{\mu} R,\nabla_{\mu_{1}}\nabla_{\mu_{2}}R,...,\nabla_{\mu_{1}}...\nabla_{\mu_{n} }R\right)$ theories of gravity}",
    eprint = "1603.01563",
    archivePrefix = "arXiv",
    primaryClass = "gr-qc",
    doi = "10.1103/PhysRevD.93.124034",
    journal = "Phys. Rev. D",
    volume = "93",
    number = "12",
    pages = "124034",
    year = "2016",
    note = "[Erratum: Phys.Rev.D 98, 029901 (2018)]"
}

@article{Suzuki:1998vy,
    author = "Suzuki, Hisao and Takasugi, Eiichi and Umetsu, Hiroshi",
    title = "{Perturbations of Kerr-de Sitter black hole and Heun's equations}",
    eprint = "gr-qc/9805064",
    archivePrefix = "arXiv",
    reportNumber = "EPHOU-98-005, OU-HET-296",
    doi = "10.1143/PTP.100.491",
    journal = "Prog. Theor. Phys.",
    volume = "100",
    pages = "491--505",
    year = "1998"
}

@article{Konoplya:2007zx,
    author = "Konoplya, R. A. and Zhidenko, A.",
    title = "{Decay of a charged scalar and Dirac fields in the Kerr-Newman-de Sitter background}",
    eprint = "0707.1890",
    archivePrefix = "arXiv",
    primaryClass = "hep-th",
    doi = "10.1103/PhysRevD.76.084018",
    journal = "Phys. Rev. D",
    volume = "76",
    number = "8",
    pages = "084018",
    year = "2007",
    note = "[Erratum: Phys.Rev.D 90, 029901 (2014)]"
}

@article{Pope:2025jgz,
    author = "Pope, C. N. and Rohrer, D. O. and Whiting, B. F.",
    title = "{Perturbations of black holes in Einstein-Maxwell-dilaton-axion theories}",
    eprint = "2508.04589",
    archivePrefix = "arXiv",
    primaryClass = "hep-th",
    reportNumber = "MI-HET-864",
    doi = "10.1103/1b6k-f38p",
    journal = "Phys. Rev. D",
    volume = "112",
    number = "12",
    pages = "124064",
    year = "2025"
}

@article{Thorne:1968zz,
    author = "Thorne, Kip S.",
    title = "{Gravitational Radiation Damping}",
    doi = "10.1103/PhysRevLett.21.320",
    journal = "Phys. Rev. Lett.",
    volume = "21",
    pages = "320--323",
    year = "1968"
}

@article{Saketh:2024ojw,
    author = "Saketh, M. V. S. and Maggio, Elisa",
    title = "{Quasinormal modes of slowly-spinning horizonless compact objects}",
    eprint = "2406.10070",
    archivePrefix = "arXiv",
    primaryClass = "gr-qc",
    doi = "10.1103/PhysRevD.110.084038",
    journal = "Phys. Rev. D",
    volume = "110",
    number = "8",
    pages = "084038",
    year = "2024"
}

@article{Davis:1971kk,
  title = {Gravitational Radiation from a Particle Falling Radially into a Schwarzschild Black Hole},
  author = {Davis, Marc and Ruffini, Remo and Press, William H. and Price, Richard H.},
  journal = {Phys. Rev. Lett.},
  volume = {27},
  issue = {21},
  pages = {1466--1469},
  numpages = {0},
  year = {1971},
  month = {Nov},
  publisher = {American Physical Society},
  doi = {10.1103/PhysRevLett.27.1466},
  url = {https://link.aps.org/doi/10.1103/PhysRevLett.27.1466}
}

@article{Wu:2025obg,
    author = "Wu, David G. and Hussain, Asad and Zimmerman, Aaron",
    title = "{Computing spectral shifts for Johannsen-Psaltis black holes}",
    eprint = "2512.14679",
    archivePrefix = "arXiv",
    primaryClass = "gr-qc",
    doi = "10.1103/3hm4-89xc",
    journal = "Phys. Rev. D",
    volume = "113",
    number = "10",
    pages = "104059",
    year = "2026"
}

@article{Berti:2025hly,
    author = "Berti, Emanuele and others",
    editor = "Berti, Emanuele and Cardoso, Vitor and Carullo, Gregorio",
    title = "{Black hole spectroscopy: from theory to experiment}",
    eprint = "2505.23895",
    archivePrefix = "arXiv",
    primaryClass = "gr-qc",
    reportNumber = "RUP-25-10; YITP-25-64; RIKEN-iTHEMS-Report-25",
    doi = "10.1088/1361-6382/ae59e2",
    journal = "Class. Quant. Grav.",
    volume = "43",
    number = "12",
    pages = "123001",
    year = "2026"
}

@misc{SupplementalMaterial,
  title        = {Supplemental Material for ``Isospectrality from Chirality''},
  note         = {URL insert},
}

@article{PLEBANSKI197698,
title = {Rotating, charged, and uniformly accelerating mass in general relativity},
journal = {Annals of Physics},
volume = {98},
number = {1},
pages = {98-127},
year = {1976},
issn = {0003-4916},
doi = {https://doi.org/10.1016/0003-4916(76)90240-2},
url = {https://www.sciencedirect.com/science/article/pii/0003491676902402},
author = {J.F Plebanski and M Demianski}
}

@article{Bah:2026aia,
    author = "Bah, Ibrahima and Berti, Emanuele and De Luca, Valerio and Ganchev, Bogdan and Pere{\~n}iguez, David",
    title = "{Gravitational electric-magnetic duality at the light ring and quasinormal mode isospectrality in effective field theories}",
    eprint = "2605.03018",
    archivePrefix = "arXiv",
    primaryClass = "gr-qc",
    month = "5",
    year = "2026"
}

@article{Kodama:2003jz,
    author = "Kodama, Hideo and Ishibashi, Akihiro",
    title = "{A Master equation for gravitational perturbations of maximally symmetric black holes in higher dimensions}",
    eprint = "hep-th/0305147",
    archivePrefix = "arXiv",
    doi = "10.1143/PTP.110.701",
    journal = "Prog. Theor. Phys.",
    volume = "110",
    pages = "701--722",
    year = "2003"
}

@article{Penrose:1967wn,
    author = "Penrose, R.",
    title = "{Twistor algebra}",
    doi = "10.1063/1.1705200",
    journal = "J. Math. Phys.",
    volume = "8",
    pages = "345",
    year = "1967"
}

@article{Plebanski:1975wn,
    author = "Plebanski, J. F.",
    title = "{Some solutions of complex Einstein equations}",
    doi = "10.1063/1.522505",
    journal = "J. Math. Phys.",
    volume = "16",
    pages = "2395--2402",
    year = "1975"
}

@article{Weller:2026kk,
  author       = {Weller, Colin and Li, Dongjun and Wagle, Pratik and Chen, Yanbei and Yunes, Nic\'olas},
  title        = {Metric Reconstruction for Generic Perturbations of the Kerr--Newman Black Hole},
  note         = {To be published},
  year         = {2026},
}

@article{Pani:2013ija,
    author = "Pani, Paolo and Berti, Emanuele and Gualtieri, Leonardo",
    title = "{Gravitoelectromagnetic Perturbations of Kerr-Newman Black Holes: Stability and Isospectrality in the Slow-Rotation Limit}",
    eprint = "1304.1160",
    archivePrefix = "arXiv",
    primaryClass = "gr-qc",
    doi = "10.1103/PhysRevLett.110.241103",
    journal = "Phys. Rev. Lett.",
    volume = "110",
    number = "24",
    pages = "241103",
    year = "2013"
}

@article{Mitman:2025hgy,
    author = "Mitman, Keefe and others",
    title = "{Probing the ringdown perturbation in binary black hole coalescences with an improved quasinormal mode extraction algorithm}",
    eprint = "2503.09678",
    archivePrefix = "arXiv",
    primaryClass = "gr-qc",
    doi = "10.1103/qq1g-jlnw",
    journal = "Phys. Rev. D",
    volume = "112",
    number = "6",
    pages = "064016",
    year = "2025"
}

@article{Li:2025fci,
    author = "Li, Dongjun and Wagle, Pratik and Chen, Yanbei and Yunes, Nicol{\'a}s",
    title = "{Perturbations of spinning black holes in dynamical Chern-Simons gravity: Slow rotation quasinormal modes}",
    eprint = "2503.15606",
    archivePrefix = "arXiv",
    primaryClass = "gr-qc",
    doi = "10.1103/yfw6-32yl",
    journal = "Phys. Rev. D",
    volume = "112",
    number = "4",
    pages = "044005",
    year = "2025"
}

@article{Wald:1973wwa,
    author = "Wald, Robert M.",
    title = "{On perturbations of a Kerr black hole}",
    doi = "10.1063/1.1666203",
    journal = "J. Math. Phys.",
    volume = "14",
    number = "10",
    pages = "1453--1461",
    year = "1973"
}

@article{Akcay:2010vt,
    author = "Akcay, Sarp and Matzner, Richard A.",
    title = "{Kerr-de Sitter Universe}",
    eprint = "1011.0479",
    archivePrefix = "arXiv",
    primaryClass = "gr-qc",
    doi = "10.1088/0264-9381/28/8/085012",
    journal = "Class. Quant. Grav.",
    volume = "28",
    pages = "085012",
    year = "2011"
}

@article{Frolov:2017kze,
    author = "Frolov, Valeri P. and Krtous, Pavel and Kubiznak, David",
    title = "{Black holes, hidden symmetries, and complete integrability}",
    eprint = "1705.05482",
    archivePrefix = "arXiv",
    primaryClass = "gr-qc",
    doi = "10.1007/s41114-017-0009-9",
    journal = "Living Rev. Rel.",
    volume = "20",
    number = "1",
    pages = "6",
    year = "2017"
}

@article{Dias:2013sdc,
    author = "Dias, {\'O}scar J. C. and Santos, Jorge E.",
    title = "{Boundary Conditions for Kerr-AdS Perturbations}",
    eprint = "1302.1580",
    archivePrefix = "arXiv",
    primaryClass = "hep-th",
    doi = "10.1007/JHEP10(2013)156",
    journal = "JHEP",
    volume = "10",
    pages = "156",
    year = "2013"
}

@article{Henneaux:1985tv,
    author = "Henneaux, M. and Teitelboim, C.",
    title = "{Asymptotically anti-De Sitter Spaces}",
    doi = "10.1007/BF01205790",
    journal = "Commun. Math. Phys.",
    volume = "98",
    pages = "391--424",
    year = "1985"
}

@article{Kinnersley:1969zza,
    author = "Kinnersley, William",
    title = "{Type D Vacuum Metrics}",
    doi = "10.1063/1.1664958",
    journal = "J. Math. Phys.",
    volume = "10",
    pages = "1195--1203",
    year = "1969"
}

@article{Darboux1882,
  author  = {Darboux, G.},
  title   = {Sur une proposition relative aux \'equations lin\'eaires},
  journal = {C. R. Acad. Sci. (Paris)},
  volume  = {94},
  pages   = {1456},
  year    = {1882}
}

@article{Golfand:1971iw,
    author = "Golfand, Yu. A. and Likhtman, E. P.",
    editor = "Salam, A. and Sezgin, E.",
    title = "{Extension of the Algebra of Poincare Group Generators and Violation of p Invariance}",
    doi = "10.1142/9789814542340_0001",
    journal = "JETP Lett.",
    volume = "13",
    pages = "323--326",
    year = "1971"
}

@inbook{FernandezC:2018cdo,
    author = "Fernandez C, David J.",
    title = "{Trends in Supersymmetric Quantum Mechanics}",
    eprint = "1811.06449",
    archivePrefix = "arXiv",
    primaryClass = "quant-ph",
    doi = "10.1007/978-3-030-20087-9_2",
    year = "2019"
}

@book{Cooper:2001weo,
    author = "Cooper, Fred and Khare, Avinash and Sukhatme, Uday",
    title = "{Supersymmetry and quantum mechanics}",
    doi = "10.1142/4687",
    publisher = "World Scientific",
    month = "6",
    year = "2001"
}

@article{Cooper:1982dm,
    author = "Cooper, Fred and Freedman, Barry",
    title = "{Aspects of Supersymmetric Quantum Mechanics}",
    reportNumber = "LA-UR-82-2144",
    doi = "10.1016/0003-4916(83)90034-9",
    journal = "Annals Phys.",
    volume = "146",
    pages = "262",
    year = "1983"
}

@article{Witten:1981nf,
    author = "Witten, Edward",
    title = "{Dynamical Breaking of Supersymmetry}",
    reportNumber = "Print-81-0317 (PRINCETON)",
    doi = "10.1016/0550-3213(81)90006-7",
    journal = "Nucl. Phys. B",
    volume = "188",
    pages = "513",
    year = "1981"
}

@article{Penrose:1960eq,
    author = "Penrose, R.",
    title = "{A Spinor approach to general relativity}",
    doi = "10.1016/0003-4916(60)90021-X",
    journal = "Annals Phys.",
    volume = "10",
    pages = "171--201",
    year = "1960"
}

@article{Dias:2015wqa,
    author = "Dias, Oscar J. C. and Godazgar, Mahdi and Santos, Jorge E.",
    title = "{Linear Mode Stability of the Kerr-Newman Black Hole and Its Quasinormal Modes}",
    eprint = "1501.04625",
    archivePrefix = "arXiv",
    primaryClass = "gr-qc",
    doi = "10.1103/PhysRevLett.114.151101",
    journal = "Phys. Rev. Lett.",
    volume = "114",
    number = "15",
    pages = "151101",
    year = "2015"
}

@article{Isi:2019aib,
    author = "Isi, Maximiliano and Giesler, Matthew and Farr, Will M. and Scheel, Mark A. and Teukolsky, Saul A.",
    title = "{Testing the no-hair theorem with GW150914}",
    eprint = "1905.00869",
    archivePrefix = "arXiv",
    primaryClass = "gr-qc",
    reportNumber = "LIGO-P1900135",
    doi = "10.1103/PhysRevLett.123.111102",
    journal = "Phys. Rev. Lett.",
    volume = "123",
    number = "11",
    pages = "111102",
    year = "2019"
}

@article{Price:2006ke,
    author = "Price, Larry R. and Shankar, Karthik and Whiting, Bernard F.",
    title = "{On the existence of radiation gauges in Petrov type II spacetimes}",
    eprint = "gr-qc/0611070",
    archivePrefix = "arXiv",
    doi = "10.1088/0264-9381/24/9/014",
    journal = "Class. Quant. Grav.",
    volume = "24",
    pages = "2367--2388",
    year = "2007"
}

@article{Li:2023ulk,
    author = "Li, Dongjun and Hussain, Asad and Wagle, Pratik and Chen, Yanbei and Yunes, Nicol\'as and Zimmerman, Aaron",
    title = "{Isospectrality breaking in the Teukolsky formalism}",
    eprint = "2310.06033",
    archivePrefix = "arXiv",
    primaryClass = "gr-qc",
    reportNumber = "UTWI-14-2023",
    doi = "10.1103/PhysRevD.109.104026",
    journal = "Phys. Rev. D",
    volume = "109",
    number = "10",
    pages = "104026",
    year = "2024"
}

@article{Wagle:2023fwl,
    author = "Wagle, Pratik and Li, Dongjun and Chen, Yanbei and Yunes, Nicolas",
    title = "{Perturbations of spinning black holes in dynamical Chern-Simons gravity: Slow rotation equations}",
    eprint = "2311.07706",
    archivePrefix = "arXiv",
    primaryClass = "gr-qc",
    doi = "10.1103/PhysRevD.109.104029",
    journal = "Phys. Rev. D",
    volume = "109",
    number = "10",
    pages = "104029",
    year = "2024"
}

@article{Cano:2023jbk,
    author = "Cano, Pablo A. and Fransen, Kwinten and Hertog, Thomas and Maenaut, Simon",
    title = "{Quasinormal modes of rotating black holes in higher-derivative gravity}",
    eprint = "2307.07431",
    archivePrefix = "arXiv",
    primaryClass = "gr-qc",
    doi = "10.1103/PhysRevD.108.124032",
    journal = "Phys. Rev. D",
    volume = "108",
    number = "12",
    pages = "124032",
    year = "2023"
}

@article{Maselli:2015tta,
    author = "Maselli, Andrea and Pani, Paolo and Gualtieri, Leonardo and Ferrari, Valeria",
    title = "{Rotating black holes in Einstein-Dilaton-Gauss-Bonnet gravity with finite coupling}",
    eprint = "1507.00680",
    archivePrefix = "arXiv",
    primaryClass = "gr-qc",
    doi = "10.1103/PhysRevD.92.083014",
    journal = "Phys. Rev. D",
    volume = "92",
    number = "8",
    pages = "083014",
    year = "2015"
}

@article{Chandrasekhar:1975nkd,
    author = "Chandrasekhar, Subrahmanyan",
    title = "{On the equations governing the perturbations of the Schwarzschild black hole}",
    doi = "10.1098/rspa.1975.0066",
    journal = "Proc. Roy. Soc. Lond. A",
    volume = "343",
    number = "1634",
    pages = "289--298",
    year = "1975"
}

@article{Cardoso:2001bb,
    author = "Cardoso, Vitor and Lemos, Jose P. S.",
    title = "{Quasinormal modes of Schwarzschild anti-de Sitter black holes: Electromagnetic and gravitational perturbations}",
    eprint = "gr-qc/0105103",
    archivePrefix = "arXiv",
    reportNumber = "DF-IST-4-2001",
    doi = "10.1103/PhysRevD.64.084017",
    journal = "Phys. Rev. D",
    volume = "64",
    pages = "084017",
    year = "2001"
}

@book{Penrose:1986ca,
    author = "Penrose, R. and Rindler, W.",
    title = "{SPINORS AND SPACE-TIME. VOL. 2: SPINOR AND TWISTOR METHODS IN SPACE-TIME GEOMETRY}",
    doi = "10.1017/CBO9780511524486",
    isbn = "978-0-521-34786-0, 978-0-511-86842-9",
    publisher = "Cambridge University Press",
    series = "Cambridge Monographs on Mathematical Physics",
    month = "4",
    year = "1988"
}

@ARTICLE{Goldberg2009,
       author = {{Goldberg}, J.~N. and {Sachs}, R.~K.},
        title = "{Republication of: A theorem on Petrov types}",
      journal = {General Relativity and Gravitation},
         year = 2009,
        month = feb,
       volume = {41},
       number = {2},
        pages = {433-444},
          doi = {10.1007/s10714-008-0722-5},
       adsurl = {https://ui.adsabs.harvard.edu/abs/2009GReGr..41..433G}
}

@article{Franchini:2025csk,
    author = "Franchini, Nicola",
    title = "{Quasinormal Modes Ratios as Agnostic Test of General Relativity}",
    eprint = "2511.02010",
    archivePrefix = "arXiv",
    primaryClass = "gr-qc",
    doi = "10.1103/cm6z-43t3",
    journal = "Phys. Rev. Lett.",
    volume = "136",
    number = "22",
    pages = "221401",
    year = "2026"
}

@article{Brito:2013yxa,
    author = "Brito, Richard and Cardoso, Vitor and Pani, Paolo",
    title = "{Partially massless gravitons do not destroy general relativity black holes}",
    eprint = "1306.0908",
    archivePrefix = "arXiv",
    primaryClass = "gr-qc",
    doi = "10.1103/PhysRevD.87.124024",
    journal = "Phys. Rev. D",
    volume = "87",
    number = "12",
    pages = "124024",
    year = "2013"
}

@article{Ripley:2020xby,
    author = "Ripley, Justin L. and Loutrel, Nicholas and Giorgi, Elena and Pretorius, Frans",
    title = "{Numerical computation of second order vacuum perturbations of Kerr black holes}",
    eprint = "2010.00162",
    archivePrefix = "arXiv",
    primaryClass = "gr-qc",
    doi = "10.1103/PhysRevD.103.104018",
    journal = "Phys. Rev. D",
    volume = "103",
    pages = "104018",
    year = "2021"
}

@article{Wald:1978vm,
    author = "Wald, Robert M.",
    title = "{Construction of Solutions of Gravitational, Electromagnetic, Or Other Perturbation Equations from Solutions of Decoupled Equations}",
    doi = "10.1103/PhysRevLett.41.203",
    journal = "Phys. Rev. Lett.",
    volume = "41",
    pages = "203--206",
    year = "1978"
}

@article{Chrzanowski:1976jy,
    author = "Chrzanowski, P. L.",
    title = "{Applications of Metric Perturbations of a Rotating Black Hole: Distortion of the Event Horizon}",
    doi = "10.1103/PhysRevD.13.806",
    journal = "Phys. Rev. D",
    volume = "13",
    pages = "806--818",
    year = "1976"
}

@article{Li:2022pcy,
    author = "Li, Dongjun and Wagle, Pratik and Chen, Yanbei and Yunes, Nicol\'as",
    title = "{Perturbations of Spinning Black Holes beyond General Relativity: Modified Teukolsky Equation}",
    eprint = "2206.10652",
    archivePrefix = "arXiv",
    primaryClass = "gr-qc",
    doi = "10.1103/PhysRevX.13.021029",
    journal = "Phys. Rev. X",
    volume = "13",
    number = "2",
    pages = "021029",
    year = "2023"
}

@article{Cohen_Kegeles_1975, 
    title={Space-time perturbations}, 
    volume={54}, 
    ISSN={0375-9601}, 
    DOI={10.1016/0375-9601(75)90583-6}, 
    number={1}, 
    journal={Physics Letters A}, 
    author={Cohen, J. M. and Kegeles, L. S.}, 
    year={1975}, month={Aug}, 
    pages={5-7}
}

@article{Nichols:2012jn,
    author = "Nichols, David A. and Zimmerman, Aaron and Chen, Yanbei and Lovelace, Geoffrey and Matthews, Keith D. and Owen, Robert and Zhang, Fan and Thorne, Kip S.",
    title = "{Visualizing Spacetime Curvature via Frame-Drag Vortexes and Tidal Tendexes III. Quasinormal Pulsations of Schwarzschild and Kerr Black Holes}",
    eprint = "1208.3038",
    archivePrefix = "arXiv",
    primaryClass = "gr-qc",
    doi = "10.1103/PhysRevD.86.104028",
    journal = "Phys. Rev. D",
    volume = "86",
    pages = "104028",
    year = "2012"
}

@article{Stein:2014xba,
    author = "Stein, Leo C.",
    title = "{Rapidly rotating black holes in dynamical Chern-Simons gravity: Decoupling limit solutions and breakdown}",
    eprint = "1407.2350",
    archivePrefix = "arXiv",
    primaryClass = "gr-qc",
    doi = "10.1103/PhysRevD.90.044061",
    journal = "Phys. Rev. D",
    volume = "90",
    number = "4",
    pages = "044061",
    year = "2014"
}

@article{Dolan:2009nk,
    author = "Dolan, Sam R. and Ottewill, Adrian C.",
    editor = "Uranga, A. M.",
    title = "{On an Expansion Method for Black Hole Quasinormal Modes and Regge Poles}",
    eprint = "0908.0329",
    archivePrefix = "arXiv",
    primaryClass = "gr-qc",
    doi = "10.1088/0264-9381/26/22/225003",
    journal = "Class. Quant. Grav.",
    volume = "26",
    pages = "225003",
    year = "2009"
}

@article{Petrov:2000bs,
    author = "Petrov, A. Z.",
    title = "{The Classification of spaces defining gravitational fields}",
    doi = "10.1023/A:1001910908054",
    journal = "Gen. Rel. Grav.",
    volume = "32",
    pages = "1661--1663",
    year = "2000"
}

@article{Kegeles:1979an,
    author = "Kegeles, L. S. and Cohen, J. M.",
    title = "{CONSTRUCTIVE PROCEDURE FOR PERTURBATIONS OF SPACE-TIMES}",
    doi = "10.1103/PhysRevD.19.1641",
    journal = "Phys. Rev. D",
    volume = "19",
    pages = "1641--1664",
    year = "1979"
}

@article{Cardoso:2018ptl,
    author = "Cardoso, Vitor and Kimura, Masashi and Maselli, Andrea and Senatore, Leonardo",
    title = "{Black Holes in an Effective Field Theory Extension of General Relativity}",
    eprint = "1808.08962",
    archivePrefix = "arXiv",
    primaryClass = "gr-qc",
    doi = "10.1103/PhysRevLett.121.251105",
    journal = "Phys. Rev. Lett.",
    volume = "121",
    number = "25",
    pages = "251105",
    year = "2018"
}

@article{Tattersall:2018nve,
    author = "Tattersall, Oliver J. and Ferreira, Pedro G.",
    title = "{Quasinormal modes of black holes in Horndeski gravity}",
    eprint = "1804.08950",
    archivePrefix = "arXiv",
    primaryClass = "gr-qc",
    doi = "10.1103/PhysRevD.97.104047",
    journal = "Phys. Rev. D",
    volume = "97",
    number = "10",
    pages = "104047",
    year = "2018"
}

@article{Newman:1961qr,
    author = "Newman, Ezra and Penrose, Roger",
    title = "{An Approach to gravitational radiation by a method of spin coefficients}",
    doi = "10.1063/1.1724257",
    journal = "J. Math. Phys.",
    volume = "3",
    pages = "566--578",
    year = "1962"
}

@article{Franchini:2022axs,
    author = {Franchini, Nicola and V\"olkel, Sebastian H.},
    title = "{Parametrized quasinormal mode framework for non-Schwarzschild metrics}",
    eprint = "2210.14020",
    archivePrefix = "arXiv",
    primaryClass = "gr-qc",
    doi = "10.1103/PhysRevD.107.124063",
    journal = "Phys. Rev. D",
    volume = "107",
    number = "12",
    pages = "124063",
    year = "2023"
}

@book{Thorne:1986iy,
    editor = "Thorne, Kip S. and Price, R. H. and Macdonald, D. A.",
    title = "{BLACK HOLES: THE MEMBRANE PARADIGM}",
    isbn = "978-0-300-03770-8",
    year = "1986"
}

@article{Price:1971fb,
    author = "Price, Richard H.",
    title = "{Nonspherical perturbations of relativistic gravitational collapse. 1. Scalar and gravitational perturbations}",
    doi = "10.1103/PhysRevD.5.2419",
    journal = "Phys. Rev. D",
    volume = "5",
    pages = "2419--2438",
    year = "1972"
}

@article{Robinson:1975bv,
    author = "Robinson, D. C.",
    title = "{Uniqueness of the Kerr black hole}",
    doi = "10.1103/PhysRevLett.34.905",
    journal = "Phys. Rev. Lett.",
    volume = "34",
    pages = "905--906",
    year = "1975"
}

@article{Carter:1971zc,
    author = "Carter, B.",
    title = "{Axisymmetric Black Hole Has Only Two Degrees of Freedom}",
    doi = "10.1103/PhysRevLett.26.331",
    journal = "Phys. Rev. Lett.",
    volume = "26",
    pages = "331--333",
    year = "1971"
}

@article{Israel:1967wq,
    author = "Israel, Werner",
    title = "{Event horizons in static vacuum space-times}",
    doi = "10.1103/PhysRev.164.1776",
    journal = "Phys. Rev.",
    volume = "164",
    pages = "1776--1779",
    year = "1967"
}

@article{Teukolsky:1974yv,
    author = "Teukolsky, S. A. and Press, W. H.",
    title = "{Perturbations of a rotating black hole. III - Interaction of the hole with gravitational and electromagnetic radiation}",
    doi = "10.1086/153180",
    journal = "Astrophys. J.",
    volume = "193",
    pages = "443--461",
    year = "1974"
}

@article{Cano:2020cao,
    author = "Cano, Pablo A. and Fransen, Kwinten and Hertog, Thomas",
    title = "{Ringing of rotating black holes in higher-derivative gravity}",
    eprint = "2005.03671",
    archivePrefix = "arXiv",
    primaryClass = "gr-qc",
    doi = "10.1103/PhysRevD.102.044047",
    journal = "Phys. Rev. D",
    volume = "102",
    number = "4",
    pages = "044047",
    year = "2020"
}

@article{Kokkotas:1999bd,
    author = "Kokkotas, Kostas D. and Schmidt, Bernd G.",
    title = "{Quasinormal modes of stars and black holes}",
    eprint = "gr-qc/9909058",
    archivePrefix = "arXiv",
    doi = "10.12942/lrr-1999-2",
    journal = "Living Rev. Rel.",
    volume = "2",
    pages = "2",
    year = "1999"
}

@article{Leaver:1985ax,
    author = "Leaver, E. W.",
    title = "{An analytic representation for the quasi normal modes of Kerr black holes}",
    doi = "10.1098/rspa.1985.0119",
    journal = "Proc. Roy. Soc. Lond. A",
    volume = "402",
    pages = "285--298",
    year = "1985"
}

@article{Aksteiner:2022bwr,
    author = "Aksteiner, Steffen and Araneda, Bernardo",
    title = {{K{\"a}hler Geometry of Black Holes and Gravitational Instantons}},
    eprint = "2207.10039",
    archivePrefix = "arXiv",
    primaryClass = "gr-qc",
    doi = "10.1103/PhysRevLett.130.161502",
    journal = "Phys. Rev. Lett.",
    volume = "130",
    number = "16",
    pages = "161502",
    year = "2023"
}

@article{Newman:1965my,
    author = "Newman, E T. and Couch, R. and Chinnapared, K. and Exton, A. and Prakash, A. and Torrence, R.",
    title = "{Metric of a Rotating, Charged Mass}",
    doi = "10.1063/1.1704351",
    journal = "J. Math. Phys.",
    volume = "6",
    pages = "918--919",
    year = "1965"
}

@article{Weller:2024qvo,
    author = "Weller, Colin and Li, Dongjun and Chen, Yanbei",
    title = "{Spectroscopy of bumpy BHs: The nonrotating case}",
    eprint = "2405.20934",
    archivePrefix = "arXiv",
    primaryClass = "gr-qc",
    doi = "10.1103/PhysRevD.111.024064",
    journal = "Phys. Rev. D",
    volume = "111",
    number = "2",
    pages = "024064",
    year = "2025"
}

@article{Cardoso:2016rao,
    author = "Cardoso, Vitor and Franzin, Edgardo and Pani, Paolo",
    title = "{Is the gravitational-wave ringdown a probe of the event horizon?}",
    eprint = "1602.07309",
    archivePrefix = "arXiv",
    primaryClass = "gr-qc",
    doi = "10.1103/PhysRevLett.116.171101",
    journal = "Phys. Rev. Lett.",
    volume = "116",
    number = "17",
    pages = "171101",
    year = "2016",
    note = "[Erratum: Phys.Rev.Lett. 117, 089902 (2016)]"
}

@article{Yunes:2009hc,
    author = "Yunes, Nicolas and Pretorius, Frans",
    title = "{Dynamical Chern-Simons Modified Gravity. I. Spinning Black Holes in the Slow-Rotation Approximation}",
    eprint = "0902.4669",
    archivePrefix = "arXiv",
    primaryClass = "gr-qc",
    doi = "10.1103/PhysRevD.79.084043",
    journal = "Phys. Rev. D",
    volume = "79",
    pages = "084043",
    year = "2009"
}

@article{Regge:1957td,
    author = "Regge, Tullio and Wheeler, John A.",
    title = "{Stability of a Schwarzschild singularity}",
    doi = "10.1103/PhysRev.108.1063",
    journal = "Phys. Rev.",
    volume = "108",
    pages = "1063--1069",
    year = "1957"
}

@Book{Wald:1984cw,
     author    = "Wald, R.M.",
     publisher = "The University of Chicago Press",
     year      = "1984",
     title     = "General Relativity",
     address   = "Chicago"
}

@article{Chrzanowski:1975wv,
    author = "Chrzanowski, P. L.",
    title = "{Vector Potential and Metric Perturbations of a Rotating Black Hole}",
    doi = "10.1103/PhysRevD.11.2042",
    journal = "Phys. Rev. D",
    volume = "11",
    pages = "2042--2062",
    year = "1975"
}

@article{Ayzenberg:2014aka,
    author = "Ayzenberg, Dimitry and Yunes, Nicolas",
    title = "{Slowly-Rotating Black Holes in Einstein-Dilaton-Gauss-Bonnet Gravity: Quadratic Order in Spin Solutions}",
    eprint = "1405.2133",
    archivePrefix = "arXiv",
    primaryClass = "gr-qc",
    doi = "10.1103/PhysRevD.90.044066",
    journal = "Phys. Rev. D",
    volume = "90",
    pages = "044066",
    year = "2014",
    note = "[Erratum: Phys.Rev.D 91, 069905 (2015)]"
}

@article{Chandrasekhar:1975zza,
    author = "Chandrasekhar, S. and Detweiler, Steven L.",
    title = "{The quasi-normal modes of the Schwarzschild black hole}",
    doi = "10.1098/rspa.1975.0112",
    journal = "Proc. Roy. Soc. Lond. A",
    volume = "344",
    pages = "441--452",
    year = "1975"
}

@article{Berti:2003ud,
    author = "Berti, E. and Kokkotas, K. D.",
    title = {{Quasinormal modes of Reissner-Nordstr{\"o}m-anti-de Sitter black holes: Scalar, electromagnetic and gravitational perturbations}},
    eprint = "gr-qc/0301052",
    archivePrefix = "arXiv",
    doi = "10.1103/PhysRevD.67.064020",
    journal = "Phys. Rev. D",
    volume = "67",
    pages = "064020",
    year = "2003"
}

@article{Anderson:1991kx,
    author = "Anderson, A. and Price, R. H.",
    title = "{Intertwining of the equations of black hole perturbations}",
    doi = "10.1103/PhysRevD.43.3147",
    journal = "Phys. Rev. D",
    volume = "43",
    pages = "3147--3154",
    year = "1991"
}

@article{Teukolsky:1973ha,
    author = "Teukolsky, Saul A.",
    title = "{Perturbations of a rotating black hole. 1. Fundamental equations for gravitational electromagnetic and neutrino field perturbations}",
    doi = "10.1086/152444",
    journal = "Astrophys. J.",
    volume = "185",
    pages = "635--647",
    year = "1973"
}

@article{Teukolsky:1972my,
    author = "Teukolsky, S. A.",
    title = "{Rotating black holes - separable wave equations for gravitational and electromagnetic perturbations}",
    reportNumber = "OAP-291",
    doi = "10.1103/PhysRevLett.29.1114",
    journal = "Phys. Rev. Lett.",
    volume = "29",
    pages = "1114--1118",
    year = "1972"
}

@article{Cano:2023tmv,
    author = "Cano, Pablo A. and Fransen, Kwinten and Hertog, Thomas and Maenaut, Simon",
    title = "{Universal Teukolsky equations and black hole perturbations in higher-derivative gravity}",
    eprint = "2304.02663",
    archivePrefix = "arXiv",
    primaryClass = "gr-qc",
    doi = "10.1103/PhysRevD.108.024040",
    journal = "Phys. Rev. D",
    volume = "108",
    number = "2",
    pages = "024040",
    year = "2023"
}

@article{Press:1971wr,
    author = "Press, William H.",
    title = "{Long Wave Trains of Gravitational Waves from a Vibrating Black Hole}",
    reportNumber = "OAP-262",
    doi = "10.1086/180849",
    journal = "Astrophys. J. Lett.",
    volume = "170",
    pages = "L105--L108",
    year = "1971"
}

@article{Silva:2019scu,
    author = "Silva, Hector O. and Glampedakis, Kostas",
    title = "{Eikonal quasinormal modes of black holes beyond general relativity. II. Generalized scalar-tensor perturbations}",
    eprint = "1912.09286",
    archivePrefix = "arXiv",
    primaryClass = "gr-qc",
    doi = "10.1103/PhysRevD.101.044051",
    journal = "Phys. Rev. D",
    volume = "101",
    number = "4",
    pages = "044051",
    year = "2020"
}

@article{Moncrief:1974am,
    author = "Moncrief, V.",
    title = "{Gravitational perturbations of spherically symmetric systems. I. The exterior problem.}",
    doi = "10.1016/0003-4916(74)90173-0",
    journal = "Annals Phys.",
    volume = "88",
    pages = "323--342",
    year = "1974"
}

@article{Vishveshwara:1970zz,
    author = "Vishveshwara, C. V.",
    title = "{Scattering of Gravitational Radiation by a Schwarzschild Black-hole}",
    doi = "10.1038/227936a0",
    journal = "Nature",
    volume = "227",
    pages = "936--938",
    year = "1970"
}

@article{Delsate:2018ome,
    author = "Delsate, Terence and Herdeiro, Carlos and Radu, Eugen",
    title = "{Non-perturbative spinning black holes in dynamical Chern\textendash{}Simons gravity}",
    eprint = "1806.06700",
    archivePrefix = "arXiv",
    primaryClass = "gr-qc",
    doi = "10.1016/j.physletb.2018.09.060",
    journal = "Phys. Lett. B",
    volume = "787",
    pages = "8--15",
    year = "2018"
}

@article{Li:2026rkf,
    author = "Li, Dongjun and Yunes, Nicol{\'a}s",
    title = "{Metric Reconstruction for Generic Black-Hole Perturbations}",
    eprint = "2605.11080",
    archivePrefix = "arXiv",
    primaryClass = "gr-qc",
    month = "5",
    year = "2026"
}

@article{Gibbons:1987ps,
    author = "Gibbons, G. W. and Maeda, Kei-ichi",
    title = "{Black Holes and Membranes in Higher Dimensional Theories with Dilaton Fields}",
    reportNumber = "UTAP-48-87, LPTENS-87-10",
    doi = "10.1016/0550-3213(88)90006-5",
    journal = "Nucl. Phys. B",
    volume = "298",
    pages = "741--775",
    year = "1988"
}

@article{Tattersall:2018axd,
    author = "Tattersall, Oliver J.",
    title = "{Kerr{\textendash}(anti{\textendash})de Sitter black holes: Perturbations and quasinormal modes in the slow rotation limit}",
    eprint = "1808.10758",
    archivePrefix = "arXiv",
    primaryClass = "gr-qc",
    doi = "10.1103/PhysRevD.98.104013",
    journal = "Phys. Rev. D",
    volume = "98",
    number = "10",
    pages = "104013",
    year = "2018"
}

@article{Gerlach:1979rw,
    author = "Gerlach, U. H. and Sengupta, U. K.",
    title = "{GAUGE INVARIANT PERTURBATIONS ON MOST GENERAL SPHERICALLY SYMMETRIC SPACE-TIMES}",
    doi = "10.1103/PhysRevD.19.2268",
    journal = "Phys. Rev. D",
    volume = "19",
    pages = "2268--2272",
    year = "1979"
}

@article{Zerilli:1970wzz,
    author = "Zerilli, F. J.",
    title = "{Gravitational field of a particle falling in a schwarzschild geometry analyzed in tensor harmonics}",
    doi = "10.1103/PhysRevD.2.2141",
    journal = "Phys. Rev. D",
    volume = "2",
    pages = "2141--2160",
    year = "1970"
}

@article{Cano:2024ezp,
    author = {Cano, Pablo A. and Capuano, Lodovico and Franchini, Nicola and Maenaut, Simon and V{\"o}lkel, Sebastian H.},
    title = "{Higher-derivative corrections to the Kerr quasinormal mode spectrum}",
    eprint = "2409.04517",
    archivePrefix = "arXiv",
    primaryClass = "gr-qc",
    doi = "10.1103/PhysRevD.110.124057",
    journal = "Phys. Rev. D",
    volume = "110",
    number = "12",
    pages = "124057",
    year = "2024"
}

@book{Stephani:2003tm,
    author = "Stephani, Hans and Kramer, D. and MacCallum, Malcolm A. H. and Hoenselaers, Cornelius and Herlt, Eduard",
    title = "{Exact solutions of Einstein's field equations}",
    doi = "10.1017/CBO9780511535185",
    isbn = "978-0-521-46702-5, 978-0-511-05917-9",
    publisher = "Cambridge Univ. Press",
    address = "Cambridge",
    series = "Cambridge Monographs on Mathematical Physics",
    year = "2003"
}

@article{Pani:2009ss,
    author = "Pani, Paolo and Berti, Emanuele and Cardoso, Vitor and Chen, Yanbei and Norte, Richard",
    title = "{Gravitational wave signatures of the absence of an event horizon. I. Nonradial oscillations of a thin-shell gravastar}",
    eprint = "0909.0287",
    archivePrefix = "arXiv",
    primaryClass = "gr-qc",
    doi = "10.1103/PhysRevD.80.124047",
    journal = "Phys. Rev. D",
    volume = "80",
    pages = "124047",
    year = "2009"
}

@article{Laeuger:2025zgb,
    author = "Laeuger, Andrew and Weller, Colin and Li, Dongjun and Chen, Yanbei",
    title = "{Ringdown of a black hole surrounded by a thin shell of matter}",
    eprint = "2506.00367",
    archivePrefix = "arXiv",
    primaryClass = "gr-qc",
    doi = "10.1103/gkj4-m1c1",
    journal = "Phys. Rev. D",
    volume = "112",
    number = "8",
    pages = "084042",
    year = "2025"
}

@article{Campanelli:1998jv,
    author = "Campanelli, Manuela and Lousto, Carlos O.",
    title = "{Second order gauge invariant gravitational perturbations of a Kerr black hole}",
    eprint = "gr-qc/9811019",
    archivePrefix = "arXiv",
    reportNumber = "AEI-096",
    doi = "10.1103/PhysRevD.59.124022",
    journal = "Phys. Rev. D",
    volume = "59",
    pages = "124022",
    year = "1999"
}

@article{Detweiler:1980gk,
    author = "Detweiler, Steven L.",
    title = "{BLACK HOLES AND GRAVITATIONAL WAVES. III. THE RESONANT FREQUENCIES OF ROTATING HOLES}",
    doi = "10.1086/158109",
    journal = "Astrophys. J.",
    volume = "239",
    pages = "292--295",
    year = "1980"
}

@article{Pani:2013wsa,
    author = "Pani, Paolo and Berti, Emanuele and Gualtieri, Leonardo",
    title = "{Scalar, Electromagnetic and Gravitational Perturbations of Kerr-Newman Black Holes in the Slow-Rotation Limit}",
    eprint = "1307.7315",
    archivePrefix = "arXiv",
    primaryClass = "gr-qc",
    doi = "10.1103/PhysRevD.88.064048",
    journal = "Phys. Rev. D",
    volume = "88",
    pages = "064048",
    year = "2013"
}

@article{Berti:2020kgk,
    author = "Berti, Emanuele and Collodel, Lucas G. and Kleihaus, Burkhard and Kunz, Jutta",
    title = "{Spin-induced black-hole scalarization in Einstein-scalar-Gauss-Bonnet theory}",
    eprint = "2009.03905",
    archivePrefix = "arXiv",
    primaryClass = "gr-qc",
    doi = "10.1103/PhysRevLett.126.011104",
    journal = "Phys. Rev. Lett.",
    volume = "126",
    number = "1",
    pages = "011104",
    year = "2021"
}

@article{Herdeiro:2020wei,
    author = "Herdeiro, Carlos A. R. and Radu, Eugen and Silva, Hector O. and Sotiriou, Thomas P. and Yunes, Nicol{\'a}s",
    title = "{Spin-induced scalarized black holes}",
    eprint = "2009.03904",
    archivePrefix = "arXiv",
    primaryClass = "gr-qc",
    doi = "10.1103/PhysRevLett.126.011103",
    journal = "Phys. Rev. Lett.",
    volume = "126",
    number = "1",
    pages = "011103",
    year = "2021"
}

@article{Collodel:2019kkx,
    author = "Collodel, Lucas G. and Kleihaus, Burkhard and Kunz, Jutta and Berti, Emanuele",
    title = "{Spinning and excited black holes in Einstein-scalar-Gauss{\textendash}Bonnet theory}",
    eprint = "1912.05382",
    archivePrefix = "arXiv",
    primaryClass = "gr-qc",
    doi = "10.1088/1361-6382/ab74f9",
    journal = "Class. Quant. Grav.",
    volume = "37",
    number = "7",
    pages = "075018",
    year = "2020"
}

@article{Cunha:2019dwb,
    author = "Cunha, Pedro V. P. and Herdeiro, Carlos A. R. and Radu, Eugen",
    title = "{Spontaneously Scalarized Kerr Black Holes in Extended Scalar-Tensor{\textendash}Gauss-Bonnet Gravity}",
    eprint = "1904.09997",
    archivePrefix = "arXiv",
    primaryClass = "gr-qc",
    doi = "10.1103/PhysRevLett.123.011101",
    journal = "Phys. Rev. Lett.",
    volume = "123",
    number = "1",
    pages = "011101",
    year = "2019"
}

@article{Cardoso:2002pa,
    author = "Cardoso, Vitor and Dias, Oscar J. C. and Lemos, Jose P. S.",
    title = "{Gravitational radiation in D-dimensional space-times}",
    eprint = "hep-th/0212168",
    archivePrefix = "arXiv",
    doi = "10.1103/PhysRevD.67.064026",
    journal = "Phys. Rev. D",
    volume = "67",
    pages = "064026",
    year = "2003"
}

@article{Chung:2024ira,
    author = "Chung, Adrian Ka-Wai and Yunes, Nicolas",
    title = "{Ringing out General Relativity: Quasinormal Mode Frequencies for Black Holes of Any Spin in Modified Gravity}",
    eprint = "2405.12280",
    archivePrefix = "arXiv",
    primaryClass = "gr-qc",
    doi = "10.1103/PhysRevLett.133.181401",
    journal = "Phys. Rev. Lett.",
    volume = "133",
    number = "18",
    pages = "181401",
    year = "2024"
}

@article{Dima:2020yac,
    author = "Dima, Alexandru and Barausse, Enrico and Franchini, Nicola and Sotiriou, Thomas P.",
    title = "{Spin-induced black hole spontaneous scalarization}",
    eprint = "2006.03095",
    archivePrefix = "arXiv",
    primaryClass = "gr-qc",
    doi = "10.1103/PhysRevLett.125.231101",
    journal = "Phys. Rev. Lett.",
    volume = "125",
    number = "23",
    pages = "231101",
    year = "2020"
}

@article{Chung:2025gyg,
    author = "Chung, Adrian Ka-Wai and Lam, Kelvin Ka-Ho and Yunes, Nicolas",
    title = "{Quasinormal mode frequencies and gravitational perturbations of spinning black holes in modified gravity through METRICS: The dynamical Chern-Simons gravity case}",
    eprint = "2503.11759",
    archivePrefix = "arXiv",
    primaryClass = "gr-qc",
    doi = "10.1103/g83n-rrlj",
    journal = "Phys. Rev. D",
    volume = "111",
    number = "12",
    pages = "124052",
    year = "2025"
}

@article{Johannsen:2011dh,
    author = "Johannsen, Tim and Psaltis, Dimitrios",
    title = "{A Metric for Rapidly Spinning Black Holes Suitable for Strong-Field Tests of the No-Hair Theorem}",
    eprint = "1105.3191",
    archivePrefix = "arXiv",
    primaryClass = "gr-qc",
    doi = "10.1103/PhysRevD.83.124015",
    journal = "Phys. Rev. D",
    volume = "83",
    pages = "124015",
    year = "2011"
}

@article{Berens:2025kkm,
    author = "Berens, Roman and Gravely, Trevor and Lupsasca, Alexandru",
    title = "{Gravitational waves on Kerr black holes: II. Metric reconstruction with cosmological constant}",
    eprint = "2510.07712",
    archivePrefix = "arXiv",
    primaryClass = "gr-qc",
    doi = "10.1088/1361-6382/ae4282",
    journal = "Class. Quant. Grav.",
    volume = "43",
    number = "4",
    pages = "045007",
    year = "2026"
}

@incollection{PETROV196988,
title = {CHAPTER 3 - GENERAL CLASSIFICATION OF GRAVITATIONAL FIELDS},
booktitle = {Einstein Spaces},
publisher = {Pergamon},
pages = {88-131},
year = {1969},
isbn = {978-0-08-012315-8},
doi = {https://doi.org/10.1016/B978-0-08-012315-8.50008-2},
url = {https://www.sciencedirect.com/science/article/pii/B9780080123158500082},
author = {A.Z. PETROV}
}

@article{Wardell:2024yoi,
    author = "Wardell, Barry and Kavanagh, Chris and Dolan, Sam R.",
    title = "{Sourced metric perturbations of Kerr spacetime in Lorenz gauge}",
    eprint = "2406.12510",
    archivePrefix = "arXiv",
    primaryClass = "gr-qc",
    doi = "10.1088/1361-6382/ae0918",
    journal = "Class. Quant. Grav.",
    volume = "42",
    number = "20",
    pages = "205007",
    year = "2025"
}

@article{Li:2025ffh,
    author = "Li, Dongjun and Weller, Colin and Bourg, Patrick and LaHaye, Michael and Yunes, Nicol{\'a}s and Yang, Huan",
    title = "{Extreme mass-ratio inspiral within an ultralight scalar cloud: Scalar radiation}",
    eprint = "2507.02045",
    archivePrefix = "arXiv",
    primaryClass = "gr-qc",
    doi = "10.1103/7l9s-g21j",
    journal = "Phys. Rev. D",
    volume = "112",
    number = "8",
    pages = "084057",
    year = "2025"
}

@article{Yagi:2012ya,
    author = "Yagi, Kent and Yunes, Nicolas and Tanaka, Takahiro",
    title = "{Slowly Rotating Black Holes in Dynamical Chern-Simons Gravity: Deformation Quadratic in the Spin}",
    eprint = "1206.6130",
    archivePrefix = "arXiv",
    primaryClass = "gr-qc",
    doi = "10.1103/PhysRevD.86.044037",
    journal = "Phys. Rev. D",
    volume = "86",
    pages = "044037",
    year = "2012",
    note = "[Erratum: Phys.Rev.D 89, 049902 (2014)]"
}

@article{Hussain:2022ins,
    author = "Hussain, Asad and Zimmerman, Aaron",
    title = "{Approach to computing spectral shifts for black holes beyond Kerr}",
    eprint = "2206.10653",
    archivePrefix = "arXiv",
    primaryClass = "gr-qc",
    doi = "10.1103/PhysRevD.106.104018",
    journal = "Phys. Rev. D",
    volume = "106",
    number = "10",
    pages = "104018",
    year = "2022"
}

@article{Press:1973zz,
    author = "Press, William H. and Teukolsky, Saul A.",
    title = "{Perturbations of a Rotating Black Hole. II. Dynamical Stability of the Kerr Metric}",
    doi = "10.1086/152445",
    journal = "Astrophys. J.",
    volume = "185",
    pages = "649--674",
    year = "1973"
}

@article{Kleihaus:2015aje,
    author = "Kleihaus, Burkhard and Kunz, Jutta and Mojica, Sindy and Radu, Eugen",
    title = "{Spinning black holes in Einstein{\textendash}Gauss-Bonnet{\textendash}dilaton theory: Nonperturbative solutions}",
    eprint = "1511.05513",
    archivePrefix = "arXiv",
    primaryClass = "gr-qc",
    doi = "10.1103/PhysRevD.93.044047",
    journal = "Phys. Rev. D",
    volume = "93",
    number = "4",
    pages = "044047",
    year = "2016"
}

@article{Cano:2024wzo,
    author = "Cano, Pablo A. and David, Marina",
    title = "{Isospectrality in Effective Field Theory Extensions of General Relativity}",
    eprint = "2407.12080",
    archivePrefix = "arXiv",
    primaryClass = "hep-th",
    doi = "10.1103/PhysRevLett.134.191401",
    journal = "Phys. Rev. Lett.",
    volume = "134",
    number = "19",
    pages = "191401",
    year = "2025"
}

@article{Kleihaus:2011tg,
    author = "Kleihaus, Burkhard and Kunz, Jutta and Radu, Eugen",
    title = "{Rotating Black Holes in Dilatonic Einstein-Gauss-Bonnet Theory}",
    eprint = "1101.2868",
    archivePrefix = "arXiv",
    primaryClass = "gr-qc",
    doi = "10.1103/PhysRevLett.106.151104",
    journal = "Phys. Rev. Lett.",
    volume = "106",
    pages = "151104",
    year = "2011"
}

@article{Zhao:2023jiz,
    author = "Zhao, Ying and Liu, Wentao and Zhang, Chao and Fang, Xiongjun and Jing, Jiliang",
    title = "{Quasinormal modes and isospectrality of Bardeen (Anti-) de Sitter black holes*}",
    eprint = "2306.02332",
    archivePrefix = "arXiv",
    primaryClass = "gr-qc",
    doi = "10.1088/1674-1137/ad1ed8",
    journal = "Chin. Phys. C",
    volume = "48",
    number = "3",
    pages = "035102",
    year = "2024"
}

@article{Fernando:2016ksb,
    author = "Fernando, Sharmanthie",
    title = "{Bardeen{\textendash}de Sitter black holes}",
    eprint = "1611.05337",
    archivePrefix = "arXiv",
    primaryClass = "gr-qc",
    reportNumber = "NKU-2016-SF2",
    doi = "10.1142/S0218271817500717",
    journal = "Int. J. Mod. Phys. D",
    volume = "26",
    number = "07",
    pages = "1750071",
    year = "2017"
}

@article{Liu:2024oeq,
    author = "Liu, Wentao and Fang, Xiongjun and Jing, Jiliang and Wang, Jieci",
    title = "{Lorentz violation induces isospectrality breaking in Einstein-bumblebee gravity theory}",
    eprint = "2402.09686",
    archivePrefix = "arXiv",
    primaryClass = "gr-qc",
    doi = "10.1007/s11433-024-2405-y",
    journal = "Sci. China Phys. Mech. Astron.",
    volume = "67",
    number = "8",
    pages = "280413",
    year = "2024"
}

@article{Maluf:2020kgf,
    author = "Maluf, R. V. and Neves, Juliano C. S.",
    title = "{Black holes with a cosmological constant in bumblebee gravity}",
    eprint = "2011.12841",
    archivePrefix = "arXiv",
    primaryClass = "gr-qc",
    doi = "10.1103/PhysRevD.103.044002",
    journal = "Phys. Rev. D",
    volume = "103",
    number = "4",
    pages = "044002",
    year = "2021"
}

@article{LuisBlazquez-Salcedo:2020rqp,
    author = "Luis Bl{\'a}zquez-Salcedo, Jose and Herdeiro, Carlos A. R. and Kahlen, Sarah and Kunz, Jutta and Pombo, Alexandre M. and Radu, Eugen",
    title = "{Quasinormal modes of hot, cold and bald Einstein{\textendash}Maxwell-scalar black holes}",
    eprint = "2008.11744",
    archivePrefix = "arXiv",
    primaryClass = "gr-qc",
    doi = "10.1140/epjc/s10052-021-08952-w",
    journal = "Eur. Phys. J. C",
    volume = "81",
    number = "2",
    pages = "155",
    year = "2021"
}

@article{Blazquez-Salcedo:2020nhs,
    author = "Bl{\'a}zquez-Salcedo, Jose Luis and Herdeiro, Carlos A. R. and Kunz, Jutta and Pombo, Alexandre M. and Radu, Eugen",
    title = "{Einstein-Maxwell-scalar black holes: the hot, the cold and the bald}",
    eprint = "2002.00963",
    archivePrefix = "arXiv",
    primaryClass = "gr-qc",
    doi = "10.1016/j.physletb.2020.135493",
    journal = "Phys. Lett. B",
    volume = "806",
    pages = "135493",
    year = "2020"
}

@article{Pursey:1986kk,
  title = {Isometric operators, isospectral Hamiltonians, and supersymmetric quantum mechanics},
  author = {Pursey, D. L.},
  journal = {Phys. Rev. D},
  volume = {33},
  issue = {8},
  pages = {2267--2279},
  numpages = {0},
  year = {1986},
  month = {Apr},
  publisher = {American Physical Society},
  doi = {10.1103/PhysRevD.33.2267},
  url = {https://link.aps.org/doi/10.1103/PhysRevD.33.2267}
}

@book{trachanas2009exactly,
  author    = {Trachanas, S.},
  title     = {Exactly Solvable Quantum Mechanical Potentials},
  publisher = {Crete University Press},
  address   = {Heraklion, Greece},
  year      = {2009}
}

@article{Glampedakis:2017rar,
    author = "Glampedakis, Kostas and Johnson, Aaron D. and Kennefick, Daniel",
    title = "{Darboux transformation in black hole perturbation theory}",
    eprint = "1702.06459",
    archivePrefix = "arXiv",
    primaryClass = "gr-qc",
    doi = "10.1103/PhysRevD.96.024036",
    journal = "Phys. Rev. D",
    volume = "96",
    number = "2",
    pages = "024036",
    year = "2017"
}

@article{Pereniguez:2026avs,
    author = "Pere{\~n}iguez, David",
    title = "{Unifying the Regge-Wheeler-Zerilli and Bardeen-Press-Teukolsky formalisms on spherical backgrounds}",
    eprint = "2605.04147",
    archivePrefix = "arXiv",
    primaryClass = "gr-qc",
    month = "5",
    year = "2026"
}

@article{Cardoso:2019mqo,
    author = "Cardoso, Vitor and Kimura, Masashi and Maselli, Andrea and Berti, Emanuele and Macedo, Caio F. B. and McManus, Ryan",
    title = "{Parametrized black hole quasinormal ringdown: Decoupled equations for nonrotating black holes}",
    eprint = "1901.01265",
    archivePrefix = "arXiv",
    primaryClass = "gr-qc",
    doi = "10.1103/PhysRevD.99.104077",
    journal = "Phys. Rev. D",
    volume = "99",
    number = "10",
    pages = "104077",
    year = "2019"
}

@article{Chung:2024vaf,
    author = "Chung, Adrian Ka-Wai and Yunes, Nicolas",
    title = "{Quasinormal mode frequencies and gravitational perturbations of black holes with any subextremal spin in modified gravity through METRICS: The scalar-Gauss-Bonnet gravity case}",
    eprint = "2406.11986",
    archivePrefix = "arXiv",
    primaryClass = "gr-qc",
    doi = "10.1103/PhysRevD.110.064019",
    journal = "Phys. Rev. D",
    volume = "110",
    number = "6",
    pages = "064019",
    year = "2024"
}

@article{Martel:2005ir,
    author = "Martel, Karl and Poisson, Eric",
    title = "{Gravitational perturbations of the Schwarzschild spacetime: A Practical covariant and gauge-invariant formalism}",
    eprint = "gr-qc/0502028",
    archivePrefix = "arXiv",
    doi = "10.1103/PhysRevD.71.104003",
    journal = "Phys. Rev. D",
    volume = "71",
    pages = "104003",
    year = "2005"
}

@article{Loutrel:2020wbw,
    author = "Loutrel, Nicholas and Ripley, Justin L. and Giorgi, Elena and Pretorius, Frans",
    title = "{Second Order Perturbations of Kerr Black Holes: Reconstruction of the Metric}",
    eprint = "2008.11770",
    archivePrefix = "arXiv",
    primaryClass = "gr-qc",
    doi = "10.1103/PhysRevD.103.104017",
    journal = "Phys. Rev. D",
    volume = "103",
    number = "10",
    pages = "104017",
    year = "2021"
}

@book{Chandrasekhar_1983, 
	place={Oxford [Oxfordshire]: New York}, 
 	series={The International series of monographs on physics}, 
 	title={The mathematical theory of black holes}, 
 	ISBN={978-0-19-851291-2}, 
 	publisher={Clarendon Press}, 
 	author={Chandrasekhar, S.}, 
 	year={1983}, 
 	collection={The International series of monographs on physics}
}

@article{Zerilli:1970se,
    author = "Zerilli, Frank J.",
    title = "{Effective potential for even parity Regge-Wheeler gravitational perturbation equations}",
    doi = "10.1103/PhysRevLett.24.737",
    journal = "Phys. Rev. Lett.",
    volume = "24",
    pages = "737--738",
    year = "1970"
}

@article{Andersson:2015xla,
    author = {Andersson, Lars and B{\"a}ckdahl, Thomas and Blue, Pieter},
    title = "{Spin geometry and conservation laws in the Kerr spacetime}",
    eprint = "1504.02069",
    archivePrefix = "arXiv",
    primaryClass = "gr-qc",
    doi = "10.4310/sdg.2015.v20.n1.a8",
    journal = "Surveys Diff. Geom.",
    volume = "20",
    number = "1",
    pages = "183--226",
    year = "2015"
}

@article{Dunajski:2009dqa,
    author = "Dunajski, Maciej and Tod, Paul",
    title = "{Four--Dimensional Metrics Conformal to Kahler}",
    eprint = "0901.2261",
    archivePrefix = "arXiv",
    primaryClass = "math.DG",
    reportNumber = "DAMTP-2009-3",
    doi = "10.1017/S030500410999048X",
    journal = "Math. Proc. Cambridge Phil. Soc.",
    volume = "148",
    pages = "485",
    year = "2010"
}

@article{Araneda:2025lak,
    author = "Araneda, Bernardo and Dunajski, Maciej",
    title = "{New Asymptotically Flat Einstein-Maxwell Instantons}",
    eprint = "2510.06458",
    archivePrefix = "arXiv",
    primaryClass = "gr-qc",
    doi = "10.1103/f3ls-znl6",
    journal = "Phys. Rev. Lett.",
    volume = "135",
    number = "24",
    pages = "241501",
    year = "2025"
}

@phdthesis{Floyd:1973,
  author = {Floyd, R.},
  title  = {The Dynamics of Kerr Fields},
  school = {London University},
  year   = {1973}
}

@article{Penrose:1973um,
    author = "Penrose, R.",
    title = "{Naked singularities}",
    doi = "10.1111/j.1749-6632.1973.tb41447.x",
    journal = "Annals N. Y. Acad. Sci.",
    volume = "224",
    pages = "125--134",
    year = "1973"
}

@article{Cano:2024jkd,
    author = {Cano, Pablo A. and Capuano, Lodovico and Franchini, Nicola and Maenaut, Simon and V{\"o}lkel, Sebastian H.},
    title = "{Parametrized quasinormal mode framework for modified Teukolsky equations}",
    eprint = "2407.15947",
    archivePrefix = "arXiv",
    primaryClass = "gr-qc",
    doi = "10.1103/PhysRevD.110.104007",
    journal = "Phys. Rev. D",
    volume = "110",
    number = "10",
    pages = "104007",
    year = "2024",
    note = "[Erratum: Phys.Rev.D 113, 069902 (2026)]"
}

@article{Suvorov:2019qow,
    author = "Suvorov, Arthur G.",
    title = "{Gravitational perturbations of a Kerr black hole in $f(R)$ gravity}",
    eprint = "1905.02021",
    archivePrefix = "arXiv",
    primaryClass = "gr-qc",
    doi = "10.1103/PhysRevD.99.124026",
    journal = "Phys. Rev. D",
    volume = "99",
    number = "12",
    pages = "124026",
    year = "2019"
}

@article{Monteiro:2014cda,
    author = "Monteiro, Ricardo and O'Connell, Donal and White, Chris D.",
    title = "{Black holes and the double copy}",
    eprint = "1410.0239",
    archivePrefix = "arXiv",
    primaryClass = "hep-th",
    reportNumber = "EDINBURGH-2014-18",
    doi = "10.1007/JHEP12(2014)056",
    journal = "JHEP",
    volume = "12",
    pages = "056",
    year = "2014"
}

@article{Kent:2025pvu,
    author = "Kent, Brian and Zimmerman, Aaron",
    title = "{New Framework for Classical Double Copies}",
    eprint = "2505.03887",
    archivePrefix = "arXiv",
    primaryClass = "hep-th",
    doi = "10.1103/xn1j-ddcc",
    journal = "Phys. Rev. Lett.",
    volume = "135",
    number = "14",
    pages = "141501",
    year = "2025"
}

@article{Yoshida:2010zzb,
    author = "Yoshida, Shijun and Uchikata, Nami and Futamase, Toshifumi",
    title = "{Quasinormal modes of Kerr-de Sitter black holes}",
    doi = "10.1103/PhysRevD.81.044005",
    journal = "Phys. Rev. D",
    volume = "81",
    pages = "044005",
    year = "2010"
}

@article{Kubiznak:2007kh,
    author = "Kubiznak, David and Krtous, Pavel",
    title = "{On conformal Killing-Yano tensors for Plebanski-Demianski family of solutions}",
    eprint = "0707.0409",
    archivePrefix = "arXiv",
    primaryClass = "gr-qc",
    doi = "10.1103/PhysRevD.76.084036",
    journal = "Phys. Rev. D",
    volume = "76",
    pages = "084036",
    year = "2007"
}

@article{Berti:2016lat,
    author = "Berti, Emanuele and Sesana, Alberto and Barausse, Enrico and Cardoso, Vitor and Belczynski, Krzysztof",
    title = "{Spectroscopy of Kerr black holes with Earth- and space-based interferometers}",
    eprint = "1605.09286",
    archivePrefix = "arXiv",
    primaryClass = "gr-qc",
    doi = "10.1103/PhysRevLett.117.101102",
    journal = "Phys. Rev. Lett.",
    volume = "117",
    number = "10",
    pages = "101102",
    year = "2016"
}

@article{Isi:2020tac,
    author = "Isi, Maximiliano and Farr, Will M. and Giesler, Matthew and Scheel, Mark A. and Teukolsky, Saul A.",
    title = "{Testing the Black-Hole Area Law with GW150914}",
    eprint = "2012.04486",
    archivePrefix = "arXiv",
    primaryClass = "gr-qc",
    reportNumber = "LIGO-P2000507",
    doi = "10.1103/PhysRevLett.127.011103",
    journal = "Phys. Rev. Lett.",
    volume = "127",
    number = "1",
    pages = "011103",
    year = "2021"
}

@article{Green:2026nlt,
    author = "Green, Stephen R. and Krasnov, Kirill and Shaw, Adam",
    title = "{Kahler decoupling for Kerr perturbations}",
    eprint = "2604.22424",
    archivePrefix = "arXiv",
    primaryClass = "gr-qc",
    month = "4",
    year = "2026"
}

@article{Cano:2019ore,
    author = "Cano, Pablo A. and Ruip{\'e}rez, Alejandro",
    title = "{Leading higher-derivative corrections to Kerr geometry}",
    eprint = "1901.01315",
    archivePrefix = "arXiv",
    primaryClass = "gr-qc",
    reportNumber = "IFT-UAM/CSIC-19-2",
    doi = "10.1007/JHEP05(2019)189",
    journal = "JHEP",
    volume = "05",
    pages = "189",
    year = "2019",
    note = "[Erratum: JHEP 03, 187 (2020)]"
}

@article{R:2022tqa,
    author = "R, Abhishek Hegade K. and Most, Elias R. and Noronha, Jorge and Witek, Helvi and Yunes, Nicol{\'a}s",
    title = "{How do axisymmetric black holes grow monopole and dipole hair?}",
    eprint = "2212.02039",
    archivePrefix = "arXiv",
    primaryClass = "gr-qc",
    doi = "10.1103/PhysRevD.107.104047",
    journal = "Phys. Rev. D",
    volume = "107",
    number = "10",
    pages = "104047",
    year = "2023"
}

@article{Mukkamala:2024dxf,
    author = "Mukkamala, Gowtham Rishi and Pere{\~n}iguez, David",
    title = "{Decoupled gravitational wave equations in spherical symmetry from curvature wave equations}",
    eprint = "2408.13557",
    archivePrefix = "arXiv",
    primaryClass = "gr-qc",
    doi = "10.1088/1475-7516/2025/01/122",
    journal = "JCAP",
    volume = "01",
    pages = "122",
    year = "2025"
}
\appendix


\newpage
\onecolumngrid

\section{End Matter}
We use the standard Newman--Penrose (NP) null tetrad $\{l^{\mu}, n^{\mu},m^{\mu}, \bar{m}^{\mu}\}$, where $g_{\mu \nu}l^{\mu}n^{\nu} = -1$ and $g_{\mu\nu}m^\mu\bar m^\nu=+1$.
The four-vectors $l^{\mu}$ and $n^{\mu}$ are real-valued and the tetrad has the corresponding directional derivatives $D = l^{\mu}\nabla_{\mu}$, $\Delta = n^{\mu}\nabla_{\mu}$, $\delta = m^{\mu}\nabla_{\mu}$, and $\bar{\delta} = \bar{m}^{\mu}\nabla_{\mu}$. We make use of the following operators introduced in~\cite{Wagle:2023fwl}:
\begin{subequations}\label{eq: ModifiedDirectionDeriv}
    \begin{align}
D_{[a, b, c, d]} & =D+a \varepsilon+b \bar{\varepsilon}+c \rho+d \bar{\rho}\,, \\
\Delta_{[a, b, c, d]} & =\Delta+a \mu+b \bar{\mu}+c \gamma+d \bar{\gamma}\,, \\
\delta_{[a, b, c, d]} & =\delta+a \bar{\alpha}+b \beta+c \bar{\pi}+d \tau\,, \\
\bar{\delta}_{[a, b, c, d]} & =\bar{\delta}+a \alpha+b \bar{\beta}+c \pi+d \bar{\tau}\,.
    \end{align}
\end{subequations}
where $a,b,c$, and $d$ are numerical coefficients.

The Teukolsky operators of the Weyl tensor are given by
\begin{subequations}\label{eq: TekOperators}
    \begin{align}
        \mathcal{O}_0 &=D_{[0,0,-4,-1]}\Delta_{[1,0,-4,0]} - \delta_{[-1,-3,1,-4]}\bar{\delta}_{[-4,0,1,0]}- 3 \Psi_2\,,\\
        \mathcal{O}_4 &= \Delta_{[4,1,3,-1]}D_{[0,0,-1,0]} - \bar{\delta}_{[3,1,4,-1]}\delta_{[0,4,0,-1]} - 3 \Psi_2\,,
    \end{align}
\end{subequations}
for the $s = +2$ and $s = -2$ components, respectively. We have assumed a frame and coordinate gauge at $\mathcal{O}(\epsilon^0)$ satisfying
\begin{subequations}\label{eq: FrameGuageSupp}
\begin{align}
    \Psi_0 =  \Psi_1 =  \Psi_3 =  \Psi_4 = 0\,,\\
    \kappa = \sigma = \lambda = \nu = 0\,.
\end{align}
\end{subequations}
The two Bianchi identities used to derive the Teukolsky equation for $\Psi^{(1)}_0$, given in Eq.~\eqref{eq: RicciFormCurved} in the main text, are given by
\begin{subequations}
    \begin{align}
        F_1 \Psi^{(1)}_0 +F_2 \Psi^{(1)}_1   + 3\kappa^{(1)} \Psi^{(0)}_2&= \mathcal{S}_1\Phi^{(1)}_{00} + \mathcal{S}_2\Phi^{(1)}_{01} -2 \kappa^{(1)} \Phi^{(0)}_{11}\,,\label{eq: Bianchi1Supp}\\
        F_3 \Psi^{(1)}_0 +F_4 \Psi^{(1)}_1   + 3\sigma^{(1)} \Psi^{(0)}_2 &= \mathcal{S}_3\Phi^{(1)}_{01} + \mathcal{S}_4\Phi^{(1)}_{02} + 2 \sigma^{(1)}\Phi^{(0)}_{11}\,,\label{eq: Bianchi2Supp}
    \end{align}
\end{subequations}
where the operators $F_i$ and $\mathcal{S}_i$ are given by
\begin{equation}\label{eq:FS-operators}
\begin{alignedat}{4}
F_1 &= \bar{\delta}_{[-4,0,1,0]}, 
&\qquad
F_2 &= D_{[-2,0,-4,0]},
&\qquad
\mathcal{S}_1 &= \delta_{[-2,-2,1,0]},
&\qquad
\mathcal{S}_2 &= -D_{[-2,0,0,-2]},
\\
F_3 &= \Delta_{[1,0,-4,0]},
&\qquad
F_4 &= -\delta_{[0,-2,0,-4]},
&\qquad
\mathcal{S}_3 &= \delta_{[0,-2,2,0]},
&\qquad
\mathcal{S}_4 &= -D_{[-2,2,0,-1]} .
\end{alignedat}
\end{equation}
The perturbed Newman--Penrose quantities $\{\sigma^{(1)},\kappa^{(1)},\Psi^{(1)}_1\}$ can be expressed directly in terms of metric components, but these quantities are not tetrad-gauge invariant so we fix our frame gauge (see Refs.\cite{Campanelli:1998jv,Ripley:2020xby,Li:2026rkf}) to satisfy
\begin{align}\label{eq: CCKFrame}
        l^{\mu(1)}=\frac{1}{2} h_{l l}^{(1)} n^\mu\,,\quad
         n^{\mu(1)}=\frac{1}{2} h_{n n}^{(1)} l^\mu+h_{l n}^{(1)} n^\mu\,, \quad m^{\mu(1)}=h_{n m}^{(1)} l^\mu+h_{l m}^{(1)} n^\mu-\frac{1}{2} h_{m \bar{m}}^{(1)} m^\mu-\frac{1}{2} h_{m m}^{(1)} \bar{m}^\mu\,.
    \end{align}
In the ingoing radiation gauge, where $h_{\mu \nu}l^{\mu} = 0$ and $h = g^{\mu \nu}h_{\mu \nu} = 0$, Eq.~\eqref{eq: CCKFrame} reduces to 
\begin{align}\label{eq: CCKFrameIRG}
        l^{\mu(1)}=0\,,\quad
         n^{\mu(1)}=\frac{1}{2} h_{n n}^{(1)} l^\mu\,,\quad m^{\mu(1)}&=h_{n m}^{(1)} l^\mu-\frac{1}{2} h_{m m}^{(1)} \bar{m}^\mu\,.
    \end{align}
Using the frame gauge of Eq.~\eqref{eq: CCKFrame}, we find the following expressions for the perturbed spin connection:
\begin{subequations}\label{eq: ReconstructedIRGQuants}
    \begin{align}   
\kappa^{(1)} & =\frac{1}{2} \delta_{[-2,-2,1,1]} h_{l l}^{(1)}-D_{[-2,0,0,-1]} h_{l m}^{(1)}\, \\
\sigma^{(1)} & =-\frac{1}{2} D_{[-2,2,1,-1]} h_{m m}^{(1)}+(\bar{\pi}+\tau) h_{l m}^{(1)}\,\\
\Psi_1^{(1)}&=-\frac{1}{8}\left[2 D_{[-1,1,1,-1]} D_{[0,2,1,-1]} h_{n m}^{(1)}+D_{[-1,1,1,-1]} \delta_{[-2,2,-2,-1]} h_{m m}^{(1)}+\bar{\delta}_{[-3,1,-3,-1]} D_{[-2,2,0,-1]} h_{m m}^{(1)}\right],
    \end{align}
\end{subequations}
Transforming the coordinate gauge to the ingoing radiation gauge, these expressions become
 \begin{subequations}\label{eq: ReconstructedIRGQuantsReduced}
    \begin{align}   
\kappa^{(1)} & =0\, \\
\sigma^{(1)} & =-\frac{1}{2} D_{[-2,2,1,-1]} h_{m m}^{(1)}\,\\
\Psi_1^{(1)}&=-\frac{1}{8}\left[2 D_{[-1,1,1,-1]} D_{[0,2,1,-1]} h_{n m}^{(1)}+D_{[-1,1,1,-1]} \delta_{[-2,2,-2,-1]} h_{m m}^{(1)}+\bar{\delta}_{[-3,1,-3,-1]} D_{[-2,2,0,-1]} h_{m m}^{(1)}\right],
    \end{align}
\end{subequations}
Similarly, the perturbed spin $\pm 2$ Weyl scalars can be expressed in terms of the metric components $h_{ab}$ via
\begin{subequations}
    \begin{align}
\Psi^{(1)}_0&=D_{[-3,1,-1,-1]} \left(-\frac{1}{2} D_{[-2,2,1,-1]} h^{(1)}_{m m}+(\bar{\pi}+\tau) h^{(1)}_{l m}\right)-\delta_{[-1,-3,1,-1]} \left(\frac{1}{2} \delta_{[-2,-2,1,1]} h^{(1)}_{l l}-D_{[-2,0,0,-1]} h^{(1)}_{l m}\right)\,,\label{eq: Psi_0MetricComp}\\
\Psi^{(1)}_4&=\bar{\delta}_{[3,1,1,-1]} \left(-\frac{1}{2} \bar{\delta}_{[2,2,-1,-1]} h^{(1)}_{n n}+\Delta_{[0,1,2,0]} h^{(1)}_{n \bar{m}}\right)-\Delta_{[1,1,3,-1]} \left((\pi+\bar{\tau}) h^{(1)}_{n \bar{m}}+\frac{1}{2} \Delta_{[-1,1,2,-2]} h^{(1)}_{\bar{m} \bar{m}}\right)\,.\label{eq: Psi_4MetricComp}
    \end{align}
\end{subequations}
Since $\Psi^{(1)}_0$ and $\Psi^{(1)}_4$ are both coordinate- and tetrad-gauge invariant at $\mathcal{O}(\epsilon^1)$, these expressions do not depend on the choice of gauge at $\mathcal{O}(\epsilon^1)$.

The sourced Teukolsky equations are given by 
\begin{subequations}\label{eq: SourcedTekEqus}
\begin{align}
\mathcal{O}_0 \Psi_0^{(1)} & =\mathscr{T}^{(1)}_0\,, \\
\mathcal{O}_4 \Psi_4^{(1)} & =\mathscr{T}^{(1)}_4\,,
\end{align}
\end{subequations}
where the effective sources are given by 
\begin{subequations}
    \begin{align}
\mathscr{T}^{(1)}_0 &= \delta_{[-1,-3,1,-4]}\left( D_{[-2,0,0,-2]}\Phi^{(1)}_{01}
        - \delta_{[-2,-2,1,0]}\,\Phi^{(1)}_{00} \right)  + D_{[-3,1,-4,-1]}\!\left(\, \delta_{[0,-2,2,0]}\Phi^{(1)}_{01}
        - D_{[-2,2,0,-1]}\Phi^{(1)}_{02} \,\right)\,\\
\mathscr{T}^{(1)}_4 &= \Delta_{[4,1,3,-1]}\left(\, \bar\delta_{[2,0,0,-2]}\,\Phi^{(1)}_{21}
        - \Delta_{[0,1,2,-2]}\Phi^{(1)}_{20} \right) + \bar\delta_{[3,1,4,-1]}\left( \Delta_{[0,2,2,0]}\,\Phi^{(1)}_{21}
        - \bar\delta_{[2,2,0,-1]}\Phi^{(1)}_{22} \right)
    \end{align}
\end{subequations}
The homogeneous Teukolsky equations are recovered whenever $\mathscr{T}^{(1)}_4 = 0 = \mathscr{T}^{(1)}_0$.

\section{Supplemental Material}

\begin{table}[htb]
\captionsetup{justification=raggedright,singlelinecheck=false}
\caption{
Classification of representative theories and spacetimes by their isospectral properties, based on Theorems~\ref{thm: MainTheorem} and \ref{thm: TheoremB} and Lemmas~\ref{lem: ChiralLemma} and \ref{lem: ChiralLemmaParity}. Here, $R$ is the Ricci scalar, $\Lambda$ is the cosmological constant, and $S_{\mathrm{M}}$ and $S_{\mathrm{shell}}$ denote generic matter and thin-shell actions, respectively. The Johannsen--Psaltis and linear, static bumpy BHs entries are phenomenological: depending on the field equations imposed, they can be interpreted either as vacuum beyond-GR deformations or as non-vacuum GR spacetimes; in this table we adopt the latter interpretation and are placed in the GR-with-matter row. The electromagnetic field strength is $F_{\mu\nu}=2\nabla_{[\mu}A_{\nu]}$, with $F^2\equiv F_{\mu\nu}F^{\mu\nu}$, and $\mathcal{L}(F^2)$ is a nonlinear-electrodynamics Lagrangian \cite{Zhao:2023jiz}. The scalar notation is theory-dependent: $\phi$ is a generic scalar field in Einstein--Maxwell--scalar theory, the dilaton in Einstein--Maxwell--dilaton--axion theory; $\chi$ is the axion field; and $\varphi$ is the dilaton in Einstein--dilaton--Gauss--Bonnet gravity, a generic scalar in scalar--Gauss--Bonnet gravity. The functions $f(\phi)$ and $f(\varphi)$ specify the corresponding scalar couplings. The quantity $\mathcal{G}$ is the Gauss--Bonnet invariant \cite{Li:2023ulk}, ${}^{*}RR$ is the Pontryagin density \cite{Li:2023ulk}, $\mathcal{R}^3$ schematically denotes parity-even cubic contractions of the curvature, $\ell$ is the higher-curvature length scale, and $\lambda_{\mathrm{ev}}$ is its parity-even coupling~\cite{Cano:2019ore,R:2022tqa}. The symbols $\alpha$ and $b$ in the scalar-curvature and axion terms denote the associated coupling constants. In the Einstein--Bumblebee row, $B_\mu$ is the bumblebee vector, $B_{\mu\nu}=2\nabla_{[\mu}B_{\nu]}$ is its field strength, $B^2\equiv B_\mu B^\mu$, $b$ fixes the vacuum norm of $B_\mu$, $\zeta$ is the nonminimal curvature coupling, and $f(B^2\pm b^2)$ is the symmetry-breaking potential~\cite{Maluf:2020kgf}.}
\label{tab:isospectrality_examples_reorganized}
\begingroup
\scriptsize
\setlength{\tabcolsep}{3.5pt}
\renewcommand{\arraystretch}{1.25}
\begin{ruledtabular}
\begin{tabular}{l l l c l}
\textbf{Theory} & \textbf{Lagrangian density} & \textbf{Black hole solution} & \textbf{Isospectral} &
\begin{minipage}{13.5em}\raggedright
{\bf{Comment}}
\end{minipage}
\\
\hline
\noalign{\vskip 4pt}

& & Schwarzschild ($\Lambda=0$) & Yes \cite{Chandrasekhar_1983} &
\begin{minipage}{13.5em}\raggedright
Satisfies Lemma~\ref{lem: ChiralLemmaParity}
\end{minipage} \\

\begin{minipage}{10em}\raggedright
General Relativity in Vacuum
\end{minipage}
& $R-2\Lambda$
& Kerr ($\Lambda=0$)
& Yes \cite{Li:2023ulk} &
\begin{minipage}{13.5em}\raggedright
Satisfies Lemma~\ref{lem: ChiralLemmaParity}
\end{minipage} \\

& & Kerr--de Sitter $(\Lambda>0)$ & Yes \cite{SupplementalMaterial} &
\begin{minipage}{13.5em}\raggedright
Satisfies Lemma~\ref{lem: ChiralLemmaParity}
\end{minipage} \\

& &
\begin{minipage}{9em}\raggedright
Kerr--Anti-de Sitter $(\Lambda<0)$
\end{minipage}
& No \cite{SupplementalMaterial} &
\begin{minipage}{13.5em}\raggedright
Theorem~\ref{thm: MainTheorem} does not apply
\end{minipage} \\

\hline
\noalign{\vskip 4pt}

&
$\kappa_gR+S_{\mathrm{shell}}$
&
\begin{minipage}{15em}\raggedright
Schwarzschild with thin shell \cite{Laeuger:2025zgb}
\end{minipage}
& No \cite{Laeuger:2025zgb} &
\begin{minipage}{13.5em}\raggedright
Theorem~\ref{thm: MainTheorem} does not apply
\end{minipage} \\
\noalign{\vskip 4pt}
&
$\kappa_gR+S_{\mathrm{M}}$
&
\begin{minipage}{13em}\raggedright
Linear, Static, Bumpy ($\Lambda=0$)
\end{minipage}
& No \cite{Weller:2024qvo} &
\begin{minipage}{13.5em}\raggedright
Theorem~\ref{thm: MainTheorem} does not apply
\end{minipage} \\

\begin{minipage}{10em}\raggedright
General Relativity with Matter
\end{minipage}
& 
& Johannsen--Psaltis \cite{Johannsen:2011dh}
& No \cite{Wu:2025obg} &
\begin{minipage}{13.5em}\raggedright
Theorem~\ref{thm: MainTheorem} does not apply
\end{minipage} \\

&
&
Membrane spacetime \cite{Thorne:1986iy}
&
No \cite{Saketh:2024ojw}
&
\begin{minipage}{13.5em}\raggedright
Theorem~\ref{thm: MainTheorem} does not apply
\end{minipage} \\

\noalign{\vskip 4pt}
\hline
\noalign{\vskip 4pt}

Einstein--Maxwell
& $\kappa_gR-F^2/4$
& Reissner--Nordstr\"om
& Yes \cite{Chandrasekhar_1983} &
\begin{minipage}{13.5em}\raggedright
Satisfies Lemma~\ref{lem: ChiralLemmaParity}
\end{minipage} \\

& & Kerr--Newman \cite{Newman:1965my}& Yes \cite{Weller:2026kk} &
\begin{minipage}{13.5em}\raggedright
Satisfies Lemma~\ref{lem: ChiralLemmaParity}
\end{minipage} \\

\hline
\noalign{\vskip 4pt}

\begin{minipage}{8em}\raggedright
Nonlinear Einstein--Maxwell
\end{minipage}
& $\kappa_gR+\mathcal{L}(F^2)$
&
\begin{minipage}{9em}\raggedright
Bardeen--de Sitter \cite{Fernando:2016ksb}
\end{minipage}
& No \cite{Zhao:2023jiz} &
\begin{minipage}{13.5em}\raggedright
Theorem~\ref{thm: MainTheorem} does not apply
\end{minipage} \\

\noalign{\vskip 4pt}
\hline
\noalign{\vskip 4pt}

\begin{minipage}{8em}\raggedright
Einstein--Maxwell--scalar
\end{minipage}
&
\begin{minipage}{12.5em}\raggedright
$\kappa_gR-2(\nabla\phi)^2-f(\phi)F_{\mu\nu}F^{\mu\nu}$
\end{minipage}
&
\begin{minipage}{9em}\raggedright
Scalarized Reissner--Nordstr\"om~\cite{Blazquez-Salcedo:2020nhs}
\end{minipage}
& No \cite{LuisBlazquez-Salcedo:2020rqp} &
\begin{minipage}{13.5em}\raggedright
Theorem~\ref{thm: MainTheorem} does not apply
\end{minipage} \\

\noalign{\vskip 4pt}
\hline
\noalign{\vskip 4pt}

\begin{minipage}[c]{8em}\raggedright
Einstein--Maxwell--dilaton--axion
\end{minipage}
&
\begin{minipage}[c]{13em}
\raggedright
\(\displaystyle
\begin{array}{@{}l@{}}
\kappa_gR-\frac{1}{2}(\nabla\phi)^2-e^\phi F^2 \\[-2pt]
-\frac{1}{2}e^{-2\phi}(\nabla\chi)^2
+b\chi\widetilde{F}^{\mu\nu}F_{\mu\nu}
\end{array}
\)
\end{minipage}
&
\begin{minipage}{9em}\raggedright
Charged Gibbons and Maeda \cite{Gibbons:1987ps}
\end{minipage}
& Yes \cite{Pope:2025jgz} &
\begin{minipage}{13.5em}\raggedright
Theorem~\ref{thm: MainTheorem} does not apply
\end{minipage} \\

\noalign{\vskip 4pt}
\hline
\noalign{\vskip 4pt}

\begin{minipage}{8em}\raggedright
Higher Order Starobinsky--Podolsky gravity
\end{minipage}
&
\begin{minipage}[c]{14em}
\raggedright
\(\displaystyle
\begin{array}{@{}l@{}}
f(R) + c_1 \nabla_{\mu}R \nabla^{\mu}R+ \dots \\[-2pt]
 + c_N\nabla_{\mu_1}...\nabla_{\mu_N}R \nabla^{\mu_1}...\nabla^{\mu_N}R
\end{array}
\)
\end{minipage}
&
Kerr ($\Lambda=0$)
& Yes (tensor sector) \cite{SupplementalMaterial} &
\begin{minipage}{13.5em}\raggedright
Satisfies Theorem~\ref{thm: TheoremB} (tensor sector)
\end{minipage} \\

\noalign{\vskip 4pt}
\hline
\noalign{\vskip 4pt}

\begin{minipage}{8em}\raggedright
Higher-Curvature Gravity
\end{minipage}
&
\begin{minipage}{12em}\raggedright
$\kappa_gR+\ell^4\lambda_{\mathrm{ev}}\mathcal{R}^3+\dots$
\end{minipage}
&
Beyond Kerr~\cite{Cano:2019ore}
& No \cite{Cano:2020cao} &
\begin{minipage}{13.5em}\raggedright
Theorem~\ref{thm: MainTheorem} does not apply
\end{minipage} \\

\noalign{\vskip 4pt}
\hline
\noalign{\vskip 4pt}

\begin{minipage}{8em}\raggedright
Einstein--dilaton--Gauss--Bonnet gravity
\end{minipage}
&
$\kappa_gR+\alpha_{\mathrm{Ed}}\mathcal{G}e^\varphi-(\nabla\varphi)^2/2$
&
\begin{minipage}{9em}\raggedright
Beyond-Kerr with monopole scalar hair~\cite{R:2022tqa,Kleihaus:2011tg,Ayzenberg:2014aka,Maselli:2015tta,Kleihaus:2015aje}
\end{minipage}
& No \cite{Li:2023ulk} &
\begin{minipage}{13.5em}\raggedright
Theorem~\ref{thm: MainTheorem} does not apply
\end{minipage} \\

\noalign{\vskip 4pt}
\hline
\noalign{\vskip 4pt}

\begin{minipage}{8em}\raggedright
Scalar--Gauss--Bonnet gravity
\end{minipage}
&
\begin{minipage}{12.5em}\raggedright
$\kappa_gR+\alpha_{\mathrm{sGB}} f(\varphi)\mathcal{G}-(\nabla\varphi)^2/2$
\end{minipage}
&
\begin{minipage}{9em}\raggedright
Beyond-Kerr black hole with monopole scalar hair~\cite{Chung:2024vaf,Cunha:2019dwb,Collodel:2019kkx,Herdeiro:2020wei,Berti:2020kgk,Dima:2020yac}
\end{minipage}
& No \cite{Chung:2024vaf} &
\begin{minipage}{13.5em}\raggedright
Theorem~\ref{thm: MainTheorem} does not apply
\end{minipage} \\

\noalign{\vskip 4pt}
\hline
\noalign{\vskip 4pt}

\begin{minipage}{8em}\raggedright
Dynamical Chern--Simons gravity
\end{minipage}
&
$\kappa_gR+\alpha_{\mathrm{dCS}}\varphi R{}^{*}R-(\nabla\varphi)^2/2$
&
\begin{minipage}{9em}\raggedright
Beyond-Kerr with dipole scalar hair~\cite{R:2022tqa,Yunes:2009hc,Yagi:2012ya,Stein:2014xba,Delsate:2018ome,Chung:2025gyg}
\end{minipage}
& No \cite{Li:2023ulk} &
\begin{minipage}{13.5em}\raggedright
Theorem~\ref{thm: MainTheorem} does not apply
\end{minipage} \\

\noalign{\vskip 4pt}
\hline
\noalign{\vskip 4pt}

\begin{minipage}[c]{8em}\raggedright
Einstein--Bumblebee gravity
\end{minipage}
&
\begin{minipage}[c]{13em}
\raggedright
\(\displaystyle
\begin{array}{@{}l@{}}
\kappa_g(R-2\Lambda)
+\zeta \kappa_gB^\mu B^\nu R_{\mu\nu} \\[-2pt]
-\frac{1}{4}B^{\mu\nu}B_{\mu\nu}
-f\left(B^2\pm b^2\right)
\end{array}
\)
\end{minipage}
&
\begin{minipage}{9em}\raggedright
Bumblebee--Schwarzschild--de Sitter \cite{Maluf:2020kgf}
\end{minipage}
& No \cite{Liu:2024oeq} &
\begin{minipage}{13.5em}\raggedright
Theorem~\ref{thm: MainTheorem} does not apply
\end{minipage}

\end{tabular}
\end{ruledtabular}
\endgroup
\end{table}
Throughout the Supplemental Material, we use the modified directional derivatives
defined in the End Matter [see Eq.~\eqref{eq: ModifiedDirectionDeriv}]. 
In the main text, we prove that Eq.~\eqref{eq: RicciFormCurved} forms two, independent, third-order differential equations for the complex metric components $\{h_{nm},h_{mm}\}$ in the ingoing radiation gauge. A similar set of equations can also be formed in the outgoing radiation gauge for $\{h_{l\bar{m}},h_{\bar{m}\bar{m}}\}$.
We recall that in deriving Eq.~\eqref{eq: RicciFormCurved} we assumed the frame gauge of Eq.~\eqref{eq: CKFrameGauge} at $\mathcal{O}(\epsilon^0)$ and of Eq.~\eqref{eq: CCKFrameIRG} at $\mathcal{O}(\epsilon^1)$. We now detail the procedure to solve for the remaining metric component $h_{nn}$. 

We begin by considering the following Ricci identity: 
\begin{align}\label{eq: Riccihnn}
    \delta_{[1,3,0,-1]} \nu-
    \Delta_{[1,0,1,1]}\mu
    -\lambda \bar{\lambda}+\bar{\nu} \pi=\Phi_{22}\,,
\end{align}
which, expanding to $\mathcal{O}(\epsilon)$, yields
\begin{align}\label{eq: RiccihnnPerturb}
    \delta_{[1,3,0,-1]} \nu^{(1)}-\left(\Delta_{[1,0,1,1]}\right)^{(1)} \mu - \left(\Delta_{[1,0,1,1]}\right) \mu^{(1)}+\bar{\nu}^{(1)} \pi=\Phi^{(1)}_{22}\,,
\end{align}
where we have used that $\nu^{(0)} = 0 = \lambda^{(0)}$. All the perturbed spin coefficients in Eq.~\eqref{eq: RiccihnnPerturb} can be expressed in metric components using the frame gauge of Eq.~\eqref{eq: CCKFrameIRG}, namely \cite{Wagle:2023fwl}
\begin{subequations}
\begin{align}
    \nu^{(1)}&= -\frac{1}{2} \bar{\delta}_{[2,2,-1,-1]} h_{n n}^{(1)}+\Delta_{[0,1,2,0]} h_{n \bar{m}}^{(1)}\,,\\
    \mu^{(1)}&=\frac{1}{2}\left[-\rho h_{n n}^{(1)}-\bar{\delta}_{[0,2,-2,-1]} h_{n m}^{(1)}+\delta_{[0,2,0,1]} h_{n \bar{m}}^{(1)}\right]\,,\\
    \gamma^{(1)}&=\frac{1}{4}\left[-D_{[0,2,1,-1]} h_{n n}^{(1)}-\bar{\delta}_{[0,2,-2,-1]} h_{n m}^{(1)}+\delta_{[0,2,2,3]} h_{n \bar{m}}^{(1)}\right],
\end{align}
\end{subequations}
The NP Ricci scalar $\Phi^{(1)}_{22}$ can be expressed in the ingoing radiation gauge as $\Phi_{22} = \frac{1}{2} (R_{\mu \nu} -\frac{1}{4} R g_{\mu \nu})n^{\mu}n^{\nu} = \frac{1}{4 \kappa_g}(T^{\mathrm{eff}}_{\mu \nu} + \frac{1}{4}Rg_{\mu \nu})n^{\nu}n^{\mu}$ so therefore,
\begin{align}
    \Phi^{(1)}_{22} =\frac{1}{4\kappa_g}\left(T^{\mathrm{eff}(1)}_{nn} + h_{nn}T^{\mathrm{eff}(0)}_{ln}\right)\,.
\end{align}
Since we have assumed $T^{\mathrm{eff}(1)}_{nn}$ vanishes by condition (iii) of Def.~\ref{def: ChiralAligned} and $\Delta^{(1)} = h_{nn}D/2 $, Eq.~\eqref{eq: RiccihnnPerturb} is a second-order transport equation for the metric component $h_{nn}$ in terms of the known metric quantities $h_{nm}$ and $h_{mm}$, which were determined from Eq.~\eqref{eq: RicciFormCurved}. A similar transport equation can be found for $h_{ll}$ if one chooses to reconstruct the metric in the outgoing radiation gauge.

\subsection{Chiral-Aligned Example: Kerr--de Sitter}
We consider the Einstein--Hilbert action with a positive cosmological constant
\begin{align}\label{eq: deSitterAction}
    S[g] =  \kappa_g\int \,d^4x \sqrt{|g|}\left( R  -2 \Lambda\right)\quad \Lambda \geq 0
\end{align}
where the equations of motion are given by varying Eq.~\eqref{eq: deSitterAction} with respect to the metric $g_{\mu \nu}$ and yield
\begin{align}\label{eq: deSitterEquOfMot}
    R_{\mu \nu}-\frac{1}{2} R g_{\mu \nu}+\Lambda g_{\mu \nu}=0\,.
\end{align}
For this example, we consider linear perturbations of the Kerr--de Sitter (KdS) black hole \cite{Carter:1968ks,Carter:1968rr}. For reviews on the KdS spacetime, see Refs.~\cite{Akcay:2010vt,Frolov:2017kze}. The KdS solution is a special case of the Pleba\'nski--Demia\'nski family,
obtained by setting the acceleration parameter, the NUT charge, and the
electric and magnetic charges to zero. In Boyer--Lindquist coordinates the  spacetime can be expressed by the following line element \cite{Konoplya:2007zx, Suzuki:1998vy}: 
\begin{align}\label{eq: KerrdS}
d s^2&=  \rho^2\left(\frac{d r^2}{\Delta_r}+\frac{d \theta^2}{\Delta_\theta}\right)+\frac{\Delta_\theta \sin ^2 \theta}{(1+\alpha)^2 \rho^2}\left[a d t-\left(r^2+a^2\right) d \varphi\right]^2  -\frac{\Delta_r}{(1+\alpha)^2 \rho^2}\left(d t-a \sin ^2 \theta d \varphi\right)^2
\end{align}
where
\begin{align}
\Delta_r=\left(r^2+a^2\right)\left(1-\alpha r^2 / a^2\right)-2 M r, \quad;\quad
\alpha=\Lambda a^2 / 3\quad; \quad \Delta_\theta=1+\alpha \cos ^2 \theta,\quad;\quad
\rho^2=r^2+a^2 \cos ^2 \theta 
\end{align}
and $M$ and $a$ are the usual mass and spin parameters of the black hole. 
It is well known that this solution is conformally K\"ahler \cite{Kubiznak:2007kh} and possesses a Killing--Yano tensor $f^{\mathrm{KY}}_{\mu \nu}$ given by
\begin{align}\label{eq: kDSCKY}
\mathbf{f}^{\rm KY} = \frac{1}{2}f^{\rm KY}_{\mu \nu} dx^{\mu} \wedge dx^{\nu}
=
a\cos\theta\,dr\wedge
\left(dt-a\sin^2\theta\,d\phi\right)
-
r\sin\theta\,d\theta\wedge
\left[a\,dt-\left(r^2+a^2\right)d\phi\right]\, 
\end{align}
where $dx^{\mu} \wedge dx^{\nu}$ denotes the usual wedge product.
Moreover, the stress-energy tensor of the spacetime $T_{\mu \nu} \propto \Lambda g_{\mu \nu}$, which implies the matter-aligned condition is also satisfied. This shows that both conditions (i) and (ii) of Def.~ \ref{def: ChiralAligned} are satisfied.

The linearized equations of motion are given by expanding Eq.~\eqref{eq: deSitterEquOfMot} to $\mathcal{O}(\epsilon)$
\begin{align}\label{eq: KdSPerturbed}
    R^{(1)}_{\mu \nu} - \frac{1}{2}h^{(1)}_{\mu \nu}R - \frac{1}{2}g_{\mu \nu}R^{(1)} + \Lambda h^{(1)}_{\mu \nu}&= 0\,.
\end{align}
Using the on-shell relation for the trace $R^{(1)} = 0$ and $R^{(1)}_{\mu \nu} = \Lambda h_{\mu \nu}$, we find the perturbed traceless Ricci tensor vanishes 
\begin{align}
    S^{(1)}_{\mu \nu} = R^{(1)}_{\mu \nu} - \frac{1}{4}g_{\mu \nu} R^{(1)} - \frac{1}{4}h_{\mu \nu}R &= 0\,.
\end{align}
It follows that all of the Ricci scalars $\Phi^{(1)}_{ij}$ and $\Lambda_{\mathrm{NP}}^{(1)}$ vanish, and the two Bianchi identities used in the main text [Eqs.~\eqref{eq: Bianchi1} and \eqref{eq: Bianchi2}] reduce to their sourceless form. Therefore, condition (iv) of Def.~\ref{def: ChiralAligned} is also satisfied. 

Linear perturbations of the KdS background satisfy the homogeneous Teukolsky equations
\begin{align}\label{eq: KerrDeSitterTek}
\mathcal{O}_{4}\Psi^{(1)}_4 =0\quad; \quad 
\mathcal{O}_{0}\Psi^{(1)}_0 =0\,.
\end{align}
This equation must be supplemented with the correct boundary conditions, which impose that the metric perturbation is purely ingoing at the future event horizon and purely outgoing at the cosmological horizon \cite{Yoshida:2010zzb}.  This boundary condition is similar to that associated with perturbations of the Kerr background, which is also complex-linear and imposes the same boundary condition on the even- and odd-parity components of the metric perturbation.  

We can also verify that there is a metric reconstruction procedure by imposing the traceless ingoing radiation gauge. After fixing $h_{\mu \nu}l^{\mu} = 0$, the effective perturbed stress-energy tensor component $T^{\mathrm{eff}(1)}_{\mu \nu}l^{\mu}l^{\nu} = \Lambda h_{ll}$ vanishes. This satisfies condition (iii) of Def.~\ref{def: ChiralAligned}, and allows us to set the trace of the metric perturbation to zero with residual gauge \cite{Price:2006ke}. With all the conditions of Def.~\ref{def: ChiralAligned} now satisfied we are guaranteed that a metric reconstruction procedure exists by Lemma \ref{lem: ChiralLemma}, because $\{h^{(1)}_{nm},h^{(1)}_{mm}\}$ can be solved for with Eq.~\eqref{eq: RicciFormCurved} and $h^{(1)}_{nn}$ can be solved for using Eq.~\eqref{eq: RiccihnnPerturb}. Since we have shown that the boundary conditions are also complex-linear and the KdS spacetime admits a tetrad satsifying $\hat{P}m^{\mu} = - \bar{m}^{\mu}$, Lemma \ref{lem: ChiralLemmaParity} applies in this case, and the linear perturbations of KdS must be parity isospectral.

As a side note, one may also implement other reconstruction procedures, such as those given in~\cite{Suvorov:2019qow, Loutrel:2020wbw, Ripley:2020xby} or more recently in~\cite{Li:2026rkf}, which do not require solving for $h^{(1)}_{nm}$ and $h^{(1)}_{mm}$ simultaneously. A similar procedure to the Chrzanowski--Cohen--Kegeles reconstruction was given in~\cite{Berens:2025kkm}. As opposed to other methods that directly integrate a subset of the Ricci and Bianchi identities systematically, the Chrzanowski--Cohen--Kegeles method leverages the self-adjoint property of the linearized Einstein operator and the real metric perturbation can be written as a second-order operator on an intermediate Hertz potential \cite{Wald:1978vm}. 

In contradistinction, linear perturbations of Kerr--AdS are \textit{not} isospectral. While the homogeneous Teukolsky equations are still satisfied and the background spacetime is conformally K\"ahler and matter aligned, the boundary conditions spoil the isospectral relation found for Kerr and KdS in GR. The boundary condition at the future event horizon is unmodified, but the outgoing wave condition at the cosmological horizon (or future null infinity) is replaced with the requirement that the metric perturbation preserve the global AdS symmetry group $O(3,2)$ asymptotically \cite{Henneaux:1985tv, Dias:2013sdc}. Writing the two power-law behaviors of the
radial Teukolsky functions as 
\begin{subequations}
\begin{align}
{}_{+2}R_{\ell m\omega }\bigg|_{r \rightarrow \infty} &\sim A_{+}^{(2)} \frac{L}{r}+A_{-}^{(2)} \frac{L^2}{r^2}+\mathcal{O}\left(\frac{L^3}{r^3}\right)\,, \\
 {}_{-2}R_{\ell m\omega }\bigg|_{r \rightarrow \infty} &\sim B_{+}^{(-2)} \frac{L}{r}+B_{-}^{(-2)} \frac{L^2}{r^2}+\mathcal{O}\left(\frac{L^3}{r^3}\right)\,,
\end{align}
\end{subequations}
where $L$ is the characteristic AdS curvature, and $A^{(\pm2)}$ and $B^{(\pm 2)}$ are complex-valued mode amplitudes. Ref.~\cite{Dias:2013sdc} finds the asymptotic behavior is satisfied if and only if $A_{-}^{(2)}=-i \eta A_{+}^{(2)}$ and $ B_{-}^{(-2)}=i \eta B_{+}^{(-2)}$, where $\eta$ can take two distinct values: $\eta \in \{\eta_{\rm s},\eta_{\rm v}\}$ 
with $\eta_{\rm s}\neq\eta_{\rm v}$. In the nonrotating limit, $\eta_s$ and $\eta_v$ reduce to the boundary conditions of the
even-parity and odd-parity sectors, respectively
\cite{Dias:2013sdc}. The boundary condition is therefore parity dependent:
each branch defines a distinct eigenvalue problem, the two sectors no longer
obey the same complex-linear boundary conditions, and the spectra split
\cite{Cardoso:2001bb,Berti:2003ud}.

An apparent tension arises with Ref.~\cite{Tattersall:2018axd}, which reports a
splitting of the even- and odd-parity QNM frequencies of slowly-rotating
Kerr--de Sitter, when both $a$ and $\Lambda$ are nonzero. That result was obtained
for the fundamental mode from master equations truncated at $\mathcal{O}(a)$ and
solved with the inverse-multipole expansion of Ref.~\cite{Dolan:2009nk}. The two
sectors agree through $\mathcal{O}(L^{-3})$ with $L=\ell+1/2$; the first
difference enters at $\mathcal{O}(L^{-4})$ and is proportional to $am\Lambda M$,
so it vanishes for $m=0$ and in both the $a\to0$ and $\Lambda\to0$ limits. The
expansion of Ref.~\cite{Dolan:2009nk} encodes the QNM boundary conditions in an
ansatz whose phase is fixed by the critical null geodesics of a \emph{static},
spherically symmetric metric, imposing $u\sim e^{\mp i\omega r_{*}}$ at both
boundaries. However, on a rotating background the correct conditions are co-rotating,
$u\sim e^{- i(\omega-m\Omega_{H})r_{*}}$ at the future event horizon and $u\sim e^{+ i(\omega-m\Omega_{c})r_{*}}$ at the cosmological horizon, where $\Omega_{H} = a/r^{2}_{H}$ and $\Omega_{c} = a/r^{2}_{c}$ are their respective angular velocities up to $\mathcal{O}(a)$ and $(r_H,r_c)$ are the two positive roots of $1 - 2M/r - \Lambda r^2/3 = 0$ . We therefore interpret the reported
splitting as an artifact of the approximation scheme rather than a breakdown of parity
isospectrality.

\subsection{Chiral-Aligned Example: Kerr--Newman in Einstein--Maxwell}

For this example, we consider the Einstein--Maxwell action
\begin{align}
    S[g, A]=\int \mathrm{d}^4 x \sqrt{|g|}\left(\kappa_g R-\frac{1}{4} F_{\mu \nu} F^{\mu \nu}\right)
\end{align}
where the Maxwell tensor has the standard definition $ F_{\mu \nu}=2 \nabla_{[\mu} A_{\nu]}$ and $A_{\mu}$ is the spin-1 gauge field.
In Boyer-Lindquist coordinates, the Kerr--Newman black hole is given by the following line element \cite{Konoplya:2007zx}:
\begin{equation}\label{eq:KNdsMetric}
        d s^2=  \rho^2\left(\frac{d r^2}{\Delta_r}+d \theta^2\right)+\frac{ \sin ^2 \theta}{ \rho^2}\left[a d t-\left(r^2+a^2\right) d \varphi\right]^2  -\frac{\Delta_r}{ \rho^2}\left(d t-a \sin ^2 \theta d \varphi\right)^2
    \end{equation}
where  
\begin{align}
\Delta_r=\left(r^2+a^2\right)-2 M r+Q^2,
 \quad 
\rho^2=r^2+a^2 \cos ^2 \theta 
\end{align}
This solution is also conformally K\"ahler, it admits the same Killing--Yano tensor as the KdS metric [see Eq.~\eqref{eq: kDSCKY}], and is matter-aligned because the only non-vanishing NP Ricci scalars are $\Phi_{11}$.

The operators appearing in Eq.~\eqref{eq: KNEqus} are given by \cite{Dias:2015wqa}
\begin{subequations}
\label{eq: KNModifiedOperators}
\begin{align}
\mathcal{O}_{-2}
&=
\Delta_{[4,1,3,-1]}D_{[0,0,-1,0]}
-\bar{\delta}_{[3,1,4,-1]}\delta_{[0,4,0,-1]}
-3\Psi_2\,,\\
\mathcal{P}_{-2}
&=2
-4\Delta_{[0,0,3,-1]}A_{-}D_{[0,0,-1,0]}
-4\left(\bar{\tau}-\pi\right)A_{+}
\left(\bar{\delta}+4\beta-\tau\right)\,,\\
\mathcal{Q}_{-2}
&=\frac{2}{\Phi_1^{*(0)}}\Bigl\{\Delta_{[2,0,3,-1]}A_{-}\bar{\delta}_{[2,0,6,0]}+\left(\bar{\tau}-\pi\right)A_{+}
\Delta_{[6,0,2,0]}
\Bigr\}\,,\\
\mathcal{O}_{-1}
&=
\Delta_{[5,1,3,1]}D_{[0,0,-4,0]}
-\bar{\delta}_{[1,1,5,-1]}\delta_{[0,2,0,-4]}\,,\\
\mathcal{P}_{-1}
&=2D_{[0,0,-4,1]}A_{+}\Delta_{[6,0,2,0]}+2\delta_{[-1,3,-1,-4]}A_{-}
\bar{\delta}_{[2,0,6,0]}\, ,\\
\mathcal{Q}_{-1}
&=-4\Phi_1^{(0)}\Bigl\{
D_{[0,0,-2,1]}A_{+}
\left(\bar{\delta}+4\beta-\tau\right)
+\delta_{[-1,3,1,-2]}A_{-}
\left(\Delta_{[0,0,4,0]}-\rho\right)
\Bigr\} 
\end{align}
\end{subequations}
where $A_{ \pm}=\left(3 \Psi_2^{(0)} \mp 2 \Phi_{11}^{(0)}\right)^{-1}$.
In a subsequent publication \cite{Weller:2026kk} we show that from the solutions of Eq.~\eqref{eq: KNEqus} one can reconstruct the metric perturbation $h^{(1)}_{\mu \nu}$ and spin-1 gauge field $A^{(1)}_{\mu}$ in an ingoing or outgoing radiation gauge. To compute QNMs, a similar set of complex-linear boundary conditions are imposed as in the Kerr case: the entire gravitoelectromagnetic perturbation is purely ingoing at the future horizon and outgoing at future null infinity. Moreover, we will show that all the conditions of Def.~\ref{def: ChiralAligned} are satisfied, the boundary conditions are complex-linear, and the background tetrad satisfies the hypothesis of Lemma \ref{lem: ChiralLemmaParity}, therefore the linear perturbations must be parity isospectral. 

\subsection{Non-Chiral Aligned Example: Higher Order Starobinsky--Podolsky gravity}

In this example, we consider a subset of $f\left(R, \nabla_{\mu_1} R,\ldots, \nabla_{\mu_1} \ldots \nabla_{\mu_n} R\right)$ gravity.  The action is given by
\begin{align}\label{eq: StarobinskyAction}
S[g] & =  \int d^4 x \sqrt{|g|} \left(f(R) + c_1 \nabla_{\mu}R \nabla^{\mu}R + \dots + c_N\nabla_{\mu_1}...\nabla_{\mu_N}R \nabla^{\mu_1}...\nabla^{\mu_N}R \right) 
\end{align}
where $g$ denotes the Lorentzian metric and $N$ is a positive integer. We consider linear perturbations about the Kerr black hole in the gravitational theory defined by the action in Eq.~\eqref{eq: StarobinskyAction}. The Kerr black hole is still a solution at $\mathcal{O}(\epsilon^0)$, since the scalar curvature along with its corresponding derivatives all vanish identically. 

We assume that $f(R)$ can be expanded in the following Maclaurin series 
\begin{align}\label{eq: MaclSeries}
    f(R) = R + \frac{c_0}{2} R^2 + \dots\,.
\end{align}
If we consider $c_i = 0$ for $i\geq 1$, the equations of motion are given by 
\begin{align}\label{eq: f(R)EquOfMotion}
    f_R R_{\mu \nu}-\frac{1}{2} f g_{\mu \nu}+\left(g_{\mu \nu} \square-\nabla_\mu \nabla_\nu\right) f_R = 0\,, 
\end{align}
where $f_R \equiv df/dR$. Expanding Eq.~\eqref{eq: f(R)EquOfMotion} to $\mathcal{O}(\epsilon^1)$, we find higher-order terms from the Maclaurin series in Eq.~\eqref{eq: MaclSeries} vanish as the background solution is Kerr. The equations of motion at $\mathcal{O}(\epsilon^1)$, assuming $c_i = 0$ for $i\geq 1$, are given by
\begin{align}
    R^{(1)}_{\mu \nu}-\frac{1}{2} R^{(1)}g_{\mu \nu} + c_0  \left(g_{\mu \nu} \square - \nabla_{\mu}\nabla_{\nu}\right) R^{(1)} = 0\,.
\end{align}

If we consider nonvanishing $c_i$, the equations of motion receive a correction from the following term
\begin{align}
    \frac{\delta }{\delta g^{\mu \nu}}\left(\sqrt{|g|}c_N\nabla_{\mu_1}...\nabla_{\mu_N}R \nabla^{\mu_1}...\nabla^{\mu_N}R\right)
\end{align}
Upon linearization up to $\mathcal{O}(\epsilon^1)$, the only nonvanishing contribution comes from
\begin{align}
    \left(\sqrt{|g|}c_N\nabla_{\mu_1}...\nabla_{\mu_N}\frac{\delta R}{\delta g^{\mu \nu}} \nabla^{\mu_1}...\nabla^{\mu_N}R\right)\,.
\end{align}
Other terms appearing in the equations of motion are quadratic in derivatives of the background scalar curvature and thus vanish at $\mathcal{O}(\epsilon^1)$ when Kerr is the chosen solution at $\mathcal{O}(\epsilon^0)$. 
Thus, the linear equations of motion are given by
\begin{align}\label{eq: HigherDerivEquOfMot}
    G_{\mu \nu}^{(1)}[h]+\left(g_{\mu \nu} \square-\nabla_\mu \nabla_\nu\right) \mathcal{P}_N R^{(1)}[h]=0 \,,
\end{align}
where 
\begin{subequations}
\begin{align}
\mathcal{O}_j & \equiv(-1)^j \nabla_{\mu_1} \cdots \nabla_{\mu_j} \nabla^{\mu_j} \cdots \nabla^{\mu_1}\,, \quad j \geq 1 \\
\mathcal{P}_N & \equiv c_0 + 2\sum_{j=1}^N c_j \mathcal{O}_j\,,\quad N\geq 0
\end{align}
\end{subequations}
 Taking the trace of Eq.~\eqref{eq: HigherDerivEquOfMot} yields a decoupled partial differential equation for $R^{(1)}[h]$
\begin{align}\label{eq: HigherDerivTrace}
     \left(3\square \mathcal{P}_N - 1\right)R^{(1)}[h]=0 
\end{align}
If we set $c_j = 0$ for all $j$ then the linearized equations reduce to vacuum perturbations of Kerr where $G^{(1)}_{\mu \nu}[h] = 0$, as expected.

We now demonstrate that the projection of Eq.~\eqref{eq: HigherDerivEquOfMot} onto the Newman--Penrose basis yields the homogeneous Teukolsky equations for $\Psi^{(1)}_{0,4}$. Consider a metric $\widetilde{g}_{\mu \nu}$ which is a conformal transformation of a metric $g_{\mu \nu}$,  
\begin{align}
    \widetilde{g}_{\mu \nu} = \Omega^2g_{\mu \nu}\,,
\end{align} 
    where $\Omega$, known as the conformal factor, is a scalar function of the coordinates. Under a conformal transformation the Weyl tensor transforms as $W_{\mu \nu \gamma \delta} \mapsto \Omega^2 W_{\mu \nu \gamma \delta}$, while the Weyl scalars transform as $\Psi_i \mapsto \Omega^{-2}\Psi_i $, 
    where we have assumed that the tetrad is scaled linearly with $\Omega$, so the corresponding null tetrad $\{\widetilde{e}_a\}$, corresponding to $\tilde{g}_{\mu \nu}$, satisfies $\{\widetilde{e}^{\mu}_{a}\} = \{\Omega^{-1} e^{\mu}_{a}\}$ (for other choices see \cite{Penrose:1986ca}). Suppose the Ricci tensor generated by $g_{\mu \nu}$ is $R_{\mu \nu}$. Then, the Ricci tensor associated with $\widetilde{g}_{\mu \nu}$ is given by \cite{Wald:1984cw}
\begin{align}\label{eq: ConformalRicci}
\tilde{R}_{\mu \nu}= & R_{\mu \nu}-2 \nabla_\mu \nabla_\nu \log \Omega-g_{\mu \nu} \square \log \Omega +2\left(\nabla_\mu \log \Omega\right) \nabla_\nu \log \Omega-2 g_{\mu \nu} g^{\alpha \beta }\left(\nabla_\alpha \log \Omega\right) \left(\nabla_\beta \log \Omega\right)
\end{align}
where we have evaluated Eq.~\eqref{eq: ConformalRicci} in four dimensions. Observe that the first two terms are linear in $\mathcal{O}(\log \Omega)$, while the last two are $\mathcal{O}((\log \Omega)^2)$. Therefore, the tracefree Ricci tensor for the metric perturbation $h^{\mathfrak{s}}_{\mu \nu} = 2\varphi g_{\mu \nu}$ is given by
\begin{align}\label{eq: conformalTraceFreeStress}
    S^{\mathfrak{s}}_{\mu \nu} = R^{\mathfrak{s}}_{\mu \nu} - \frac{1}{4}g_{\mu \nu}R^{\mathfrak{s}}= - 2 \nabla_{\mu}\nabla_{\nu} \varphi + \frac{1}{2}g_{\mu \nu} \square \varphi
\end{align}
where we have identified $\Omega^2 = e^{2 \varphi}$ and only kept terms up to $\mathcal{O}(\varphi^1)$. Setting $\varphi = ({1}/{2}) \mathcal{P}_N R^{(1)}[h]$, Eq.~\eqref{eq: HigherDerivEquOfMot} becomes
\begin{align}\label{eq: effectiveScalarStress}
    T^{\mathrm{eff}}_{\mu \nu} = -2\left(g_{\mu \nu}\square - \nabla_\mu \nabla_{\nu }\right)\varphi \quad ;\quad S^{\mathrm{eff}}_{\mu \nu} = -\left(\frac{1}{2}g_{\mu \nu}\square -2\nabla_\mu \nabla_{\nu }\right)\varphi
\end{align}
Since the traceless component of the  effective stress in Eq.~\eqref{eq: effectiveScalarStress}
is proportional to the tracefree stress of a conformal transformation in Eq.~\eqref{eq: conformalTraceFreeStress}, we can identify the metric perturbation that generates this source as a linearized conformal transformation. This requires that the linear perturbations of the extreme-spin Weyl scalars vanish, which means the effective Teukolsky source for the stress in Eq.~\eqref{eq: effectiveScalarStress} must also vanish,
\begin{subequations}
\begin{align}
\mathscr{T}^{(1)}_0[S^{\mathrm{eff}}_{\mu \nu}] &= 0\,,\\
\mathscr{T}^{(1)}_4[S^{\mathrm{eff}}_{\mu \nu}] &= 0\,,
\end{align}
\end{subequations}
and the homogeneous Teukolsky equations are also satisfied. 

As suggested in Ref.~\cite{Suvorov:2019qow}, the cancellation of these source terms can also be verified directly, by repeatedly applying the
equations of motion together with the commutator relations of the directional derivatives. Identifying the linearized stress--energy
tensor as a conformal transformation of the background metric,
however, makes the cancellation manifest: the effective Teukolsky source, i.e., the map $T_{\mu\nu} \mapsto \mathscr{T}^{(1)}_{4,0}$, annihilates every metric perturbation $h^{\mathfrak{s}}_{\mu\nu} = 2\varphi g_{\mu\nu}$, which is a pure-trace mode. Even though the homogeneous Teukolsky equations are still satisfied, the reconstructed metric, which corresponds to a solution of Eq.~\eqref{eq: HigherDerivEquOfMot}, does not coincide with a vacuum perturbation of Kerr. 

The trace encodes the massive Ricci degree of freedom [see Eq.~\eqref{eq: HigherDerivTrace}]. In the nonrotating limit,
where the background reduces to Schwarzschild, this scalar decomposes into even-parity harmonics alone. Because the mode has no odd-parity
counterpart, isospectrality is, strictly speaking, broken. The frequency content of the tensor
polarizations, however, is unaffected. Both $h_+$ and $h_\times$ are
reconstructed from the Weyl scalars $\Psi^{(1)}_{0,4}$, which satisfy
the homogeneous Teukolsky equations; in the ingoing radiation gauge,
for instance, $\Psi^{(1)}_0$ depends only on derivatives of $h_{mm}$. The QNM frequencies of the massive scalar mode therefore never appear
in $h_+$ or $h_\times$. This decoupling is not generic: in other
theories, such as Horndeski gravity, scalar-led QNM frequencies do appear in the tensor
polarizations \cite{Tattersall:2018nve}.

We now briefly detail how to solve for the additional massive Ricci degree of freedom. Since generically $\Phi^{(1)}_{00}$ is nonvanishing, there is no longer enough residual gauge freedom to fully implement the traceless ingoing radiation gauge. Since Eq.~\eqref{eq: StarobinskyAction} still preserves diffeomorphism symmetry, we can impose $h_{\mu \nu}l^{\mu} = 0$ by explicitly solving for the gauge vector $\xi_{\mu}$, such that
\begin{align}\label{eq: IRGGaugeTransform}
    \left(h_{\mu \nu} - \nabla_{\mu} \xi_{\nu} - \nabla_{\nu} \xi_{\mu} \right)l^{\mu} = 0
\end{align}
which yields four real first-order equations. The remaining non-vanishing metric components are given by $\{h_{nm},h_{mm},h_{nn},h_{m\bar{m}}\}$, which encode six degrees of freedom. In the ingoing radiation gauge, we can consider the following Ricci identity 
\begin{align}\label{eq: RiccillComp}
D \rho-\delta^* \kappa= & \left(\rho^2+\sigma \sigma^*\right)+\left(\varepsilon+\varepsilon^*\right) \rho-\kappa^* \tau -\kappa\left(3 \alpha+\beta^*-\pi\right)+\Phi_{00}\,,
\end{align}
which to $\mathcal{O}(\epsilon^1)$ yields
\begin{align}\label{eq: perturbedTraceTransport}
    D \rho^{(1)} = 2\rho^{(1)} \rho + (\epsilon^{(1)} + \bar{\epsilon}^{(1)})\rho + \Phi^{(1)}_{00}
\end{align}
Since $\epsilon^{(1)}$ can be expressed as
\begin{align}
    \epsilon^{(1)}=\frac{1}{4}\left[\Delta_{[-1,1,0,-2]} h_{l l}^{(1)}-2 D_{\left[0,0, \frac{1}{2},-\frac{1}{2}\right]} h_{l n}^{(1)}-\bar{\delta}_{[-2,0,-3,-2]} h_{l m}^{(1)}+\delta_{[-2,0,1,2]} h_{l \bar{m}}^{(1)}-(\rho-\bar{\rho}) h_{m \bar{m}}^{(1)}\right]\,,
\end{align}
and $\Phi^{(1)}_{00} = \frac{1}{2}D^2 \mathcal{P}_N R^{(1)}$, Eq.~\eqref{eq: perturbedTraceTransport} becomes
\begin{align}
    D \rho^{(1)} = 2\rho^{(1)} \rho + \frac{1}{2}D^2 \mathcal{P}_N R^{(1)}
\end{align}
Now substituting
\begin{align}
    \rho^{(1)}=\frac{1}{2}\left[-\mu h_{l l}^{(1)}-(\rho-\bar{\rho}) h_{l n}^{(1)}-\bar{\delta}_{[-2,0,-1,0]} h_{l m}^{(1)}+\delta_{[-2,0,1,2]} h_{l \bar{m}}^{(1)}-D_{[0,0,1,-1]} h_{m \bar{m}}^{(1)}\right]\,,
\end{align}
we have that
\begin{align}
D_{[0,0,-2,0]}D_{[0,0,1,-1]}h^{(1)}_{m\bar{m}} = D^2 \mathcal{P}_N R^{(1)}
\end{align}
which forms a second-order partial differential equation for the trace $h_{m\bar{m}}$ where the trace $R^{(1)}$ has been determined from Eq.~\eqref{eq: HigherDerivTrace}. From here we can follow the procedure given in Refs.~\cite{Suvorov:2019qow,Ripley:2020xby,Li:2026rkf}.

\end{document}